\newif\ifdraftmode
\newcommand\interior[1]{{(#1)_0}}
\renewcommand\sout{\bgroup\markoverwith
{\textcolor{red}{\rule[0.7ex]{3pt}{1.4pt}}}\ULon}
\newcommand{\maA}{\mathcal A}
\newcommand{\maB}{\mathcal B}
\newcommand{\maC}{\mathcal C}
\newcommand{\maD}{\mathcal D}
\newcommand{\maE}{\mathcal E}
\newcommand{\maF}{\mathcal F}
\newcommand{\maH}{\mathcal H}
\newcommand{\maL}{\mathcal L}
\newcommand{\maP}{\mathcal P}
\newcommand{\maS}{\mathcal S}
\newcommand{\sS}{\mathscr{S}}
\newcommand\bl[1]{ \{ #1 \} }
\newcommand{\psbmanifold}{p-sub\-mani\-fold}
\newcommand{\psbmanifolds}{p-sub\-mani\-folds}
\newcommand{\bsbmanifold}{b-sub\-mani\-fold}
\newcommand{\bsbmanifolds}{b-sub\-mani\-folds}
\newcommand{\dsbmanifold}{d-sub\-mani\-fold}
\newcommand{\dsbmanifolds}{d-sub\-mani\-folds}
\newcommand{\wibsbmanifold}{wib-sub\-mani\-fold}
\newcommand{\wibsbmanifolds}{wib-sub\-mani\-folds}
\newcommand{\Melrb}{b}
\newcommand{\Msbmanifold}{\nmagenta{submanifold in the sense of Definition~\ref{def.submanifold-gen}}}   
\newcommand{\Msbmanifolds}{\nmagenta{submanifolds in the sense of Definition~\ref{def.submanifold-gen}}}  
\newcommand{\wsbmanifold}{weak submanifold}
\newcommand{\wsbmanifolds}{weak submanifolds}
\newcommand\XGV{X_{\mathrm{GV}}}
\definecolor{lightgreen}{rgb}{.9,1,.9}
\newcommand\lightgreen[1]{\textcolor{lightgreen}{#1}}
\definecolor{darkgreen}{rgb}{.2,.6,.2}
\definecolor{darkblue}{rgb}{.3,.3,1}
\newcommand\blue[1]{\textcolor{blue}{#1}}
\newcommand\red[1]{\textcolor{red}{#1}}
\newcommand\green[1]{\textcolor{green}{#1}}
\newcommand\nmagenta[1]{{#1}}
\newcommand\bacomment[2][B2All]{\red{#1: #2}}
\newcommand\bacomment[2][B]{}
\renewcommand\sout[1]{}
\newcommand\hidden[1]{}
\newcommand\sgn{\mathop{\mathrm{sgn}}}
\newcommand\medcup{\mathsmaller{\bigcup}}
\newcommand\unionF{\mathchoice{\medcup\,\maF}{\medmath{\textstyle \bigcup}\,\maF}{\medmath{\scriptstyle\bigcup}\maF}{\medmath{\scriptscriptstyle\bigcup}\maF}}
\newcommand\unionP{\mathchoice{\medcup\,\maP}{\medmath{\textstyle \bigcup}\,\maP}{\medmath{\scriptstyle\bigcup}\maP}{\medmath{\scriptscriptstyle\bigcup}\maP}}
\DeclareMathOperator{\Spec}{Spec}
\newcommand\ovra\overrightarrow
\newcommand\can{\mathop{\mathrm{can}}} 
\newcommand\GL{\mathop{\mathrm{GL}}} 
\newcommand\ev{\mathop{\mathrm{ev}}} 
\newcommand\arxiv[1]{\href{https://www.arxiv.org/abs/#1}{arXiv:~#1}}
\newcommand\Arxiv[1]{\href{https://www.arxiv.org/abs/#1}{ArXiv:~#1}}
\newtheorem{theorem}{Theorem}[section]
\newtheorem{lemma}[theorem]{Lemma}
\newtheorem{proposition}[theorem]{Proposition}
\newtheorem{corollary}[theorem]{Corollary}
\theoremstyle{definition} 
\newtheorem{definition}[theorem]{Definition}
\theoremstyle{definition} 
\newtheorem{remark}[theorem]{Remark}
\newtheorem{remarks}[theorem]{Remarks}
\newtheorem{example}[theorem]{Example}
\newtheorem{examples}[theorem]{Examples}
\newcommand{\dist}{\operatorname{dist}}
\newcommand{\id}{\operatorname{\mathrm{id}}}
\newcommand{\depth}{\operatorname{depth}}
\newcommand{\pa}{\partial}
\renewcommand{\SS}{\mathbb{S}}
\newcommand{\CC}{\mathbb C}
\newcommand{\FF}{\mathbb F}
\newcommand{\RR}{\mathbb R}
\newcommand{\ZZ}{\mathbb Z}
\newcommand{\CI}{{\mathcal C}^{\infty}}
\newcommand\<{\langle}
\renewcommand\>{\rangle} 
\newcommand\ede{\ :=\ }
\newcommand\seq{\ =\ }
\def\mfkA{\mathfrak A}
\newcommand{\oX}{\overline{X}}
\newcommand{\oXY}{\overline{X/Y}}
\newcommand\mEF[1]{\maE_{\maF}(#1)}
\let\setminus\smallsetminus
\newcommand\eg{\emph{e.\thinspace g.,\ }\ignorespaces}
\newcommand\ie{\emph{i.\thinspace e.,\ }\ignorespaces}
\newcommand\iet{\emph{i.\thinspace e.,~}\ignorespaces}
\let\oldtocsection=\tocsection
\let\oldtocsubsection=\tocsubsection
\let\oldtocsubsubsection=\tocsubsubsection
\renewcommand{\tocsection}[2]{\hspace{0em}\oldtocsection{#1}{#2}}
\renewcommand{\tocsubsection}[2]{\hspace{2em}\oldtocsubsection{#1}{#2}}
\renewcommand{\tocsubsubsection}[2]{\hspace{3em}\oldtocsubsubsection{#1}{#2}}
\newcommand\eff{\mathrm{eff}}
\author[B. Ammann]{Bernd Ammann} \address{B. A., Fakult\"at f\"ur
  Mathematik, Universit\"at Regensburg, 93040 Regensburg, Germany}
\email{bernd.ammann@mathematik.uni-regensburg.de}
\author[J. Mougel]{J\'{e}r\'{e}my Mougel} \address{J. M.,
  Mathematisches Institut Georg-August-Universit\"at G\"ottingen,
  37083 G\"ottingen, Germany} \email{jeremy.mougel@uni-goettingen.de}
\author[V. Nistor]{Victor Nistor}\address{V. N., Universit\'{e} de
  Lorraine, CNRS, IECL, F-57000 Metz, France
and Inst. Math. Romanian Acad.  PO BOX 1-764, 014700 Bucharest
Romania} \email{victor.nistor@univ-lorraine.fr}
\thanks{B.A. has been partially supported by SPP 2026 (Geometry at
  infinity) and the SFB 1085 (Higher Invariants), both funded by the
  DFG (German Science Foundation). J.M. and V.N. have been partially
  supported by ANR-14-CE25-0012-01 (SINGSTAR) funded by ANR (French
  Science Foundation).\\
%
%
}
\begin{document}

\title[$N$-body spaces]{A comparisons of the Georgescu and Vasy spaces
  associated to the $N$-body problems and applications}

\begin{abstract}
We provide new insight into the analysis of $N$-body problems by studying a 
compactification $M_N$ of $\mathbb{R}^{3N}$ that is compatible with the analytic 
properties of the $N$-body Hamiltonian~$H_N$. We show that our compactification 
coincides with a compactification introduced by Vasy using blow-ups in order to 
study the scattering theory of $N$-body Hamiltonians and with a compactification 
introduced by Georgescu using $C^*$-algebras. In particular, the compactifications
introduced by Georgescu and by Vasy coincide (up to a homeomorphism
that is the identity on $\mathbb{R}^{3N}$). Our result has applications to the spectral theory 
of $N$-body problems and to some 
related approximation properties. For instance, results about the essential spectrum, 
the resolvents, and the scattering matrices of~$H_N$ (when they exist)
may be related to the behavior 
near $M_N\setminus \mathbb{R}^{3N}$ (i.e.\ ``at infinity'') of their distribution kernels, 
which can be efficiently studied using our methods. The compactification~$M_N$ 
is compatible with the action of the permutation group $S_N$, which allows to implement 
bosonic and fermionic (anti-)symmetry relations. We also indicate how our results lead 
to a regularity result for the eigenfunctions of $H_N$.
\end{abstract}

\maketitle

\setcounter{page}{1}

\tableofcontents

\section{Introduction}

\subsection{A general introduction and motivation for our work}

The quantum behavior of an atomic system is often investigated
via its associated Hamiltonian. A good model for $N$ non-relativistic
particles interacting with each other by Coulomb type forces is given by the 
Hamiltonian
\begin{equation}\label{eq.def.HN}
   (H_N u)(x) \ede \Big (  - \sum_{j=1}^N \frac{1}{2m_j} \Delta_{x_j} 
   + \sum_{1\leq j<k\leq N} \frac{b_{jk}}{|x_j - x_k|} \Big ) u(x)\,,
\end{equation}
where $x_j\in \RR^3$ describes the position of the $j$-th particle, $x = (x_1, x_2, 
\ldots, x_N) \in \RR^{3N}$, the operator $\Delta_{x_j}$ is the 
Laplacian with respect to $x_j$, $m_j\in \RR_+$, and $b_{jk} \in \RR$.
See, for instance, \cite{GeorgescuBookNew, DerezinskiAnnals}. 
As usual, by moving to the center of mass coordinates and effective operators, 
to an atom with $N-1$ electrons (corresponding to $N$ ``bodies'')
will correspond an operator $H_{N-1}^{\eff}$ acting on functions
defined on $\RR^{3(N-1)}$.

The way the mathematical properties of Hamiltonians are reflected in
the properties of the physical system was explained in many works,
including \cite{GeorgescuBookNew, DerezinskiAnnals, DerGer1, HunzikerSigal, 
LiebSeiringerbook, ReedSimon4}. In 
particular, the mathematical study of the operator $H_N$ (and of its simplified 
version $H_{N-1}^{\eff}$) is a very vast domain
of study in quantum mechanics and in mathematics. We will not be able to 
do justice to all the people who have contributed to the field, but let us nevertheless
mention some works that are among the closest to the methods of this paper, namely
the monographs of Amrein, Boutet de Monvel, and Georgescu
\cite{GeorgescuBookNew}, Derezi\'{n}ski and G\'{e}rard \cite{DerGer2} 
and Teschl \cite{Teschlbook}, as well as the research papers
\cite{Breteaux, DaGe04, DerezinskiAnnals, DerezinskiFaupin, FultonMacPh-Compact, 
Georgescu2018, GeIf06, MelroseEucSpace}.
More specific references even closer related to our work can be found below.

The mathematical study of $H_N$ and $H_{N-1}^{\eff}$ is quite challenging, especially for $N>2$.
The simplest case is that of hydrogen type atoms, which corresponds to $N=2$ and 
$m_1\gg m_2$. Then
\begin{equation}\label{eq.def.H1eff}
   H^{\eff}_1 u(x) \ede \Big (  - \frac{1}{2\mu} \Delta 
   + \frac{b}{|x|} \Big ) u(x)\,,\quad \mu=\frac{m_1m_2}{m_1+m_2}.
\end{equation}
In order to understand the mathematical properties of this operator, one usually writes  
the Hamiltonian $H_1^{\eff}$ in spherical coordinates 
\begin{equation*}
    (r, y) \in (0, \infty) \times \SS^{2}\,,\quad r=|x|,\quad x=ry\,,
\end{equation*} 
where $\SS^{n-1}$ denotes the unit sphere in $\RR^n$, as usual.
The use of spherical coordinates has led, for instance, to the determination of the spectrum 
of $H_1^{\eff}$ and to explicit formulas for its eigenfunctions (see \cite{DerGer2, 
DerezinskiExact,Teschlbook,WeylBook} for details and historical references),
which is the basis for the orbital model in (quantum) chemistry. 
For $N>2$ explicit calculations seem to be impossible. Nevertheless,  
one can still try to find ``more convenient coordinates'' in which to perform
our calculations than the usual, euclidean coordinates.
In this vein, one of the main results of this paper is to provide
convenient coordinates that generalize the polar coordinates and which are helpful
to study both particle interactions at infinity and the regularity of eigenfunctions 
for $N > 2$ particles.

More precisely, our ``more convenient coordinates'' (in the case of
$N$-particles and the effective Hamiltonian) patch together to yield a compact (smooth) 
manifold with corners $M_{N-1}$, whose interior 
$M_{N-1} \smallsetminus \pa M_{N-1}$ is $\RR^{3(N-1)}$. Thus $M_{N-1}$ is a {\em compactification} of 
$\RR^{3(N-1)}$ in the usual mathematical sense. 
In turn, the construction of such a compactification 
$M_{N-1}$ may possibly yield more convenient coordinate systems via its natural coordinate 
charts. For the hydrogen atom, $N = 2$ and $M_1$ is the \emph{radial compactification} 
$M_1 \ede \overline{\RR^3} = \RR^{3} \cup \SS_{\RR^3}$
(see Subsection~\ref{ssec.sph.comp} for the definition of the radial compactification).
In the earlier literature,  two compactifications of $\RR^{3(N-1)}$ for $N > 1$ have played an especially important role in the study of the $N$-body problem, a role which will be explained now.

To do that, it will be convenient to place ourselves in a slightly
more general setting in which we consider a finite semilattice $\maF$ of linear subspaces a finite
dimensional, real vector space $X$. In this setting, a first such compactification is Georgescu's compactification, which was obtained as the primitive ideal spectrum $\Spec(\mEF X)$
of a certain commutative $C^*$-algebra  \cite{BMBG} $\mEF X$ \cite{Georgescu2018,GeIf06,GN}. 
Another compactification, $[X: \SS_{\maF}$] was constructed by Vasy using iterated blow-ups 
\cite{VasyAsympt}. See also \cite{Kottke-Lin}. The constructions of these compactifications 
and the notation will be explained shortly in Subsection~\ref{subsec.vasy-blowup}.
One of the {\em main results} of this paper is to show that both Georgescu's compactification 
and Vasy's compactification are naturally homeomorphic with the one introduced in \cite{MNP}
(see Equation  \eqref{eq.def.maES}). This common compactification will be called
the ``Georgescu-Vasy'' space and will be denoted $X_{GV}$. In case $\maF$ is the semilattice
generated by the collision planes of the $N$-body problem for the effective Hamiltonian,
then the Georgescu-Vasy compactification of $\RR^{3(N-1)}$ will be the 
space $M_{N-1}$ that provides the more convenient coordinates were looking for.
The identification of Georgescu's and Vasy's compactifications 
for the $N$-body problem with the same space~$M_{N-1}$ 
will allow us to obtain further properties for the
corresponding Hamiltonian since, as we will explain below, each of the 
constructions of $M_{N-1}$ mentioned above has its own advantages. Of course, the 
role of the compactifications in the work of 
these authors -- as well as in ours -- is to use the the properties of the space $M_{N-1}$ to obtain 
a better insight into the properties of the Hamiltonian $H_{N-1}^{\eff}$ (or of related Hamiltonians). 
In this spirit, we present some applications of our results in Section~\ref{sec7}. These applications are
also summarized at the end of this introduction.

\subsection{Our setting and our construction of the compactification $M_N$}
Our results will be mostly for operators that are somewhat different from the 
$N$-body Hamiltonians $H_N$ or $H_{N-1}^{\eff}$ of Equations~\eqref{eq.def.HN}
and \eqref{eq.def.H1eff}.
The operators that we study are, in fact, in many regards, more general than the 
$N$-body Hamiltonians and, in any case, they retain most of the main features of the 
operators $H_N$ and $H_{N-1}^{\eff}$ that are relevant 
to this paper. To describe the class of operators that we will study, let us thus 
first explain the following two customary modifications of the operators $H_N$, 
following, for instance,~\cite{GeorgescuBookNew, DerezinskiAnnals, Georgescu2018} 
and the references therein.

The first modification is to ``smooth out'' 
the singularity in the potential. To see why this is reasonable, let us mention that, 
from the point of view of  Partial Differential Equations, there are two main
issues that distinguish $H_N$ and $H_{N-1}^{\eff}$ from the customary differential operators
studied in the introductory courses, namely: 
\begin{itemize}
  \item the behavior at infinity of the potential and 
  \item the singularities in the potential. 
\end{itemize}
Somewhat counterintuitively (from a pure mathematical point of
view) is that the singularities of the potential are less important
than the behavior at infinity, at least in what the spectral theory is concerned. 
In fact, it is known that many results concerning the essential spectrum of operators 
with potentials with Coulomb singularities can be obtained from the results for
the analogous operators, but with
smooth potentials (see, for example, \cite{GN} and the references
therein; incidentally, in addition to the Hardy inequality,
the argument there involves norm closures and
elementary $C^*$-algebra results). Thus, except in the very last subsection, 
in this paper, we will ``smooth out''
the singularities and hence look instead at a class of operators 
containing operators of the form
\begin{equation}\label{eq.BO}
   H_N^\prime  \ede  D + \sum_{1\leq j\leq N} v_j(x_j) +
   \sum_{1\leq j<k\leq N} v_{jk}(x_j - x_k) \,,
\end{equation}
where $D$ is a strongly elliptic differential
operator with constant coefficients, and $v_j$ and $v_{jk}$ are \emph{smooth functions} 
on $\RR^3$ with uniform radial limits at infinity.
(We suppressed from the notation the function~$u$ on which $H_N^\prime$ acts.
Also, our results remain valid for certain operators $D$ with
suitable, non-constant coefficients.)
In this paper, by a function with ``uniform radial limits at infinity'' we
mean a function that extends to a smooth function on the radial compactification of
that space, $\RR^3$ in this case. 
The choice of functions with uniform radial limits at infinity is 
motivated by the choice of ``more convenient coordinates'' in the case
of the hydrogen type atom and also because it leads to a less singular compactification
of $\RR^{3N}$. We note, however, that in the last subsection, the singularities
of the potential will play a central role in our regularity
estimates of Equation \eqref{eq.reg.est}.

Our second modification of the operators $H_N$ (already alluded to above)
will be to extend our setting from $\RR^{3N}$ to an arbitrary real, 
finite dimensional vector space $X$ and to allow for more general collision
planes. (More general than the collision planes $\{ x_j - x_k = 0 \}$ for $H_N$). 
More precisely, we will allow our collision planes to belong to a suitable finite 
set $\maF$ of linear subspaces of~$X$, as above and
as in \cite{GeorgescuBookNew, DerezinskiAnnals, 
DerGer2, GeIf06}, for instance. As in those works, which serve as a motivation for our 
approach, it will be convenient to assume that $\maF$ is stable under intersection. Recall 
that a family $\maS$ of subsets of~$M$ is a \emph{semilattice} (with respect to the inclusion) 
if, for all $P_1, P_2 \in \maS$, we have 
$P_1 \cap P_2 \in \maS$.  It will also be convenient -- and that will not 
decrease the generality -- to consider semilattices $\maF$ with $\{0\} \in \maF$.
Our operator $H_N^\prime$ will then be replaced 
with a more general operator of the form 
\begin{equation}\label{eq.def.HvY}
   H  \ede  D + \sum_{Y \in \maF} v_Y\,,
\end{equation}
where $v_Y$ is a smooth function on $X/Y$ with uniform radial limits at infinity
(more precisely $v_Y \in \maC(\oXY)$, where $X/Y$ is the quotient
vector space and $\oXY$ is its radial compactification). This completes our sequence of 
modifications of $H_N$ and provides us with the concepts needed to introduce our definition 
of compactification space $\XGV$ (the Georgescu-Vasy space) as follows.

Let $\maF$ be a finite semilattice of linear subspaces of~$X$ with  
$\{0\} \in \maF$, as above, and let
\begin{equation}\label{eq.def.delta}
    \delta_{\maF} : X \to \prod_{Y \in \maF} \overline{X/Y}
\end{equation}
be the diagonal map obtained from all the projections $X \to X/Y$. We define
the {\em Georgescu-Vasy space $\XGV$} as the closure 
\begin{equation}\label{eq.def.XGV}
    \XGV \ede \overline{\delta_{\maF}(X)}
\end{equation}
of $\delta_{\maF}(X)$ in $\prod_{Y \in \maF} \overline{X/Y}$. 
Since each $\overline{X/Y}$ is compact, $\XGV$ 
is also compact. Note that, our assumption that $\{0\} \in \maF$
implies, in particular, that the factor $\overline{X/Y}$ corresponding to
$Y = \{0\}$ is $\overline{X/\{0\}} = \overline{X}$, which contains~$X$ as
a dense, open subset, and hence the map~$\delta_{\maF}$ is injective on~$X$.
If $\maF$ is the semilattice corresponding to the $N$-body problem or
the effective $N$-body problem, then $X_{GV}$
will yield the desired space $M_k$ (for suitable $k$).

\subsection{Georgescu's and Vasy's compactifications}\label{subsec.vasy-blowup}
Let us recall first the definition of Geor\-ges\-cu's compactification 
\cite{GeorgescuBookNew, Georgescu2018, GeIf06} in the form
considered in \cite{GN}. To this end, let us consider the norm 
closed algebra ($C^*$-algebra)
\begin{equation}\label{eq.def.maES}
  \mEF X \ede \langle \maC(\overline{X/Y}) \rangle
\end{equation}
generated by all the spaces $\maC(\overline{X/Y})$ in $L^\infty(X)$,
with $Y \in \maF$. (Here $\maC(Z)$ denotes the space of continuous functions 
$Z\to \CC$, as usual.) The spectrum $\Spec(\mEF X)$ of this algebra (the set of its
characters) is a compact space containing naturally $X$. \emph{Georgescu's
compactification} is $\Spec(\mEF X)$. It was proved in \cite{MNP} using
elementary $C^*$-algebra arguments that $\Spec(\mEF X)$ is naturally 
homeomorphic to the spaces $\XGV := \overline{\delta_\maF(X))}$ considered 
above. A variant of the algebra $\mEF X$ defined above is obtained
by considering the one-point compactifications $(X/Y)^{+}$ \cite{GeorgescuBookNew, BMBGeorgescu, 
Georgescu2018}, see Remark \ref{rem.Georgescu}. 
Also, in \cite{MougelPrudhon}, the definition of
the algebra $\mEF X$
was generalized so that the only requirement on $\maF$ is $\{0\} \in \maF$. 
In particular, the family $\maF$ can be infinite and not necessarily stable by intersection
(but the sum defining the potential must be convergent).

Vasy's compactification is obtained using blow-ups
of manifolds with corners. Let $M$ be a manifold with corners (we recall the 
definition of manifolds with corners and related constructions in Section~\ref{sec.mfds.corners}). Recall 
that a {\em \psbmanifold} $P \subset M$ is
a submanifold of~$M$ that has a tubular neighborhood:
$P \subset U_P \subset M$ that is locally of a product form (see Definition~\ref{def.psubmanifold} for details). If $P$ is a closed \psbmanifold{} -- where 
``closed'' means closed as a subset -- then the \emph{blow-up $[M: P]$ of~$M$ with respect 
to $P$} is defined by replacing~$P$ with the set $\SS (N_+^MP)$ of \emph{interior} directions 
in the normal bundle $N^MP$ of~$P$ in~$M$ 
(see \cite{ACN, JoyceCorners, MelroseBook, 
nistorDesing}, or Definiton~\ref{def.blow-up}). As in those papers, for
a suitable $k$-tuple $(P_j)_{j=1}^k = (P_1, P_2, \ldots, P_k)$, we can define the iterated
blow-up $[M: (P_j)_{j=1}^k]$ by blowing up $M$ first with respect to $P_1$, then
with respect to (the lift of) $P_2$ to $[M: P_1]$, and then continuing in
this way (see Definition~\ref{def.iterated.bu}).
We apply these results to the study of the $N$-body problem in the
following way.
Let~$\oX$ denote be the spherical compactification of a
finite-dimensional vector space $X$, with boundary at infinity the sphere $\SS_X
:= \oX \smallsetminus X$ (see Subsection~\ref{ssec.sph.comp} for the detailed definitions). 
To our finite semilattice $\maF$ of linear subspaces of~$X$ containing the zero 
subspace we associate the semilattice 
\begin{equation}\label{eq.def.SSmaF}
    \SS_{\maF} \ede\{ \SS_Y = \SS_X \cap \overline{Y} \, \mid \ Y
\in \maF \}
\,.
\end{equation}
(Note that our assumption that $\{ 0 \} \in \maF$ implies that $\emptyset = 
\SS_{ \{0 \} } \in \SS_{\maF}$. Our approach, in fact, works also
for semilattices that do not contain the zero subspace, but, in any case,
we can reduce to this case by including the zero subspace in $\maF$.)
We arrange the elements of~$\SS_{\maF}$ according an ``admissible order'': 
$\SS_{\maF} = (P_0 = \emptyset, P_1, P_2, \ldots )$ (roughly, in the ascending inclusion
order, see Definition~\ref{def.admissible}). Then \emph{Vasy's compactification} is
obtained by iteratively blowing up $\oX$ with respect to $\SS_{\maF}$
as explained earlier to obtain  $[\oX: \SS_{\maF}]$ \cite{VasyReg, VasySurv}.
(This construction is discussed
in detail in the main body of the paper, Definition~\ref{def.iterated.bu}.)
See also Kottke's paper \cite{Kottke-Lin}, where this compactification was also 
recently studied.

Yet a fourth compactification of~$X$, still diffeomorphic to $\XGV$, is
the \emph{graph blow-up} $\bl{\oX: \SS_{\maF}}$. The \emph{graph blow-up} $\bl{M: \maP}$ of
$M$ with respect to the family $\maP$ is the closure
  $\bl{M: \maP} \ede 
  \overline{\delta(M \smallsetminus \medcup\limits_{P \in \maP} P)}  \ \subset \ 
  \prod_{P \in \maP} [M: P]$
where $\delta$ is the diagonal embedding (see Definition
\ref{def.unres.blowup}). One of the main results of this paper, Theorem \ref{thm.main1},
 establishes a diffeomorphism $[M: (P_j)_{j=1}^k] \simeq 
\bl{M: (P_j)_{j=1}^k}$ between the iterated blow-up and the graph blow-up
introduced in Definition \ref{def.unres.blowup}, provided that $(P_j)_{j=1}^k$
is an admissible ordered clean semilattice.
This compactification arises as an intermediate step in our 
proof of the equivalence (homeomorphism) of Georgescu's
construction (the space $\Spec(\mEF X)$) and Vasy's construction (the
space $[\oX : \SS_{\maF}]$) that relies, in order, on the following sequence of homeomorphisms
\begin{equation}\label{eq.plan.proof}
  [\oX : \SS_{\maF}] \, \simeq \, \bl{\oX: \SS_{\maF}}
   \, \simeq \, \XGV \ede \overline{\delta_\maF(X)}  
   \, \simeq \, \Spec(\mEF X) \,. 
\end{equation}
The smooth structures on the last three spaces
come from the first one. We stress that, while the last homeomorphism
(already proved in \cite{MNP}) has an easy proof, the other two, proved in
this paper, are rather difficult results.

\subsection{Contents of the paper and applications}
Let us now present the contents of the paper, including some applications
of our results (mainly of the homeomorphisms of Equation \eqref{eq.plan.proof}).
Section~\ref{sec.mfds.corners} contains
background material on manifolds with corners. In particular, we
devote quite a bit of effort to introduce and compare several classes of 
submanifolds of manifolds with corners. Section~\ref{sec.blowup.mfd.corners} 
recalls the definition of the blow-up of a manifold with corners with respect 
to a closed \psbmanifold{} and establishes a few properties of this
blow-up. In Section~\ref{sec.graph.blowup} we study the iterated and
the graph blow-ups. In particular, we prove that, for clean
semilattices of closed \psbmanifolds, the graph blow-up can also be
obtained as an iterated blow-up, which is one of the main technical
results of this paper. In Section \ref{sec6}, we use the identification
of the graph blow-up with the iterated blow-up to show that Georgescu's
compactification $\Spec(\mEF X)$ and Vasy's compactification $[\oX : \SS_{\maF}]$, 
are homeomorphic to the space $\XGV$ introduced above (see Equation
\eqref{eq.plan.proof}).

An important application of our results is the existence of various smooth group 
actions on blow-ups, in general, and on $\XGV$, in particular. For instance, 
we obtain an action of~$X$ on $\XGV$ by translations. Moreover, when $\maF$ corresponds 
to the $N$-body problem, we also obtain and action of the symmetric group $S_N$
and of the orthogonal group $\GL(3,\RR)$ on $M_N = \XGV$, see  Remark \ref{rem.Nbody}. 
In addition to group-action applications, in the 
last section, we outline \emph{four other applications} 
of our results, which we resume next.
\begin{enumerate}[\kern6mm (A)]
\item The first application is on the relation between the action of the 
symmetric group $S_N$ on $M_N$ and Pauli exclusion principle. Finding a good 
compactification of $\RR^{3N}$ that behaves well with respect to the action of
$S_N$ is a problem posed by Melrose and Singer from \cite{MelroseSinger}, which 
was solved in \cite{Kottke-Lin} using differential geometry and in 
\cite{MougelPrudhon} using $C^*$-algebras, see Subsection~\ref{subsec.N-body}.

\item The second application is to investigate
the relation between Vasy's pseudodifferential calculus and Georgescu’s
algebras. It combines the results of \cite{aln2} with the existence of
the smooth structure on $\XGV$ provided by our results
and with the smooth action of~$X$ on $\XGV$ to define
a pseudo-differential calculus $\Psi_{c}(\XGV)$ consisting of properly 
supported operators. This calculus is clearly smaller than the Vasy calculus, 
and a preliminary discussion of the relation of the two calculi is contained
in Subsection \ref{ssec.psdo}. We thus also define a completion $\Psi_{NB}(\XGV)$ 
of $\Psi_{c}(\XGV)$ that can be proved to be spectrally invariant (\ie it is 
stable for holomorphic functional calculus) and thus leads to a description of 
the distribution kernels of the resolvents of $H_N$, Proposition \ref{prop.sp.inv}. 
The more precise relation between Vasy's calculus and ours is, certainly,
worth further exploring, see Subsection~\ref{ssec.psdo}.

\item Our third application is to establish some connections
between our results and the HVZ theorem, see Subsection~\ref{ssec.HVZ}

\item Finally, the last application is a regularity result for bound states
for Hamiltonians with {\em inverse square singularities}. For instance, let 
$u \in L^2(X \smallsetminus \cup \maF)$ be  an eigenfunction of $H_N$, namely 
$H_N u = \lambda u$ on $X \smallsetminus \cup \maF$, $\lambda \in \RR$, and let 
$\rho(x) \ede \min\big \{ \dist(x, \unionF), 1 \big \}$, where $\dist(x, \unionF)$ is
the distance in the usual euclidean metric from $x$ to $\unionF$. Then, 
for any multi-indices $\alpha$, we have
\begin{equation}\label{eq.reg.est}
     \rho^{|\alpha|} \pa^\alpha u \in L^2(\RR^{3N})\,.
\end{equation}
The same result holds for $H_N^{\eff}$ in place of $H_N$ and for
many other operators. See Theorem~\ref{theorem.main.reg} in Subsection~\ref{subsec.regres} for a more
general statement. Full details can be found in \cite{AMN2}. 
\end{enumerate}

The study of inverse-square potentials is relevant since they 
appear in relativistic physics. See Subsection \ref{ssec.reg} for 
references and more on the physical motivation for inverse square potentials.

Two appendices include some related topological results on proper maps 
and on submanifolds of manifolds with corners. The reader can thus see that this 
paper relies essentially on geometry, necessarily so since Vasy's construction is
geometric.

\subsection*{Acknowledgements} 
We thank Vladimir Georgescu for useful discussions. We also thank anonymous
referees and the handling editor for carefully reading our paper and for
useful suggestions.

\section{Manifolds with corners and their submanifolds}\label{sec.mfds.corners}

We begin with some background material, mostly about manifolds with
corners. This section contains few new results, but the presentation
is original.

\subsection{Manifolds with corners}
We now introduce manifolds with corners and their smooth structure. We
also set up some important notation to be used throughout the paper.
The terminology used for manifold with corners is not uniform.
Nevertheless, good overviews of the concept of a manifold with
corners can be found in \cite{JoyceCorners, kottke-gblow,
  KottkeMelrose, Melrose92, nistorDesing}, to which we refer for the
concepts not defined here and for further references. In this paper,
we will mostly use the terminology introduced by Melrose and his
coauthors, which predates most of the other ones.

\subsubsection{Notation and conventions}\label{ssec.notation}
For any finite-dimensional real vector space $Z$, let $\SS_Z$ denote
the \emph{set of vector directions} in $Z$, that is, the set of
(non-constant) open half-lines $\RR_{+} v$, with $0 \neq v \in Z$ and
$\RR_{+}:=(0,\infty)$. We will also use the standard notation
$\SS^{n-1} := \SS_{\RR^n}$, for simplicity. In particular, if $Z$ is a
euclidean (real) vector space, we identify $\SS_{Z}$ with the unit
sphere in $Z$. Informally, a manifold with corners is a topological
space locally modeled on the spaces
\begin{equation}\label{eq.def.Rnk}
  \RR_k^n \ede [0,\infty)^k \times \RR^{n-k}.
\end{equation}
For $k,n \in \mathbb{N}=\{0,1\ldots \}$ with $k \leq n$, we let
$\SS_k^{n-1} \subset \RR^n$ be
\begin{equation}\label{eq.not.sphere}
   \SS_k^{n-1} \ede \SS^{n-1} \cap \RR_k^{n} \seq \{ \phi= (\phi_1,
   \ldots \phi_n) \,\vert\ \| \phi \| = 1 \mbox{ and } \phi_i \geq 0
   \mbox{ for } 1 \le i \le k \},
\end{equation}
where $\|.\|$ is the euclidean norm on $\RR^n$.

\begin{remark}\label{rem.first.components}
Let us write $0_V$ for the neutral element of a vector space $V$, when
we want to stress the space to which it belongs. We will often use
maps between subsets of euclidean spaces. As a rule, we will try
not to permute the coordinates and, moreover, our embeddings will be
``first components'' embedings. More precisely, let $k' \le k$ and
$n'-k' \le n-k$, we shall then use with priority the canonical {\em first
components embedding,} namely given by:
\begin{equation}\label{eq.def.can0}
  \begin{gathered}
    \RR^{n'}_{k'} \, \simeq \, [0,\infty)^{k'} \times
      \{0_{\RR^{k-k'}}\} \times \RR^{n'-k'} \times \{0_{\RR^{n-n'}}\}
      \, \subseteq \, [0,\infty)^{k} \times \RR^{n-k} \seq
        \RR^{n}_{k}\\
    (x', x'') \ \mapsto \ (x', 0_{\RR^{k-k'}}, x'', 0_{\RR^{n-n'}}) \,.
  \end{gathered}
\end{equation}
Occasionally, other embeddings (involving permutations of the coordinates) between
these sets will also be considered, in which case they will 
be explained separately. For instance, we shall sometimes find it notationally
convenient to use the {\em canonical permutation of coordinates}
diffeomorphism
\begin{equation}\label{eq.def.can}
  \begin{gathered}
    \can : \RR^n_k \times \RR^{n'}_{k'} \ \simeq \ \RR^{n+n'}_{k+k'}\\
    (x', x'', y', y'') \ \mapsto \ (x', y', x'', y'') \in
         [0,\infty)^{k+k'} \times \RR^{n+n'-k-k'}\,,
  \end{gathered}
\end{equation}
where $x' \in [0,\infty)^{k}$ and $y' \in [0,\infty)^{k'}$. (Compare
    with Equation \eqref{eq.def.Rnk}.) 
\end{remark}

\subsubsection{Charts and atlases}
We shall use suitable charts to define the smooth structure on
manifolds with corners. We proceed as in
the case of smooth manifolds (without corners).
We begin with the following standard definition.
\begin{definition}\label{def.smooth}
Let $U \subset \RR^n_k$ and $V \subset \RR^m_l$ be two open subsets
and $f=(f_1,\ldots,f_m) : U \to V$.  We shall say that:
\begin{itemize}
\item[(a)] $f$ is \emph{smooth} on $U$ if there exists an open
  neighborhood $W$ of $U$ in $\RR^n$ such that $f$ extends to a smooth
  function $\tilde f: W \to \RR^m$.
\item[(b)] $f$ is a \emph{diffeomorphism} between $U$ and $V$ if $f$
  is a bijection and both~$f$ and~$f^{-1}$ are smooth.
\end{itemize}
\end{definition}

\begin{definition}\label{def.chart}
A \emph{corner chart on~$M$} (or simply, {\em ``chart''} in what
follows) is a couple $(U,\phi)$ with $U$ an open subset of~$M$ and
$\phi: U \to \Omega$ a homeomorphism onto an open subset $\Omega$ of
$\RR_k^n$. Let $(U,\phi)$ and $(U',\phi')$ be two corner charts with
values in $\RR_k^n$ and in $\RR^{n'}_{k'}$, respectively. Let $V:= U
\cap U'$. We shall say that the corner charts $(U,\phi)$ and
$(U',\phi')$ are {\em compatible} if
\begin{equation*}
  \phi'\circ \phi^{-1} : \phi(V) \to \phi'(V)
\end{equation*}
is a diffeomorphism (see Definition \ref{def.smooth}) between the open
subsets $\phi(V)\subset \RR^n_k$ and $\phi'(V)\subset\RR^{n'}_{k'}$.
(So $n = n'$ if $V \neq \emptyset$.)
\end{definition}

Given a point $m\in M$ and a corner chart $(U,\phi)$ with $m \in U$,
we can always find a corner chart $(U',\phi')$, $\phi': U'\to
\RR^{n'}_{k'}$, compatible with $(U,\phi)$ such that $\phi'(m)=0$ and
$k'$ is minimal. The least such $k$ is the {\em boundary depth}
of $m$ in $M$. See Subsection \ref{ssec.b.faces} below.

\begin{definition}\label{def.atlas}
A \emph{corner atlas} $\maA=\{(U_a,\phi_a), a \in A\}$ on~$M$ is a
family of compatible corner charts such that $M =\bigcup \limits_{a
  \in A} U_a$. Two corner atlases are called {\em equivalent} if their
union is again a corner atlas. A \emph{manifold with corners} is
defined to be a paracompact Hausdorff space $M$ with an equivalence
class of corner atlases (on~$M$). In the following we will drop the
word ``corner'' before the words ``chart'' and ``atlas.'' In the
context of a manifold with corners, the terms ``atlas'' and ``chart''
will always mean ``corner atlas'' and, respectively, ``corner chart.''
\end{definition}

We stress that we do not require the connected components of
a manifold with corners to have the same dimension.
Let us also remark that a manifold with corners in the above sense is called a
``$t$-manifold'' in \cite[Section~1.6]{MelroseBook},
where the ``$t$'' stands for ``tied''. If $M$ is
manifold with corners, then the union of all its atlases is again an
atlas, the {\em maximal atlas} of~$M$. An open subset of a manifold
with corners is, again, in an obvious way, a manifold with corners.
Many concepts extend from the case of manifolds without corners to
that of manifolds with corners.

\begin{definition}\label{def.new}
Let $f : M \to M'$ be a map between two manifolds with corners. We
will say that $f$ is {\em smooth} if, for any two charts $(U,\phi)$ of
$M$ and $(U',\phi')$ of $M'$, the map $\phi' \circ f \circ \phi^{-1}$
is smooth on its domain of definition $\phi(f^{-1}(U'))$. If~$f$ is a
bijection and both~$f$ and~$f^{-1}$ are smooth, we will say that~$f$
is a {\em diffeomorphism.}
\end{definition}

Here are some examples of manifolds with corners that will be
used in this paper.

\begin{example}\label{ex.corners}
Using the notation from Subsection~\ref{ssec.notation}, we have the
following:
\begin{enumerate}[(i)]
\item Any open subset of $\RR^n_k := [0, \infty)^k \times \RR^{n-k}$
  is a manifold with corners.
\item The sphere orthant $\SS^{n-1}_k := \SS^{n-1} \cap \RR^n_k$ of
  Equation \eqref{eq.not.sphere} is a manifold with corners.
\item Any smooth manifold is a manifold with corners (even if it
  does not have a boundary or any true corners).
\end{enumerate}
\end{example}

The following will be used to introduce \psbmanifolds.

\begin{definition}\label{def.boundary.depth}
Let $I$ be a subset of $\{1, \ldots, n\}$ and $L_I$ be the subset of
$\RR^n_k$ defined by
\begin{equation}\label{eq.p-manifold}
  L_I \ede \{\, x=(x_1,\ldots , x_n) \in \RR^n_k \, \vert \ x_i=0
  \mbox{ if } i \in I\, \} \, .
\end{equation}
The number $b := |I\cap \{1,\ldots,k\}|$ of elements of $I\cap \{1,\ldots,k\}$
will be called the \emph{boundary depth} of $L_I$ in $\RR^n_k$. Similarly, $c:= |I|$ is the 
codimension of $L_I$ in $\RR^n_k$ and $d:=n-c$ is its dimension. 
\end{definition}

\subsection{The boundary and boundary faces of a manifold with corners}
\label{ssec.b.faces}
We now fix some standard terminology to be used in what follows,
extending the local definitions of Definition~\ref{def.boundary.depth}.
In particular, we need the intrinsic definition of the boundary of a
manifold with corners. Recall that above
we defined the \emph{boundary depth} (in $M$) $\depth_M(p)$ of a
point $p\in M$ as the number of non-negative coordinate functions
vanishing at $p$ in any local coordinate chart at $p$. It is the least
$k$ such that there exists a chart near $U$ with values in~$\RR_k^n$.
Let $(M)_k$ be the set of points of~$M$ of boundary depth $k$. 
It is a smooth manifold (no corners). Its connected components are called
the \emph{open} boundary faces (or just the \emph{open} faces) of
codimension (or boundary depth) $k$ of~$M$. The set $(M)_0$ is the 
\emph{interior} of~$M$.
A \emph{boundary face} of  boundary depth $k$
is the closure of an open boundary face of  boundary  depth $k$.
For every boundary face $F$ of codimension~$k$ there is a manifold with corners 
$\FF$ of dimension $n-k$ and an immersion $\iota\colon\FF\to M$ that is 
injective on $\interior{\FF}$ and $\iota(\FF)=F$. However, in general
  $\iota$ is not injective on the boundary faces of $\FF$; an example
  is the teardrop domain of Figure~\ref{fig.teardrop}. For this domain, $M$ does 
  not induce a structure of a manifold with corners on the boundary face $F$. 
  Nevertheless, we will not exclude from
  our consideration boundary faces that are not manifolds with corners; in
  particular, the boundary of the teardrop domain is a boundary face of that domain.

\begin{figure}\label{fig.teardrop}
\begin{center}
  \begin{tikzpicture}[scale=.3]
    \draw[fill=gray!50] (1.5,3) -- (0,0) .. controls (-.61,-1)  and (.5,-2) .. (1.5,-2) --  (1.5,-2)  .. controls (2.5,-2) and (3.61,-1) .. (3,0) -- cycle;
    \node[above] (ABC) at (1.5,3.2) {$\scriptstyle p$}; 
    \draw[fill=black] (1.5,3) circle (.08);
  \end{tikzpicture}
\end{center}
\caption{A teardrop domain: a compact simply-connected domain in~$\RR^2$ with exactly one 
boundary face $F_1$ of  boundary  depth $1$ and exactly one corner $p$ of boundary  depth $2$. The face 
$F_1$ is the image of an immerision of $[0,1]$ that is injective on the open interval 
$(0,1)$, but which maps both ends to the corner~$p$.}
\end{figure}
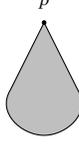

We will denote by $\mathcal{M}_k(M)$
the set of all {\em closed} boundary faces of codimension~$k$. In
particular, the {\em boundary} $\pa M$ of~$M$, defined as the set of
all points of  boundary depth $>0$, is given by
\begin{equation}\label{eq.def.bm}
  \pa M \ede \bigcup_{H \in \mathcal{M}_1(M)} H.
\end{equation}
A boundary face of~$M$ of codimension one will be called a {\em
  hyperface}, as usual. Thus $\pa M$ is the union of the
hyperfaces of~$M$. If $H$ is a hyperface of~$M$ and $0 \le x \in
\CI(M)$ is a function such that $H = x^{-1}(0)$ and $dx \neq 0$ on
$H$, then $x$ is called a {\em boundary defining function} of $H$. As
the example of the teardrop domain shows, not all hypersurfaces have a
boundary defining function. However, each boundary face $F$ of
codimension~$k$ can {\em locally} be represented as 
$$F \cap U \seq \{x_1 = x_2 =
\ldots = x_k = 0\}\,,$$ where $x_j$ are boundary defining functions of
the hyperfaces of $F \cap U$ in $U$.

It is convenient to consider embeddings.

\begin{remark}\label{rem.tildeM}
Every manifold with corners $M$ is contained in a smooth manifold
$\widetilde M$ such that each component of~$M$ is contained in a 
component of $\widetilde M$ of the same dimension 
\cite{ammann.ionescu.nistor:06, JoyceCorners,
  KottkeMelrose, MelroseBook, Melrose92, nistorDesing}. 
We note, however, that $\pa M$ is intrinsically defined, 
Equation \eqref{eq.def.bm}, and,
sometimes, it is {\em not} the {\em topological} boundary $\overline{M}
\smallsetminus \overset{\circ}{M}$ of~$M$, where the closure
$\overline{M}$ and the interior $\overset{\circ}{M}$ are defined as subsets of
$\widetilde M$. For instance, when $M := \{ x \in \RR^n \, \vert
  \ x_n \ge 0, \|x\|< 1\}$ and $\widetilde M = \RR^n$, then $\pa M =
  \{ x \in \RR^n \, \vert \ x_n = 0, \|x\|< 1\}$, whereas the
  topological boundary of~$M$ is $\pa M \cup \{ x \in \RR^n \, \vert
  \ x_n > 0, \|x\| = 1\}$, a bigger set. In fact, we always have that
  $\pa M$ is contained in the topological boundary of~$M$ in
  $\widetilde M$. Unlike $\pa M$, the topological boundary of~$M$ in
  $\widetilde M$ depends on $\widetilde M$. Embeddings are convenient also 
  because one can define
\begin{equation*}
  TM \ede T \widetilde M\vert_{M}\,.
\end{equation*}
Up to a diffeomorphism, $TM$ can be obtained by gluing 
open subsets of the tangent spaces $T(\RR_k^n) := \RR_k^n \times \RR^n$ using an atlas of
$M$. We also let $T_x^+M$ be the set of {\em inward-pointing tangent vectors} of $T_xM$ 
(this includes the vectors tangent to the boundary). It can be defined as
the set of equivalence classes of curves starting at $x$ and
completely contained in $M$. We finally let $T^+M := \bigcup_{x \in M}
T^+_x M$ with its projection map to $M$. Note that $T^+M$ is {\em not} a
fiber bundle, but a fiberwise conical closed subset of the tangent
space.
\end{remark}

\subsection{Submanifolds of manifolds with corners}
We now discuss the two notions of submanifolds of a manifold
with corners -- ``\wsbmanifolds'' and ``\psbmanifolds'' -- used in
this paper. The class of \psbmanifolds{} is the class of
submanifolds with suitable tubular neighborhoods, whereas the class
of \wsbmanifolds{} is the largest class of subsets that have 
some claim to being a ``submanifold'' of a certain kind. 
The concept of a \psbmanifold{} is due to Melrose \cite{MelroseBook}
and is central in what follows, since the class of closed
\psbmanifolds{} is the class of submanifolds
with respect to which we can perform blow-ups.
On the other hand, the concept of a \wsbmanifold, with which we begin, is
weaker than the one considered in \cite{MelroseBook}, 
but also arises naturally in our study. Other concepts
of submanifolds are discussed in Appendix \ref{sec.appendix.submanifold}.

\subsubsection{Submanifolds in the weakest sense: \wsbmanifolds}
We start the discussion with a needed notion of submanifolds, 
called ``\wsbmanifolds'' in order to distinguish them from other
classes of submanifolds of manifolds with corners, see Appendix
\ref{sec.appendix.submanifold}.
The adjective ``weak'' in
``weak submanifolds'' indicates that the class of \wsbmanifolds{} is
the largest class of ``submanifolds'' that we consider, with the
minimal requirement that a submanifold (of a manifold with corners)
be also a manifold with corners in its own way for the induced
structure from the ambient manifold with corners.
As a first step (Definition~\ref{def.weak-s.0}), we introduce the
class of weak submanifolds of $\RR^n$ (or more generally of manifolds without corners 
or boundary). Then as a second step, see Definition~\ref{def.weak-submanifold}, 
we generalize to weak submanifolds in manifolds with corners. Thus in the 
following definition we mainly need the case $M_1=\RR^n$.

\begin{definition}\label{def.weak-s.0}
Let $M_1$ be a {\em smooth} manifold (\ie without boundary or corners) of dimension~$n$. 
A subset $S \subset M_1$ is a \emph{\wsbmanifold} of $M_1$ if, and only if, 
any $x\in S$ is contained in the domain of a chart $\psi:U\to \Omega \subset \RR^n$ 
of $M_1$ with $\psi(S\cap U) = (\RR^m_\ell \times \{0\}) \cap \Omega$, for some 
$0 \leq \ell \leq  m \leq n$. (Hence $m$ is the \emph{dimension} of $S$ at $p$ and 
$n-m$ is the \emph{codimension} of $S$ in $M_1$ at $p$.)
\end{definition}

A submanifold $S$ with corners of some smooth manifold $M_1$ inherits from $M_1$
the structure of a manifold with corners. In manifolds without corners the definition 
of a ``weak submanifolds'' coincides with what often is considered as a submanifold 
with (boundary or) corners.
The definition gets a bit more involved when we  generalize to the 
case when $M_1$ is replaced with a manifold $M$ with corners, thus obtaining a generalization 
of the submanifold property to the category of manifolds with corners.

\begin{definition}
\label{def.weak-submanifold}
A subset $S$ of a manifold with corners $M$ is a \emph{\wsbmanifold{} of~$M$} if, for 
every $p \in S$, there is a chart $\phi : U \to \Omega \subset \RR^n_k$ on~$M$, 
$0 \le k \le n$, such that
\begin{enumerate}
\item\label{def.weak-submanifold.i} $p \in U$ and
\item\label{def.weak-submanifold.ii} $\phi(S \cap U)$ is 
a weak submanifold of $\RR^n$, see Definition \ref{def.weak-s.0}.
\end{enumerate}
\end{definition}

Any weak submanifold of a manifold with corners
carries an induced structure of a manifold with corners, 
see Remark~\ref{rem.sbmdf-corner.mfd-corners} \eqref{rem.sbmdf-corner.mfd-corners.i} 
below for details. In particular, the \emph{dimension $\dim_p(S)$ of $S$ at $p$} is, from
the definition,  
the dimension of $\phi(S \cap U)$ at $\phi(p)$. (Recall that the function $\dim_p(S)$, while
locally constant, it is not required to be constant on a manifold
with corners $S$.)
The concept of a \wsbmanifold{} is more general than the concept of a ``submanifold''
considered in \cite{MelroseBook} and recalled in Definition~\ref{def.submanifold-gen} in the
Appendix. Roughly speaking, compared to many other definitions of submanifolds in manifolds 
with corners in the literature,  the corner structure of \wsbmanifolds{} has weaker 
compatibility properties with the corner structure of the ambient 
manifold. See Examples~\ref{ex.weak.submanifold} for examples of a \wsbmanifolds{}
that are not submanifolds in the sense of Definition~\ref{def.submanifold-gen}.
One can reformulate Definition~\ref{def.weak-submanifold} as follows.

\begin{lemma}\label{lemma.krit.weak.submanifold}
The subset $S \subset M$ is a \emph{\wsbmanifold} of~$M$ if, and only if, $M$ can be 
extended to a smooth manifold $\widetilde M$ (without boundary or corners), such that $S$ 
is a \wsbmanifold{} of $\widetilde M$. That is, for any $p\in S$ there are numbers 
$0\leq\ell\leq m\leq n=\dim M=\dim \widetilde M$ and a chart 
$\tilde\phi:\widetilde U\to\widetilde \Omega \subset \RR^n$ of $\widetilde M$ with $p\in \widetilde U$ and
\begin{equation}\label{eq.def.wsubm.alternative}
  \tilde\phi(S \cap \widetilde U)\seq \left(\RR^{m}_{\ell}\times \{0\}\right) \cap
  \widetilde \Omega \,.
\end{equation}
\end{lemma}

In order to understand the statement of this lemma, note that $\phi$ in 
Definition~\ref{def.weak-submanifold} and~$\tilde\phi$ in Lemma~\ref{lemma.krit.weak.submanifold} 
have properties that are, in general, incompatible. The map $\phi$ turns the boundary components 
of~$M$ into (open subsets of) coordinate hyperfaces of $\RR^n_k$, while the map 
$\tilde\phi$ turns the submanifold into (an open  subset of) 
$\RR^{m}_{\ell}\times \{0\}$. For general weak submanifolds, both properties cannot 
be achieved simultaneously.

\begin{proof}
On the one hand, $\bigl(\RR^{m}_{\ell}\times \{0\}\bigr)\cap \widetilde \Omega$ is a 
\wsbmanifold{} of $\widetilde\Omega$. The ``if'' part thus follows from the fact that 
the property of being a \wsbmanifold{} in the sense of Definition~\ref{def.weak-s.0} is 
invariant under restrictions to open subsets and under diffeomorphisms, which we apply 
to the diffeomorphism~$\tilde\phi$. On 
the other hand, if we are in the situation of Definition~\ref{def.weak-submanifold}, then  
$\phi(S \cap U)$ is a \wsbmanifold{} of $\RR^n$. Let $\psi$ be a chart for this 
\wsbmanifold{} in the sense of Definition~\ref{def.weak-s.0}. We  may assume, without loss 
of generality, that $\psi$ is defined on $\phi(S \cap U)$. Let $\phi_1:U_1\to \Omega_1$ be 
a smooth extension of $\phi:U\to \Omega$ to a chart of some extension $\widetilde M$ 
of $M$ with properties as above with $U= U_1\cap M$. 
Then $\tilde\phi:=\psi\circ \phi_1$ yields a map with the properties of Equation~\eqref{eq.def.wsubm.alternative} of our lemma. 
This yields the ``only if'' part and thus completes the proof.
\end{proof}

It is clear from the construction above, that in the lemma we can choose $\widetilde M$ 
such that~$M$ is a closed subset of~$\widetilde M$.

\pagebreak[3]
\begin{remarks}\label{rem.sbmdf-corner.mfd-corners}\ \nopagebreak
\begin{enumerate}[\kern6mm (a)]
\item\label{rem.sbmdf-corner.mfd-corners.i} 
Let $S$ be a weak submanifold of~$M$, then $S$ is covered by charts
  $\widetilde\phi\colon U\to \Omega$ of an extension $\widetilde M$ of~$M$ as in 
  \eqref{eq.def.wsubm.alternative}. Any $\tilde\phi$ yields
  a chart of $S$:
  $$\psi:= \tilde\phi|_{S\cap \widetilde U}:{S\cap \widetilde U}\to \left\{x\in \RR^{n'}_{k'}
  \,\big|\, (x,0)\in \widetilde\Omega\right\}.$$
The set of a such charts $\psi$ is an atlas of
$S$, the \emph{induced atlas on $S$}. With this atlas, the set $S$ is
a manifold with corners.
Thus any \wsbmanifold{} is a manifold with corners in its own.

\item\label{rem.sbmdf-corner.mfd-corners.ii} If $M$ is manifold with corners and $S$ a 
subset of the interior $M_0$ of $M$. Then $S$ is a weak submanifold of $M$ in the sense 
of Definition~\ref{def.weak-submanifold}, if and only if it is a weak submanifold in the 
sense of Definition~\ref{def.weak-s.0}.

\item\label{rem.sbmdf-corner.mfd-corners.iii} Definition~\ref{def.weak-s.0} -- in which we assumed 
that $M_1$ has no corners and no boundary -- is equivalent to several classical definitions in the 
literature. For example, a subset $S \subset M_1$ is a weak submanifold, if and only if it is a 
submanifold in the sense of Melrose, see 
Definition~\ref{def.submanifold-gen}. As discussed above and in Examples~\ref{ex.weak.submanifold}, 
this no longer holds for submanifolds of manifolds with corners.

\item\label{rem.sbmdf-corner.mfd-corners.iv} 
In view of our experience with the preprint version of this article, we want to comment 
on a possible missunderstanding of Definition~\ref{def.weak-submanifold}. It is a priori obvious that $\phi(S\cap U)$ is a subset of $\RR^n_k$, so one might think, that one could or even should replace 
$\RR^n$ in Definition~\ref{def.weak-submanifold} \eqref{def.weak-submanifold.ii} by $\RR^n_k$. However, 
this version would run into the obvious trouble that $\RR^n_k$ is manifold with corners and weak submanifold  
of manifolds with corners had not yet been defined before.
Thus Definition~\ref{def.weak-submanifold} would be circular. The possible missunderstanding is related, 
but more subtle. In view of item \eqref{rem.sbmdf-corner.mfd-corners.iii} one might assume that one may 
replace Definition~\ref{def.weak-submanifold} \eqref{def.weak-submanifold.ii} by\smallskip\\
  \kern3mm (2$'$)\quad $\phi(S \cap U)$ is 
  a submanifold of $\RR^n_k$ in the sense of Definition~\ref{def.submanifold-gen},\smallskip\\
 as this property has already been defined in the literature.
 However, this replacement would yield a definition equivalent to  Definition~\ref{def.submanifold-gen} 
 and thus a more restrictive notion of submanifold than ours.

\item\label{rem.sbmdf-corner.mfd-corners.v} 
Melrose's more restictive definition of a submanifold given in 
Definition~\ref{def.submanifold-gen} is not sufficiently general for our purposes. 
Indeed, the set $S$ in Example~\ref{ex.weak.submanifold} \eqref{ex.weak.submanifold.i} 
is a \wsbmanifold{} of $\RR^2_1$, but not a \Msbmanifold. Similarly, the graph blow-up 
is a \wsbmanifold{} of the product considered in Definition~\ref{def.unres.blowup}, 
but in general it is not a \Msbmanifold, see \cite{koenig.master} for details.

\item\label{rem.sbmdf-corner.mfd-corners.vi}
Any weak submanifold $S$ of a manifold
with corners $M$ is \emph{locally closed}, \ie  it is the
intersection of a closed subset with an open subset. In order to prove this we embed~$M$ 
into a manifold without corners and without boundary $\widetilde M$ with $M$ closed in $\widetilde M$,
and we consider an atlas $\maA$ of charts $\tilde\phi:\widetilde U\to \widetilde\Omega$ as in Lemma~\ref{lemma.krit.weak.submanifold}.
Using that $\left(\RR^{m}_{\ell}\times \{0\}\right) \cap  \widetilde \Omega$ is closed in $\widetilde \Omega$,
we obtain that $S\cap \widetilde U$ is closed in $U\seq \widetilde U\cap M$. 
Thus $S$ is covered by charts of $M$ in which $S$ is closed, thus it is locally closed in $M$.
\end{enumerate}
\end{remarks}

We need \wsbmanifolds{} in view of the following proposition.

\begin{proposition}\label{prop.smfd.crit}
Let $N$ and $M$ be manifolds with corners and let $f:N\to M$ be a smooth
map that is an immersion and a homeomorphism onto its image. Then
$f(N)$ is a \wsbmanifold{} of~$M$ (Definition~\ref{def.weak-submanifold})
and $f : N \to f(N)$ is a diffeomorphism.
\end{proposition}

\begin{proof}
By extending the charts of~$M$, we can find an extension $\widetilde M$ as 
in Lemma~\ref{lemma.krit.weak.submanifold}, \iet $M$ is then a submanifold
with corners of codimension~$0$ in a manifold $\widetilde M$ without corners 
or boundary. Similar we can extend $N$ to a manifold~$\widetilde N$ without 
corners or boundary of the same dimension, and if $\widetilde N$ is sufficiently 
small, then we can find an immersion $\tilde f:\widetilde N\to \widetilde M$ 
extending $f$. The injectivity of $f$ and its homeomorphism property imply
(after passing to a possibly smaller $\widetilde N\supset N$) that the map 
$\tilde f$ is injective and satisfies the homeomorphism property.
  
Note that the homeomorphism property implies that we have a local
statement, that is, we can restrict ourselves to small neighborhoods in $N$,  
$\widetilde N$, $M$ and, respectively, $\widetilde M$.
For $p\in N$, let $\rho:V_0\to W_0\subset \RR^m_\ell$ be a chart of $N$ containing $p$, 
$m=\dim N$, with $W_0$ open in $\RR^m_\ell$,
that extends to a chart $\tilde \rho: \widetilde V_0\to   \widetilde W_0\subset \RR^m$ 
of $\widetilde N$. We may extend the immersion 
$\tilde f \circ \tilde\rho^{-1}:\widetilde W_0 \to  \widetilde M$ to a diffeomorphism
$\Phi:\widetilde W_0 \times B_\epsilon(0,\RR^{n-m}) \to U_0$ where $U_0$ is open in 
$\widetilde M$. Then $\tilde \phi:=\Phi^{-1}$ is a chart of $\widetilde M$ and, if 
$\widetilde V_0$ and $\epsilon$ were chosen sufficiently small, then we may verify
$$\tilde\phi\bigl(f(N)\cap U_0\bigr) = (\RR^m_\ell\times\{0\}\bigr)\cap 
\bigl(\widetilde W_0 \times B_\epsilon(0,\RR^{n-m}) \bigr).$$
The statements of the proposition then follow from Lemma~\ref{lemma.krit.weak.submanifold}.
\end{proof}

We obtain the following easy consequence.

\begin{corollary}\label{cor.smfd.crit}
Let $N$ and $M$ be manifolds with corners and $f:N\to M$ be a smooth
map. If there is a smooth map $F:M\to N$ with $F\circ f=\id_N$, then
$f(N)$ is a \wsbmanifold{} of~$M$, \ie a submanifold in the sense of
Definition~\ref{def.weak-submanifold}.
\end{corollary}

\begin{proof}
The relation $\id_{T_xN}=d_{f(x)}F \circ d_xf$ implies that
$d_xf:T_xN\to T_{f(x)}M$ is injective. As $F|_{f(N)}$ is continuous,
$f$ is a homeomorphism onto its image.
\end{proof}

Proposition~\ref{prop.smfd.crit} can also be used to prove the
following property: If $P$ and $Q$ are \wsbmanifolds{} of~$M$ and
$P\subset Q$, then $P$ is a \wsbmanifold{} of~$Q$.
Thus \wsbmanifolds{} have nicer categorical properties than
submanifolds in the sense of Definition~\ref{def.submanifold-gen}.
In the categorical language, the above property is
expressed as follows: if we consider the category whose objects are
manifolds with corners and whose morphisms are inclusions as a weak
submanifold, then this is a \emph{full} subcategory of the category of
sets with the inclusions as morphisms. On the other hand,
this property is not valid for submanifolds in the sense of 
Definition~\ref{def.submanifold-gen} (see 
Examples~\ref{ex.weak.submanifold}).

\subsubsection{Submanifolds with tubular neighborhoods:
  \psbmanifolds{}} We now recall Melrose's definition of a \psbmanifold{}
of a manifold with corners $M$ \cite{Kottke-Lin, MelroseBook,
  VasyReg}. In our paper, \psbmanifolds{} are of central importance,
as we blow-up manifolds with corners along closed \psbmanifolds.

Recall the subsets $L_I \subset \RR_k^n$ of Definition \ref{def.boundary.depth}.
After reordering the coordinates, $L_I$ is the first factor of $\RR^n_k
\cong \RR^d_{k-b}\times \RR^c_{b}$, in the sense that $L_I$ is mapped
to $\RR^d_{k-b}\times \{0\}$. The sets $L_I$ are the local models for \psbmanifolds{}
\cite[Definition~1.7.4]{MelroseBook}. However, in the following definition, 
we do not reorder the coordinates.

\begin{definition}
\label{def.psubmanifold}
A subset $P$ of a manifold with corners $M$ is a
\emph{\psbmanifold{}} if, for every $x \in P$, there exists a chart
$(U, \phi)$ with $x \in U$ and $I \subset \{1, 2, \ldots, n\}$ such
that
\begin{equation*}
  \phi(P \cap U) \seq L_I \cap \phi(U),
\end{equation*}
with $L_I$ as defined in Equation \eqref{eq.p-manifold}. The number
$n-|I|$ (respectively, $|I|$, respectively, $\depth_{M}(p) :=  |I\cap
\{1,\ldots,l\}|$) will be called the \emph{dimension}
(respectively, the \emph{codimension} of $P$ in $M$ at $x$, respectively, the
\emph{boundary depth} of $P$ in $M$ at $x$). We allow \psbmanifolds{} $Y$ of
non-constant dimension. We define $\dim (Y)$ as the maximum of the
dimensions of the connected components of~$Y$ and $\dim \emptyset = - \infty$.
\end{definition}

Of course, this definition extends the previous definition of
the boundary depth of $L_I$ in $\RR_k^n$ as the boundary depth of any interior 
point of $L_I$ in $\RR_k^n$. Obviously all \psbmanifolds{} are \wsbmanifolds{} 
(and submanifolds in the sense of Definition~\ref{def.submanifold-gen}),
and the definition of the dimension of~$P$ in~$x$ coincides with the dimension 
already defined above.

\begin{remark}\label{rem.depths}
The codimension and the boundary depth of a \psbmanifold{} $P$
at $p \in P$
(introduced in Definition~\ref{def.psubmanifold}) are locally constant functions in $p$. 
(The dimension of $P$ at $p$ is also locally constant 
in $p \in P$, but this is true in general for \wsbmanifolds.)
For any interior point $x$ in $P$ and $\epsilon > 0$ small enough, these
numbers are the codimension (respectively, the the boundary depth,
respectively, the dimension) of the intersection
$B_\epsilon(x)\cap P$ in $M$. More generally: if $P$ is a
\psbmanifold{} of~$M$ with boundary depth $d$ on the component of $x
\in P$, and if $x$ is a (boundary) point of  boundary  depth $e$ in $P$, then $x$
has  boundary depth $d+e$ in $M$. In particular, for a \psbmanifold{} $P
  \subset M$, the difference of  boundary depths $\depth_M (x) - \depth_P(x)$ is
  constant on the connected components of~$P$.
\end{remark}

\begin{example}\label{ex.first.comp}
This definition of a \psbmanifold{} comes from \cite{MelroseBook}.
Note that ``$p$'' is used as an abbreviation for ``product'',
reflecting the fact that, locally in coordinate charts,
\psbmanifolds{} are a factor of the product $\RR^n_k \simeq
\RR^{n_1}_{k_1}\times \RR^{n_2}_{k_2}$. More precisely,
let $\ell \le k$ and $m - \ell\le n - k$, so that the canonical
first components inclusion $j_0 : \RR^{m}_{\ell} \to \RR^n_k$ 
of Equation~\eqref{eq.def.can0} is defined. Then $j_0(\RR^m_\ell)$
is a \psbmanifold{} of $\RR^n_k$ and $j_0(\SS^{m-1}_{\ell}) = \SS^{n-1} 
\cap j_0(\RR^m_\ell)$ is a \psbmanifold{} of $\SS^{n-1}_{k} = \SS^{n-1} 
\cap \RR^{n}_{k})$. In fact, it is the first factor with $n_1=m$ and $k_1=\ell$.
A more general concept, that of an ``interior binomial subvariety,'' was 
introduced and studied in~\cite{KottkeMelrose}.
\end{example}

\begin{remark}
Let $P \subset M$ be a \psbmanifold{}. Then it is possible that
$P \subset F$, for $F$ a non-trivial face of~$M$. If $P$ is
connected, then the boundary depth of $P$ is the boundary depth of
the smallest closed face $F$ of~$M$ containing $P$. The
earlier notion of a ``submanifold of a manifolds 
with corners'' used by some of us \cite{ammann.ionescu.nistor:06, aln1, aln2}
is equivalent to being a \psbmanifold{} of boundary depth $0$ 
(\ie no connected component of $P$ is  contained in the boundary of~$M$).
They have an intrinsic definition (not using local coordinate charts).
\end{remark}

We shall need the following lemma.

\begin{lemma}\label{lemma.submanifolds}
Let $P \subset Q \subset M$ be manifolds with corners.
\begin{enumerate}[\kern5mm \rm (i)]
  \item\label{subm.ii} If both $P$ and $Q$ are \psbmanifolds{} of
    $M$, then $P$ is a \psbmanifold{} of $Q$.
  \item\label{subm.iii} If $P$ is a \psbmanifold{} of $Q$ and $Q$ is a
    \psbmanifold{} of~$M$, then $P$ is a \psbmanifold{} of~$M$.
\end{enumerate}
\end{lemma}

\begin{proof}
In order to prove \eqref{subm.ii}, we consider functions $x^1,\ldots, x^\ell$
defining the \psbmanifold{} $P$ of codimension $\ell$ in $M$ locally in
a neighborhood of $x\in P$. Choose $I\subset \{1,\ldots,\ell\}$ such
that $(dx^i|_p)_{i\in I}$ is a basis of $T^*_xQ$. Then in a possibly
smaller neighborhood, the functions $(x^i)_{i\in I}$ define~$P$ as a
\psbmanifold{} of~$Q$.

For~\eqref{subm.iii}, we consider functions $x^1,\ldots,x^k$ locally defining $P$
as a \psbmanifold{} of $Q$. We extend these functions to locally
defined functions on~$M$. Then we choose functions
$x^{k+1},\ldots,x^l$ defining $Q$ locally as a \psbmanifold{} of
$M$. Then $x^1,\ldots,x^l$ locally define~$P$ as a \psbmanifold{} of
$M$.
\end{proof}

\begin{example}
The diagonal $\Delta_N$ in Example~\ref{example.diagonal} is not a
\psbmanifold. If $N$ is the $2$-dimensional closed disc, then with
arguments analogous to Remark~\ref{rem.Bernd}, the diagonal is not a
\psbmanifold{} of $N\times N$. Alternatively, one could argue using
\cite{MelroseBook} (see Remark~\ref{rem.further.submfd}).
\end{example}

\subsubsection{The normal bundle of \psbmanifolds}
We now introduce some standard concepts that will be important in the 
definition of
the blow-up of a manifold with corners along a \psbmanifold{}. Recall
that $T^+M$ denotes the set of inward pointing tangent vectors to $M$,
see Remark~\ref{rem.tildeM}.

\begin{definition}\label{def.NPM}
Let $P \subset M$ be a \psbmanifold{} of the manifold with
corners~$M$. 
\begin{enumerate}[\kern3mm (i)]
\item The quotient $N^MP := TM\vert_P/TP$ is called the {\em normal
  bundle of $P$ in $M$.} 
  
\item The image $N^M_+P$ of $T^+M\vert_{P}$ in $N^M
P$ is called the {\em inward pointing normal bundle of $P$ in
  $M$.}
\item The set $\SS(N^M_+P)$ of directions 
in $N^M_+P$ is called the set of \emph{inward pointing spherical normal 
bundle of $P$ in $M$.} 
\end{enumerate}
\end{definition}

\begin{remark}\label{rem.SNMP}
As we have already noticed in Remark \ref{rem.tildeM}, $T^+M \vert_{P}
\to P$ is {\em not} a (locally trivial) fiber bundle over $P$. 
(Recall that $N^M_+ P$ is the image of $T^+ M\vert_P$ in $N^M P$ via 
the projection $TM\vert_P \to N^M P := TM\vert_P/TP$.) Then the projection 
map $N^M_+P \to P$ does define a (locally trivial) fiber bundle
structure over $P$ on $N^M_+P$, precisely because $P$ is a
\psbmanifold. The fibers of this bundle are modeled on $\RR^\ell_r$,
where $\ell$ is the codimension of $P$ and~$r$ is its boundary
depth. Thus the name of ``inward pointing normal bundle of $P$ in
$M$'' for $N^M_+P$ is justified.  Consequently, $\SS(N^M_+P)$, the
inward pointing spherical normal bundle of~$P$ in~$M$, is also a
(locally trivial) fiber bundle over~$P$ with projection $\SS(N^M_+P)
\to P.$ If $M$ is endowed with a Riemannian metric, then, we can endow 
$N^M_+ P\to P$ with the quotient metric, and hence 
we can of course identify $\SS(N^M_+P)$ with the set of unit vectors in $N^M_+P$. 
In particular, if, at $p \in P$, we have $\dim_p(P) = \dim_p(M)$, then the 
total space of this fiber  bundle is $\SS_p(N^M_+P) = \emptyset$.
\end{remark}


We complete this section with a related remark. We notice, in particular, the
existence of suitable ``tubular neighborhoods'' for \psbmanifolds.

\begin{remark}\label{eq.tubular.nbhd}
Let $P \subset M$ be a \psbmanifold{} in the manifold with corners
$M$. If $M$ is compact, then $P$ has a neighborhood $V_P \subset M$
such that $V_P$ is diffeomorphic to the closed cone $N^M_+P$ via a
diffeomorphism that sends $P$ to the zero section of $N^M_+P \to M$
and induces the identity at the level of normal bundles. This
was proved in \cite[Proposition~2.10.1]{MelroseBook}, under the
additional assumption that $P$ be closed. Moreover, the condition
that~$M$ be compact or that~$P$ be closed is not necessary, provided the right 
definition of tubular neighborhood is chosen (as neighborhood that 
is a disc bundle over $P$, which, in general, will not be the set of 
points of distance $<\epsilon$ for some Riemannian metric and some 
small~$\epsilon$). Then $N^M_+P$ is a cone with
corners in $N^MP$. Generalizing Example \ref{ex.corners}, we obtain
that all of the subsets $N^MP$, $N^M_+P$, and $S^+(N^MP)$ of $TM$ 
introduced in the last definition are manifolds with corners.
This is because the
property of being a manifold with corners is a local property and the
product of manifolds with corners is again a manifold with corners.
\end{remark}

We finish the discussion of \psbmanifolds{} 
by stressing that we allow the different connected components of a 
\psbmanifold{} to have \emph{different} dimensions. This is convenient when
considering intersections and when defining blow-ups.

\section{The blow-up for manifolds with corners}\label{sec.blowup.mfd.corners}
We now recall and study the blow-up of a manifold $M$ with corners by a 
\emph{closed} \psbmanifold. Our definition is the same as the one
in \cite{ACN, Kottke-Lin, MelroseBook}.

\subsection{Definition of the blow-up and its smooth structure} 
To define the blow-up, we first define the underlying set
and then we will define its smooth structure using as a model
the case of euclidean spaces.
The disjoint union of two subsets $A$ and $B$ will be
typically denoted $A \sqcup B$, as usual.

\subsubsection{Definition of the blow-up as a set}
We now define the underlying set of the blow-up $[M: P]$ of a manifold with
corners $M$ with respect to a closed \psbmanifold{} $P$ by replacing $P$ with the 
inward spherical normal bundle $\SS(N_+^MP)$ of $P$ in $M$ using the
disjoint union $\sqcup$.

\begin{definition}\label{def.blow-up}
Let $M$ be a manifold with corners and $P$ be a \emph{closed}
\psbmanifold{} of~$M$. Let $\SS(N^M_+P)$ be the inward pointing
spherical normal bundle of $P$ in $M$ (Definition \ref{def.NPM}).
As a set, we define the \emph{blow-up $[M: P]$ of~$M$ along $P$} 
(or \emph{with respect to $P$}) as the disjoint union 
\begin{equation*}
  [M:P] \ede (M\setminus P) \sqcup \SS(N^M_+P)\,.
\end{equation*}
The blow-down map $\beta = \beta_{M,P} : [M:P] \to M$ is defined as the
identity map on $M\setminus P$ and as the fiber bundle projection
$\SS(N^M_+P) \to P$ on the complement.
\end{definition}

Let us comment now on the definition of the blow-up.

\begin{remark}
The blow-up $[M: P]$ is therefore not defined if $P$ is not closed,
but we allow $P$ to consist of the disjoint union of several closed,
connected \psbmanifolds{} of~$M$ of possibly different dimensions.
Also, we are allowing the submanifold $P$ to be empty or to 
have components of the same dimension as $M$, which will be convenient
when considering iterated blow-ups, since it is not always possible to
arrange a family of submanifolds in a strictly increasing order
of dimensions. The components of $P$ of the same dimension with $M$ will,
of course, be connected components of~$M$ (since we have
assumed that $P$ is a closed \psbmanifold{} of~$M$). If $P\neq\emptyset$ is 
a union of connected components of~$M$, then $[M: P ] = M \smallsetminus P$
and the blow-down map $\beta_{M, P} : [M: P] \to M$ is not surjective
in this case (see Remark \ref{rem.SNMP}). In particular $[M:M]=\emptyset$, 
and hence the case $P = \emptyset$ is also needed when considering iterated blow-ups. 
In that case, $[M:\emptyset]= M$. 
\end{remark}

A general approach to smooth structures on the blow-up is contained in
\cite{KottkeMelrose}. Here we recall an approach that suffices for our
needs. We begin with the case of open subsets of a model space
$\RR^n_k$.

\subsubsection{The blow-up of the local models}
We now discuss various issues concerning the blow-up in local 
coordinates.
In the following, let $I_j$, $j=1,2,\ldots n$,
denote either $\RR$ or $[0,\infty)$. We will write $N_1\cong N_2$ if
$N_1$ is a \psbmanifold{} of $I_1 \times I_2 \times \cdots \times
I_n \subset \RR^n$ and if there is a permutation $\sigma$ of the
components of $\RR^n$ that induces a diffeomorphism from $N_1$ to
the \psbmanifold{} $N_2$ of $I_{\sigma(1)}\times I_{\sigma(2)} \times
\cdots\times I_{\sigma(n)}\subset \RR^n$. By contrast, when we write 
$N_1\simeq N_2$, we will merely state that the indicated manifolds 
are diffeomorphic, without including further information on the 
diffeomorphism. In particular, $N_1\cong N_2$ implies $N_1\simeq N_2$. 
To start with, the blow-up $[\RR^{n}_{k} \times \RR^{n'}_{k'} : 
\RR^n_k \times \{0\}]$ of $\RR^{n}_{k} \times \RR^{n'}_{k'} \cong
\RR^{n+n'}_{k+k'}$ along its \psbmanifold{} $\RR^n_k \times \{0\} =
\RR^n_k \times \{0_{\RR^{n'}}\}$ is, by Definition \ref{def.blow-up},
the set
\begin{equation}\label{eq.Blow.up}
  \begin{gathered}
  {[ \RR^{n}_{k} \times \RR^{n'}_{k'} : \RR^n_k \times \{ 0 \} ]}
  \ede \Big (\RR^{n}_{k} \times \RR^{n'}_{k'} \smallsetminus
  \RR^{n}_{k} \times \{0\} \Big ) \sqcup \RR^n_k \times
  \SS^{n'-1}_{k'}\\
  \seq \RR^n_k \times \left(\SS^{n'-1}_{k'} \sqcup \big (\RR^{n'}_{k'}
  \setminus \{0\} \big ) \right) \,.
  \end{gathered}
\end{equation}
Let us consider the map
\begin{equation}\label{eq.kappa}
  \begin{gathered}
  \kappa: \RR^{n}_k \times \SS^{n'-1}_{k'} \times [0, \infty)
    \ \to\ \RR^n_k \times \left(\SS^{n'-1}_{k'} \sqcup (\RR^{n'}_{k'}
    \setminus \{0\})\right) \,,\\
  \kappa (x, \xi, r) \ede
    \begin{cases}
        \ (x, \xi) \in \RR^n_k \times \SS^{n'-1}_{k'} & \mbox{if } r=0
        \\
      \, (x, r \xi) \in \RR^n_k \times (\RR^{n'}_{k'} \setminus \{0\})
      & \mbox{if } r > 0\,.
    \end{cases}
  \end{gathered}  
\end{equation}
The map $\kappa$ is immediately seen to be a bijection and we will use
it to endow $[ \RR^{n}_{k} \times \RR^{n'}_{k'} : \RR^n_k \times
  \{0\}]$ with the structure of a manifold with corners induced from
$\RR^{n}_k \times \SS^{n'-1}_{k'} \times [0, \infty)$. Under this
  diffeomorphism, the blow-down map becomes
\begin{equation}\label{eq.Blow.down2}    
  \beta : \RR^{n}_k \times \SS^{n'-1}_{k'} \times [0, \infty)\to
    \RR^{n}_{k} \times \RR^{n'}_{k'} \, , \quad \beta(x, \xi, r) \ede
    (x ,r \xi) \,.
\end{equation}
The blown-up space $ [\RR^{n}_{k} \times \RR^{n'}_{k'} : \RR^n_k
  \times \{0\}]$ is thus a space of ``generalized spherical
coordinates.''
  
If $U \subset \RR^{n}_{k} \times \RR^{n'}_{k'}$ is an open subset, we
endow
\begin{equation}\label{eq.Blow.up3}
  [U : U \cap (\RR^n_k \times \{0\})] \seq \beta^{-1}(U) \subset [
    \RR^{n}_{k} \times \RR^{n'}_{k'} : \RR^n_k \times \{0\}]
\end{equation}
with the induced structure of a manifold with corners.

\subsubsection{The topology and smooth structure of the blow-up}
The following lemmas will allow us to define a manifolds with corners
structure on blow-ups.

\begin{lemma}\label{lemma.ACN}
Let $P_i \subset M_i$, $i=1,2$, be closed \psbmanifolds{} and let
$\phi : M_1 \to M_2$ be a diffeomorphism such that $\phi(P_1) =
P_2$. Then there exists a unique map $\phi^\beta : [M_1:P_1] \to
[M_2:P_2]$ that is bijective and makes the following diagram commute
\begin{equation*}
\xymatrix{ [M_1: P_1] \ar[r]^{\phi^\beta} \ar[d]_{\beta_{M_1,P_1} }&
  [M_2:P_2] \ar[d]^{\beta_{M_2,P_2}} \\
  M_1 \ar[r]^\phi & M_2. }
\end{equation*}
This construction is functorial, in the sense that $(\phi \circ
\psi)^{\beta} = \phi^{\beta} \circ \psi^{\beta}$. If $M_i$ are open
subsets of $\RR^n_k$, then $\phi^\beta$ is a diffeomorphism.
\end{lemma}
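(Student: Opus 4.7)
The plan is to define $\phi^\beta$ piecewise on the two pieces of $[M_1:P_1]=(M_1\setminus P_1)\sqcup \SS(N^{M_1}_+P_1)$. On $M_1\setminus P_1$, the commutativity of the diagram forces $\phi^\beta=\phi|_{M_1\setminus P_1}$, and since $\phi(P_1)=P_2$ this restricts to a bijection onto $M_2\setminus P_2$. On $\SS(N^{M_1}_+P_1)$, I would use the differential: $d\phi\colon TM_1|_{P_1}\to TM_2|_{P_2}$ covers $\phi|_{P_1}$ and, since $d\phi(TP_1)=TP_2$, it descends to a fiberwise-linear bundle isomorphism $\overline{d\phi}\colon N^{M_1}P_1\to N^{M_2}P_2$. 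Because $\phi$ and $\phi^{-1}$ are both diffeomorphisms of manifolds with corners, they map the inward-pointing cone $T^+M_1$ onto $T^+M_2$, so $\overline{d\phi}$ restricts to an isomorphism $N^{M_1}_+P_1\to N^{M_2}_+P_2$; being $\RR_+$-homogeneous on each fiber, it descends to a bijection $\SS(N^{M_1}_+P_1)\to \SS(N^{M_2}_+P_2)$, which I take as $\phi^\beta$ on this piece.

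Bijectivity and commutativity of the diagram are then built into the construction. Uniqueness is forced on $M_1\setminus P_1$ by the diagram, and on $\SS(N^{M_1}_+P_1)$ the natural $d\phi$-induced choice is the unique one compatible with the construction (and, in the local smooth case treated below, the unique continuous lift, since $M_1\setminus P_1$ is dense in the blow-up). Functoriality $(\phi\circ\psi)^\beta=\phi^\beta\circ\psi^\beta$ reduces to two observations: on complements it is the trivial identity $\phi\circ\psi=\phi\circ\psi$, and on spherical normal bundles it is the chain rule $d(\phi\circ\psi)=d\phi\circ d\psi$ together with the fact that composition commutes with passage to $\RR_+$-quotients.

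For the last assertion, let $M_i\subset \RR^n_k$ be open. Smoothness is a local question, so by Definition \ref{def.submanifold} I may pass to charts (using $\can$ from \eqref{eq.def.can}) in which $P_1$ and $P_2$ both appear in the standard form $\RR^d_{k-b}\times\{0\}\subset \RR^d_{k-b}\times \RR^{n-d}_b$. In the $\kappa$-coordinates of \eqref{eq.kappa}, the blow-up is parametrized by $(x,\xi,r)\in \RR^d_{k-b}\times \SS^{n-d-1}_b\times[0,\infty)$ with $\beta(x,\xi,r)=(x,r\xi)$. Writing $\phi(x,y)=(a(x,y),b(x,y))$, the hypothesis $\phi(P_1)=P_2$ yields $b(x,0)=0$, and Hadamard's lemma gives a smooth factorization $b(x,y)=B(x,y)\,y$ with $B(x,0)$ invertible (it is precisely $\overline{d\phi}$ at $(x,0)$, which is invertible because $\phi$ is a diffeomorphism). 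A short computation then shows that $\phi^\beta$ in $\kappa$-coordinates takes the form
\begin{equation*}
  (x,\xi,r)\ \longmapsto\ \Bigl(\,a(x,r\xi),\ \tfrac{B(x,r\xi)\xi}{\|B(x,r\xi)\xi\|},\ r\,\|B(x,r\xi)\xi\|\,\Bigr),
\end{equation*}
which is smooth, including across $r=0$, because $B(x,0)\xi\neq 0$. Applying the same reasoning to $\phi^{-1}$ (or invoking functoriality, so that $(\phi^{-1})^\beta$ is a two-sided smooth inverse) shows $\phi^\beta$ is a diffeomorphism.

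The main obstacle is precisely this last smoothness-through-$r=0$ step: one must use Hadamard's factorization to extract the $r$-factor from $b(x,r\xi)=r B(x,r\xi)\xi$ cleanly, so that dividing by $\|B(x,r\xi)\xi\|$ to produce the new spherical coordinate still yields a smooth function of $(x,\xi,r)$. Everything else (bijectivity, commutativity, uniqueness, functoriality) is a formal consequence of the naturality of the construction $\phi\mapsto \phi^\beta$.
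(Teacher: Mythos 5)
Your proof is correct and takes essentially the same route as the paper's: define $\phi^\beta$ piecewise on $(M_1\setminus P_1)\sqcup \SS(N^{M_1}_+P_1)$ (by $\phi$ on the complement and by the descent of $d\phi$ on the spherical normal bundle), note that functoriality is the chain rule, and then check smoothness in the $\kappa$-chart. The one genuine difference is that the paper discharges the smoothness-through-$r=0$ step by citing Lemma~2.2 of \cite{ACN}, whereas you carry out the local computation yourself via Hadamard's factorization $b(x,y)=B(x,y)y$ with $B(x,0)$ invertible, obtaining the explicit formula $\phi^\beta(x,\xi,r)=\bigl(a(x,r\xi),\,B(x,r\xi)\xi/\|B(x,r\xi)\xi\|,\,r\|B(x,r\xi)\xi\|\bigr)$; this is in effect a self-contained proof of the cited lemma, which is where the mathematical content actually lies. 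One observation in your write-up is worth retaining: bijectivity plus commutativity of the square alone do \emph{not} single out $\phi^\beta$ on the fibers of $\SS(N^{M_1}_+P_1)$, since any fiber-preserving bijection over $\phi|_{P_1}$ would satisfy both. The uniqueness claimed in the lemma really needs the additional requirement of continuity (equivalently, the $d\phi$-normalization, or density of $M_1\setminus P_1$ in the blow-up together with Hausdorffness of the target); the paper leaves this implicit in the phrase ``follows from the definition of the blow-up,'' but you correctly flag it.
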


\begin{proof}
The existence, uniqueness, and the functorial character of
$\phi^{\beta}$ follows from the definition of the blow-up. The fact
that $\phi^\beta$ is smooth if $M_i$ are open subsets of the model
space $\RR^{n}_{k}$ is the content of Lemma 2.2 of \cite{ACN}.
\end{proof}

Recall the maps $\phi^\beta$ of Lemma \ref{lemma.ACN}.

\begin{lemma}\label{lemma.atlas}
Let $\maA = \{(U_a, \phi_a) \, \vert \ a \in A\}$ be an
\emph{atlas} on a manifold with corners $M$ (Definition
\ref{def.atlas}). Let $P \subset M$ be a closed \psbmanifold{} and
$\beta = \beta_{M, P} : [M:P] \to M$ be the blow-down map. We endow
$[M: P]$ with the smallest topology that makes all the maps
$\phi_a^\beta$, $a \in A$, continuous ($\phi_a^\beta$ is defined on
$\beta^{-1}(U_a)$, see Lemma \ref{lemma.ACN}). Then
\begin{equation*}
  \beta^*(\maA) \ede \{(\beta^{-1}(U_a), \phi_a^\beta) \, \vert \ a
  \in A\}
\end{equation*}
is an atlas on $[M:P]$. If we take another
atlas $\maA'$ of~$M$ that is compatible with $\maA$, then
$\beta^*(\maA)$ and $\beta^*( \maA')$ will be compatible atlases
on $[M:P]$.
\end{lemma}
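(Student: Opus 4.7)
The plan is to derive the entire atlas structure from the functoriality of the blow-up established in Lemma \ref{lemma.ACN}, together with the local model description in Equations \eqref{eq.Blow.up}--\eqref{eq.Blow.up3}. The first step will be to identify the target of each candidate chart $\phi_a^\beta$ with an open subset of a model blow-up. After a reordering of coordinates if necessary, $\phi_a$ identifies $U_a\cap P$ with a standard coordinate slice $\RR^{n-n'}_l\times\{0\}\cap \phi_a(U_a)$. Lemma \ref{lemma.ACN} applied to this local diffeomorphism produces a bijection $\phi_a^\beta$ from $\beta^{-1}(U_a)$ onto an open subset of the model blow-up $[\RR^{n-n'}_l\times\RR^{n'}_{l'}:\RR^{n-n'}_l\times\{0\}]$, which in turn carries a manifold-with-corners structure via the diffeomorphism $\kappa$ of Equation \eqref{eq.kappa}. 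We take these as candidate corner charts.

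The second step is to verify compatibility of any two such charts. For $(U_a,\phi_a),(U_b,\phi_b)\in\maA$ with $V := U_a\cap U_b\neq\emptyset$, the transition $\phi_b\circ \phi_a^{-1}$ is a diffeomorphism between open subsets of model spaces that carries $\phi_a(V\cap P)$ onto $\phi_b(V\cap P)$. By the functoriality asserted in Lemma \ref{lemma.ACN},
\[
\phi_b^\beta\circ (\phi_a^\beta)^{-1} \seq (\phi_b\circ \phi_a^{-1})^\beta \, ,
\]
and the last sentence of that lemma ensures this is a diffeomorphism of the relevant open subsets of the model blow-up. Hence the charts $\phi_a^\beta$ and $\phi_b^\beta$ are compatible in the sense of Definition \ref{def.chart}.

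The third step concerns topology and covering. The sets $\beta^{-1}(U_a)$ cover $[M:P]$ because the $U_a$ cover $M$. With the initial topology making every $\phi_a^\beta$ continuous, each $\phi_a^\beta$ becomes a homeomorphism onto an open subset of its model target: continuity is automatic, and continuity of the inverse follows because its composition with every $\phi_b^\beta$ is the blown-up transition $(\phi_b\circ \phi_a^{-1})^\beta$, which is already a homeomorphism by Step 2. Finally, the compatibility of $\beta^*(\maA)$ with $\beta^*(\maA')$ is immediate: if $\maA\cup \maA'$ is an atlas on $M$, then applying Step 2 to mixed pairs shows that $\beta^*(\maA)\cup \beta^*(\maA')$ is a compatible family of charts on $[M:P]$.

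The main obstacle I expect is the topological verification in Step 3: one needs to confirm that the subsets $\phi_a^\beta(\beta^{-1}(U_a\cap U_b))$ are genuinely open in the target model blow-up, so that $\phi_a^\beta$ maps onto an open set rather than merely a bijection. This reduces, via the formula $\phi_a^\beta=(\phi_a^\beta\circ(\phi_b^\beta)^{-1})\circ \phi_b^\beta$, to the diffeomorphism property of blown-up transitions—which is exactly what Lemma \ref{lemma.ACN} grants. Once this openness is in place, the remaining verifications are mechanical consequences of functoriality and the local model of Equation \eqref{eq.kappa}.
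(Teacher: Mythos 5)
Your proposal is correct and follows the paper's approach—which, in the source, is a single sentence citing Equation~\eqref{eq.Blow.up3} and Lemma~\ref{lemma.ACN}. Your three-step expansion (identify the target of $\phi_a^\beta$ with a model blow-up via $\kappa$, use the functoriality $(\phi_b\circ\phi_a^{-1})^\beta = \phi_b^\beta\circ(\phi_a^\beta)^{-1}$ plus the last sentence of Lemma~\ref{lemma.ACN} for compatibility, then check covering and openness) is precisely the content being invoked, and your closing remark about reducing openness of $\phi_a^\beta(\beta^{-1}(U_a\cap U_b))$ to the diffeomorphism property of $(\phi_b\circ\phi_a^{-1})^\beta$ is the right observation. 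One small imprecision in Step~1: "a reordering of coordinates" alone does not turn an arbitrary chart into one adapted to $P$ (that is, one with $\phi_a(P\cap U_a)$ equal to a coordinate slice $L_I\cap\phi_a(U_a)$); the lemma implicitly relies on Lemma~\ref{lemma.ACN} (equivalently, on the reference to Lemma~2.2 of \cite{ACN}) to guarantee a well-defined smooth structure on $[\Omega_a : \phi_a(P\cap U_a)]$ for a general closed \psubmanifold{} of an open $\Omega_a\subset\RR^n_k$, not just for a linear slice. You could either state that one may refine $\maA$ to an atlas of charts adapted to $P$ (such charts exist near every point of $P$ by Definition~\ref{def.submanifold}) and then argue compatibility with the original $\maA$ via the same functoriality, or simply invoke the general form of Lemma~\ref{lemma.ACN}; either fixes the wording without changing the structure of your argument.
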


\begin{proof} This follows from Equation \eqref{eq.Blow.up3} and
  Lemma \ref{lemma.ACN}.
\end{proof}

Lemma \ref{lemma.atlas} thus yields the desired smooth structure on
$[M:P]$ that is moreover canonical (independent of any choices).

\begin{definition}\label{def.smooth.s}
Let $M$ be a manifold with corners and $P \subset M$ be a
closed \psbmanifold{}. We endow $[M:P]$ with the smooth structure defined by
the atlas $\beta^*(\maA)$ obtained from Lemma \ref{lemma.atlas},
for any atlas $\maA$ on~$M$.
\end{definition}

The following is a consequence of the definition.

\begin{corollary} \label{cor.dense}
With the notations of Definition \ref{def.smooth.s},
$M \smallsetminus P$ is dense in $[M : P]$.
\end{corollary}

This result remains true even if $P$ is a union of connected components
of~$M$, in which case $[M: P] = M \smallsetminus P$.
The smooth structure on $[M: P]$ is natural in the following strong
sense.

\begin{proposition}\label{prop.action}
With the notation of Lemma \ref{lemma.ACN}, we have that the map
$\phi^\beta$ is a diffeomorphism (in general, not just in the case of
open subsets of Euclidean spaces).
\end{proposition}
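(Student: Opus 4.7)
The plan is to reduce everything to the Euclidean case already granted by Lemma \ref{lemma.ACN} by working locally in adapted charts and exploiting functoriality. Smoothness is a local property, and $\phi^\beta$ is a bijection with inverse $(\phi^{-1})^\beta$ by the functoriality stated in Lemma \ref{lemma.ACN}; hence it suffices to show that $\phi^\beta$ is smooth on a neighborhood of every point $q \in [M_1:P_1]$, and then apply the same argument to $\phi^{-1}$.

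Let $p := \beta_{M_1,P_1}(q)$ and $p' := \phi(p)$. I would pick charts $(U_1,\psi_1)$ of $M_1$ around $p$ and $(U_2,\psi_2)$ of $M_2$ around $p'$ such that, as in Definition \ref{def.submanifold}, one has $\psi_i(P_i \cap U_i) = L_{I_i} \cap \psi_i(U_i)$ for some $I_i \subset \{1,\ldots,n\}$. After shrinking $U_1$, we may assume $\phi(U_1) \subseteq U_2$, so that the transition map $f := \psi_2 \circ \phi \circ \psi_1^{-1}$ is a diffeomorphism from $\psi_1(U_1)$ onto an open subset of $\psi_2(U_2)$ sending $L_{I_1} \cap \psi_1(U_1)$ onto $L_{I_2} \cap f(\psi_1(U_1))$. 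After composing with coordinate permutations of the type $\can$ from Equation \eqref{eq.def.can} to bring each $L_{I_i}$ into the canonical form $\RR^n_l \times \{0\} \subset \RR^n_l \times \RR^{n'}_{l'}$ appearing in Equation \eqref{eq.Blow.up}, the Euclidean case of Lemma \ref{lemma.ACN} yields that $f^\beta$ is a diffeomorphism between the corresponding open subsets of the blown-up model spaces.

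By the defining construction of $\beta^*(\maA)$ in Lemma \ref{lemma.atlas} and Definition \ref{def.smooth.s}, the lifted maps $\psi_i^\beta$ are charts of $[M_i:P_i]$, and functoriality of the lift from Lemma \ref{lemma.ACN} gives the identity
\begin{equation*}
    \phi^\beta \vert_{\beta_{M_1,P_1}^{-1}(U_1)} \seq (\psi_2^\beta)^{-1} \circ f^\beta \circ \psi_1^\beta \,.
\end{equation*}
The right-hand side is a composition of a chart of $[M_1:P_1]$ with a diffeomorphism of Euclidean models and the inverse of a chart of $[M_2:P_2]$, and so is smooth on a neighborhood of $q$. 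Applying the same reasoning to $\phi^{-1}$ shows that $(\phi^\beta)^{-1}=(\phi^{-1})^\beta$ is also smooth, so $\phi^\beta$ is a diffeomorphism.

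The main obstacle I anticipate is bookkeeping: ensuring that the local models around $p$ and $p'$ can be arranged so that the transition $f$ carries the standard linear \psubmanifold{} of the source chart to the standard linear \psubmanifold{} of the target chart in the normal form used to define the smooth structure on the blow-up. This is not a conceptual difficulty, but it does require the coordinate permutations $\can$ (which are diffeomorphisms of the model spaces, hence lift to diffeomorphisms of the blown-up models by the Euclidean case of Lemma \ref{lemma.ACN}) to absorb any mismatch; once this is handled, the rest reduces to invoking functoriality.
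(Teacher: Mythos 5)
Your argument is correct and follows essentially the same route as the paper: the key ingredients in both are the functoriality of $\phi\mapsto\phi^\beta$ from Lemma~\ref{lemma.ACN} and its Euclidean case, combined with the atlas construction of Lemma~\ref{lemma.atlas}. The paper simply packages your pointwise chart-by-chart reduction into the single observation that the pull-back of the atlas $\beta^*(\maA)$ along $\phi^\beta$ is (by functoriality) the lifted atlas $\beta^*(\phi^*\maA)$ on $[M_1:P_1]$, which is compatible with its defining smooth structure.
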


\begin{proof}
If $\maA$ is an atlas on $M_2$, then the pull-back of
$\beta^*(\maA)$ to $[M_1: P_1]$ is an atlas.
\end{proof}

The blow-up $[M:P]$ of a manifold with corners is thus again a
manifold with corners.

\subsection{Exploiting the local structure of the blow-up}
The local character of the definition of the smooth structure of the
blow-up $[M:P]$ of the manifold with corners $M$ along a
closed \psbmanifold{} $P$ means that most of the proofs involving blow-ups
can be conveniently treated by first treating the model case $P :=
\RR^n_k = \RR^n_k \times \{0\} \subset \RR^{n}_{k} \times
\RR^{n'}_{k'} = M$. To simplify notation, we shall often omit factors
of the form $\{0\}$ when there is no danger of confusion. This is the
case with the following results.

\subsubsection{The blow-down map is proper}
We shall need to prove that certain maps are closed.
This will be
conveniently done by proving that they are proper, since a proper map
between manifolds with corners is closed. In particular, we will show
that the blow-down map is proper.

Let $f : X \to Y$ be a continuous map between two Hausdorff
spaces. Recall that $f$ is called {\em proper} if $f^{-1}(K)$ is
compact for every compact subset $K \subset Y$. For instance, the map
$\beta$ of Equation \eqref{eq.Blow.down2} is immediately seen to be
proper.

\begin{corollary}\label{cor.beta.proper}
Let $P$ be a closed \psbmanifold{} of a manifold with corners $M$. The
blow-down map $\beta_{M, P} : [M:P] \to M$ is proper.
\end{corollary}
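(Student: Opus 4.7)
The plan is to reduce properness of $\beta$ to properness in the local model of Equation~\eqref{eq.Blow.down2}, and then globalize via a finite cover argument. This uses that $M$ is locally compact, as it is a paracompact Hausdorff space locally modeled on $\RR^n_k$.

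First, I would verify directly that in the local model the map
\[
\beta : \RR^{n}_l \times \SS^{n'-1}_{l'} \times [0,\infty) \to \RR^{n}_{l} \times \RR^{n'}_{l'}\,, \quad (x,\xi,r) \mapsto (x, r\xi),
\]
is proper. For any compact $K$ in the target, the preimage $\beta^{-1}(K)$ is closed by continuity, the $x$-coordinate projects into a compact subset of $\RR^n_l$, $r=\|r\xi\|$ is bounded by $\sup_{(x,y)\in K}\|y\|$, and $\xi$ already lies in the compact sphere $\SS^{n'-1}_{l'}$; hence $\beta^{-1}(K)$ is compact. Restricting, the same argument works for $\beta:\beta^{-1}(U)\to U$ on any open $U\subset \RR^{n}_{l}\times\RR^{n'}_{l'}$, cf. Equation~\eqref{eq.Blow.up3}.

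To globalize, let $K\subset M$ be compact. For each $x\in K$ I would choose a chart $(U_x,\phi_x)$ at $x$ adapted to $P$ in the model form $\phi_x(P\cap U_x)=L_I\cap\phi_x(U_x)$ (which exists by the very definition of \psubmanifold{}), together with an open neighborhood $V_x\ni x$ whose closure $\overline{V_x}\subset U_x$ is compact. By Lemma~\ref{lemma.ACN} and the construction of $\beta^{\ast}(\maA)$ in Lemma~\ref{lemma.atlas}, the restriction of $\beta$ over $U_x$ is diffeomorphically conjugate, via $\phi_x^{\beta}$, to the local model blow-down just shown to be proper; in particular $\beta^{-1}(\overline{V_x})$ is compact. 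Extracting a finite subcover $V_{x_1},\ldots,V_{x_m}$ of $K$, we see that $\beta^{-1}(K)$ is a closed subset of the compact set $\bigcup_{i=1}^m\beta^{-1}(\overline{V_{x_i}})$, hence compact, so $\beta$ is proper.

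I do not anticipate any serious obstacle: the only mildly delicate point is arranging that the charts covering $K$ are adapted to $P$ (so that the local model applies verbatim), after which the argument is routine bookkeeping.
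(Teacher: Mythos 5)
Your argument is correct and takes essentially the same route as the paper: the paper reduces to the local model of Equation~\eqref{eq.Blow.down2} by invoking Lemma~\ref{lemma.loc.prop} (properness is a local property over a locally compact base), whereas you re-derive that localization inline via the finite-cover-of-$K$ argument. Both proofs then rely on the elementary observation that the model blow-down $(x,\xi,r)\mapsto(x,r\xi)$ on $\RR^n_l\times\SS^{n'-1}_{l'}\times[0,\infty)$ is proper, exactly as you verify.
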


\begin{proof} 
Using Lemma~\ref{lemma.loc.prop} from the Appendix, we see that we can
treat the problem in local coordinates. Then, in local coordinates,
the blow-down map is given by Equation~\eqref{eq.Blow.down2}, which is
a proper map, as we have already pointed out.
\end{proof}

\subsubsection{Blow-ups and products}
We have a simple, convenient behavior of the blow-up with respect to
products. 
\begin{lemma}\label{lemma.product} 
Let $M$ and $M_1$ be two manifolds with corners and $P$ be a closed
\psbmanifold{} of~$M$. Then $P \times M_1$ is a closed
\psbmanifold{} of $M \times M_1$ and there exists a canonical
diffeomorphism $[M \times M_1 : P \times M_1] \simeq [M: P] \times M_1$
such that the following diagram commutes:
\begin{equation}\label{eq.CD}
\begin{CD}
  [M \times M_1 : P \times M_1] @>{ \simeq }>>[M: P] \times M_1 \\
  @V{ \beta_{M \times M_1, P \times M_1}}VV
  @VV{\beta_{M,P} \times \id}V \\
   M \times M_1 @>{ \id }>> M \times M_1 \,.
\end{CD}
\end{equation}
\end{lemma}

\begin{proof}
Since the result is a local one and $P$ is a closed \psbmanifold{} of~$M$,
it is enough to treat the case
\begin{align*}	
     M & \ede \RR^{m}_{k_m} \times \RR^{p}_{k_p} \\
     P\, & \ede \{ 0_{\RR^m} \} \times \RR^{p}_{k_p} \subset M \\
     M_1 & \ede \RR^{l}_{k_l} \,.
\end{align*}
In this local treatment, we will again write $\cong$ to stress that a given
diffeomorphism is given by a permutation of coordinates, as in Equation \eqref{eq.def.can}.

With this choice, we see that $P \times M_1$ is a closed \psbmanifold{} of $M
\times M_1$. We have natural diffeomorphisms with the first one
  being obtained from the definition of the blow-up, Definition
  \ref{def.blow-up}, and the last being induced by suitable
  permutations of coordinates
\begin{multline*}
  [M \times M_1: P \times M_1 ] \ = \ [ \RR^{m}_{k_m} \times
      \RR^{p}_{k_p} \times \RR^{l}_{k_l} : \{ 0_{\RR^m} \} \times
      \RR^{p}_{k_p} \times \RR^{l}_{k_l} ]  \\
  \ = \ \SS^{m-1}_{k_m} \times \RR^{p}_{k_p} \times
    \RR^{l}_{k_l} \sqcup \left((\RR^{m}_{k_m} \times
      \RR^{p}_{k_p} \times \RR^{l}_{k_l} ) \setminus (\{ 0_{\RR^m} \}
    \times \RR^{p}_{k_p} \times \RR^{l}_{k_l} ) \right)  \\
  \ \simeq \
  \SS^{m-1}_{k_m} \times [0, \infty) \times \RR^{p}_{k_p} \times \RR^{l}_{k_l} 
  \ \cong\  \SS^{m-1}_{k_m} \times \RR^{p+l+1}_{k_p+k_l+1}
\end{multline*}
and 
\begin{multline*}
  [M:P] \ = \ [\RR^{m}_{k_m}\times \RR^{p}_{k_p}: \{ 0_{\RR^m} \}
    \times \RR^p_{k_p}]\\
  \ = \ \SS^{m-1}_{k_m} \times \RR^p_{k_p} \sqcup \left (
  (\RR^{m}_{k_m}\times \RR^{p}_{k_p}) \setminus (\{ 0_{\RR^m} \}
  \times \RR^{p}_{k_p}) \right ) \\
  \simeq \ \SS^{m-1}_{k_m} \times [0, \infty) \times \RR^p_{k_p}
    \ \cong\ \SS^{m-1}_{k_m} \times \RR^{p+1}_{k_p+1} \,.
\end{multline*}
The desired diffeomorphism $[M \times M_1 : P \times M_1]
\stackrel{\simeq}{\longrightarrow}[M : P ] \times M_1$ is then induced
by the above diffeomorphisms and the canonical permutation of
coordinates diffeomorphism $\SS^{m-1}_{k_m} \times \RR^{p+1}_{k_p+1}
\times \RR^{l}_{k_l} \stackrel{\cong}{\longrightarrow} \SS^{m-1}_{k_m}
\times \RR^{p+l+1}_{k_p+k_l+1}$ of Equation \eqref{eq.def.can}.
\end{proof}

The functoriality property of Lemma \ref{lemma.ACN} and Lemma \ref{lemma.product}
then gives the following result.

\begin{proposition}\label{prop.group.action}
Let $G$ be a Lie group acting smoothly on~$M$ (that is, such that the 
action map $G\times M\to M$ is smooth). Let $P \subset M$ be a closed \psbmanifold{} such
that $g(P) = P$ for all $g \in G$. Then the action of $G$  lifts to a
smooth action of $G$ on $[M:P]$.
\end{proposition}

\begin{proof}
We extend the action map to a diffeomorphism
\begin{equation*}
    a : G\times M \to G \times M\,, \quad (g,x)\mapsto (g,gx)\,.
\end{equation*}  
Note that $G\times P$ is a closed \psbmanifold{} of $G\times M$. Thus our functorial
property implies that this map lifts to a map 
\begin{equation*}
    [G \times M : G \times P] \to [G\times M:G\times P] \, .
\end{equation*}
Using the natural diffeomorphism  $[G\times M: G\times P] \simeq G \times [M:P]$
of Lemma \ref{lemma.product}, we obtain a smooth map
$$\hat a: G\times [M:P]\to G\times [M:P].$$
Thus the second component $\hat a_2$ defines a smooth map  $G\times [M:P]\to [M:P]$. 
It satisfies all axioms of an action since it is an extension of the action map for 
the action of $G$ on $M\setminus P$ and since $M\setminus P$ is dense in $[M:P]$.
\end{proof}

\subsection{Cleanly intersecting families and liftings}
In order to avoid complications, we shall consider
the blow-up with respect to ``cleanly intersecting families''
of closed \psbmanifolds{} and ``clean semilattices,'' some concepts
that we recall below.

\subsubsection{Clean intersections}
We continue to exploit the local structure of the blow-up. Recall the
following standard definition.

\begin{definition}
\label{def.clean.intersection}
Let $M$ be a manifold with corners and $X_1, X_2, \ldots, X_k \subset
M$ be \psbmanifolds. We shall say that $X_1, X_2, \ldots, X_k$ {\em
  have a clean intersection} or that they {\em intersect
    cleanly} if
\begin{enumerate}[\kern5mm (i)]
  \item\label{def.clean.intersection.i} $Y := X_1 \cap X_2 \cap \ldots
    \cap X_k $ is a \psbmanifold{} of~$M$ (possibly empty),
  \item\label{def.clean.intersection.ii} for all $x \in Y$, $T_x Y = T_x X_1 \cap T_x X_2 \cap \ldots
      \cap T_x X_k$.
\end{enumerate}
\end{definition}

We consider conditions~\eqref{def.clean.intersection.i} and~\eqref{def.clean.intersection.ii} of  Definition~\ref{def.clean.intersection} to be automatically satisfied if $Y :=
X_1 \cap X_2 \cap \ldots \cap X_k = \emptyset$. Similar conditions
appear in \cite{ACN}, Definition~2.7, where they were used to define a
\emph{weakly transversal family} of connected submanifolds with
corners. We shall need also the notion of a ``cleanly intersecting
family'' (Definition~\ref{def.clean.lattice}), which roughly states
that every subfamily intersects cleanly.

\begin{lemma}\label{lem.clean.inters.p-subm}
Let $P$ and $Q$ be \psbmanifolds\ of~$M$ intersecting
cleanly. Then $P \cap Q$ is a \psbmanifold\ of $Q$ (and also for $P$). 
\end{lemma}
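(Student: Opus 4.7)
The plan is essentially to chain together two facts already established earlier in the paper, so the proof will be a very short one. First, I would unpack the hypothesis: by condition (\ref{def.clean.intersection.i}) of Definition \ref{def.clean.intersection}, saying that $P$ and $Q$ intersect cleanly already includes the statement that $Y := P \cap Q$ is a \psubmanifold{} of $M$. So the clean-intersection hypothesis delivers, for free, that $P \cap Q$ is a \psubmanifold{} of the ambient manifold $M$.

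Next, I would feed this into Lemma \ref{lemma.submanifolds}\eqref{subm.ii}. We have the chain of inclusions $P \cap Q \subset Q \subset M$, and both $P \cap Q$ and $Q$ are \psubmanifolds{} of $M$ (the former by the previous paragraph, the latter by hypothesis). Applying Lemma \ref{lemma.submanifolds}\eqref{subm.ii} with the role of $P$ there played by $P \cap Q$ here and the role of $Q$ there played by $Q$ here, we conclude that $P \cap Q$ is a \psubmanifold{} of $Q$, which is exactly the statement of the lemma.

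Notice that the tangent-space condition (ii) of Definition \ref{def.clean.intersection} plays no role in the proof: only the \psubmanifold{} property of the intersection, which is built into condition (i) of the clean intersection definition, is used. There is no real obstacle to overcome; the lemma is essentially a bookkeeping step that records a useful consequence of the definitions for later use (for instance, in the iterated blow-up constructions of Section \ref{sec4}).
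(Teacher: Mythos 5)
Your proof is correct and matches the paper's own argument exactly: both extract the \psubmanifold{} property of $P \cap Q$ in $M$ from condition (i) of Definition~\ref{def.clean.intersection} and then apply Lemma~\ref{lemma.submanifolds}~\eqref{subm.ii} to the chain $P \cap Q \subset Q \subset M$. Nothing to add.
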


\begin{proof}
According Definition~\ref{def.clean.intersection}
\eqref{def.clean.intersection.i} $P\cap Q$ is a \psbmanifold\ of
$M$. Then Lemma~\ref{lemma.submanifolds} \eqref{subm.ii} states that
$P\cap Q$ is also a \psbmanifold{} of $Q$.
\end{proof}

\subsubsection{Liftings of \psbmanifolds{} to blow-ups}
We now consider the lifting of suitable \psbmanifolds{} in $M$ to $[M: P]$ as in 
\cite{KottkeMelrose, MelroseBook}. The local model for such lifts is given by the 
following lemma. 

\begin{lemma}\label{lemma.lin.lift}
Let $k' \le k''$ and $n' - k'\le n'' - k''$, so that the canonical
first components inclusion $j_0 : \RR^{n'}_{k'} \to \RR^{n''}_{k''}$ 
of Equation \eqref{eq.def.can0} is defined, and let $j := (\id \times j_0 , 0) : 
\RR^{n}_{k} \times \RR^{n'}_{k'}  \to \RR^{n}_{k} \times \RR^{n''}_{k''} \times \RR_\ell^p$.
Then there is a unique smooth map $j^{\beta}$ such that the diagram
\begin{equation*}   
\begin{CD}
   [\RR^{n}_{k} \times \RR^{n'}_{k'} : \RR^n_k \times \{0\}]
   @>{j^{\beta}}>> [ \RR^{n}_{k} \times \RR^{n''}_{k''} \times \RR_\ell^p : \RR^n_k \times
     \{0\} \times \RR_\ell^p ] \\
   @V{ \beta_{\RR^{n}_{k} \times \RR^{n'}_{k'} , \RR^n_k \times
      \{0\}}}VV @VV{\beta_{\RR^{n}_{k} \times \RR^{n''}_{k''} \times \RR_\ell^p ,
      \RR^n_k \times \{0\} \times \RR_\ell^p
      }}V \\
   \RR^{n}_{k} \times \RR^{n'}_{k'} @>{j}>>\RR^{n}_{k} \times
   \RR^{n''}_{k''} \times \RR_\ell^p\,.
\end{CD}
\end{equation*}
commutes ($0 \le \ell \le p$). Moreover, $j^\beta$ is a diffeomorphism onto its image,
which is a closed \psbmanifold{} of $[ \RR^{n}_{k} \times \RR^{n''}_{k''} : \RR^n_k \times
     \{0\}]$.
\end{lemma}

The definition of the map $j^\beta$ extends the definition
in Lemma~\ref{lemma.ACN}.


\begin{proof} 
If we can prove the result for $p = \ell = 0$, then Lemma~\ref{lemma.product}
will give the result in general. 
Clearly, $j_0$ is a particular case of the maps $j$ (more precisely the case $n=k=0$), but we will treat it first. Let us first notice that
in the statement, we have not specified to which sets the various neutral 
elements belong. For instance, the desired map $j_0^\beta$
acts as $j_0^\beta : [\RR^{n'}_{k'} : \{0_{\RR^{n'}} \}] \to [\RR^{n''}_{k''} : 
\{0_{ \RR^{n''} } \}]$. However, in what follows, we will drop the indices of 
the neutral elements, as there is no danger of confusion. To define
$j_0^\beta$ and to obtain its properties, we shall use the definition of the
blow-up in local coordinates and the setting of Example \ref{ex.first.comp} to define
\begin{equation*} 
  j_0^\beta : [\RR^{n'}_{k'} :\{0\}] \simeq \SS^{n'-1}_{k'} \times [0, \infty)
    \stackrel{i\times \id}{-\!\!-\!\!\!\longrightarrow}\SS^{n''-1}_{k''} \times
             [0, \infty) \simeq [ \RR^{n''}_{k''} : \{0\}] \,,
\end{equation*}
where $i:\SS^{n'-1}_{k'} \to \SS^{n''-1}_{k''}$ is the restriction of $j_0$.
With this definition, we obtain that $j_0^\beta$ fits into the commutative
diagram
\begin{equation*}  
\begin{CD}
   [\RR^{n'}_{k'} : \{0\}] @>{j^\beta_0}>> [\RR^{n''}_{k''} : \{0\}] \\
  @V{ \beta_{\RR^{n'}_{k'}, \{0\}}}VV @VV{\beta_{\RR^{n''}_{k''},
      \{0\}}}V \\
 \RR^{n'}_{k'} @>{j_0}>>\RR^{n''}_{k''}\,,
\end{CD}
\end{equation*}
As in Example \ref{ex.first.comp}, $j_0^\beta$ is seen to be a diffeomorphism onto 
its image, which is a closed \psbmanifold.

Coming back to the general case, let $\id:\RR^{n}_{k}\to \RR^{n}_{k}$ be the identity 
map. We then let $j^{\beta}$ be the composition of $\id \times j_0^\beta : 
\RR^{n}_{k} \times [\RR^{n'}_{k'} :\{0\}] \to \RR^{n}_{k} \times [ \RR^{n''}_{k''} : \{0\}]$ with
the canonical diffeomorphisms $\RR^{n}_{k} \times [\RR^{n'}_{k'} :\{0\}] 
\simeq  [\RR^{n}_{k} \times \RR^{n'}_{k'} : \RR^{n}_{k} \times\{0\}]$
and $\RR^{n}_{k} \times [ \RR^{n''}_{k''} : \{0\}] \simeq  [ \RR^{n}_{k} \times \RR^{n''}_{k''} : 
\RR^{n}_{k} \times \{0\}]$ of Lemma~\ref{lemma.product}.
Then $j^\beta$ is smooth,
by its definition. Lemma~\ref{lemma.product} also gives the
commutativity of the diagram and proves that the
image of $j^\beta$ is a \psbmanifold, and hence that $j^\beta$ is
a diffeomorphism onto its image.
\end{proof}

We have the following result on the blow-up of \psbmanifolds, due, in
part, to Melrose \cite[Chapter 5, Section 7]{MelroseBook}. A proof in
a slightly less general setting can be found also in Proposition~2.4
of \cite{ACN}. For a \psbmanifold{} $P \subset M$, recall the
definition of $\SS(N^M_+P)$, the {\em inward pointing normal bundle of
  $P$ in $M$} from Definition \ref{def.NPM}.

\begin{proposition}\label{prop.beta.m1}
Let $P$ and $Q$ be closed \psbmanifolds\ of~$M$ intersecting
cleanly. Let $j : Q \to M$ be the inclusion. There exists a unique continuous map
\begin{equation*}
  j^{\beta} : [Q : P \cap Q] \to [ M : P ] \,
\end{equation*}
\begin{equation*}
\end{equation*}
such that $\beta_{M, P} \circ j^\beta = j \circ \beta_{Q, P \cap Q}$.
The map $j^\beta$ is an injective immersion and its image is a closed \psbmanifold. Moreover
\begin{equation*}
   \overline{\beta_{M,P}^{-1}(Q \smallsetminus P)} \seq j^\beta([Q: P \cap Q]).
\end{equation*}
\end{proposition}

In the following we will identify $[Q : P \cap Q]$ with  $j^{\beta} 
([Q : P \cap Q])$.

\begin{proof} 
Let $P_0 := P \cap Q$, to simplify the notation. To define the map
\begin{equation*}
  j^{\beta} : [Q : P_0] \ede (Q \smallsetminus P_0)
  \sqcup \SS(N^Q_+ P_0 ) \to (M\smallsetminus P) \sqcup
  \SS(N^M_+P) \, =: \, [ M : P ]\,,
\end{equation*}
it is enough to define it on $Q \smallsetminus P_0$ and on $\SS( N^Q_+ P_0 )$.
First, on $Q \smallsetminus P_0$, we let
$j^\beta$ to be the inclusion $Q \smallsetminus P_0  := Q \smallsetminus (P \cap Q) 
\to M \smallsetminus P$. 
The inclusion of $Q$ into $M$ also induces an inclusion $TQ
\to TM$, extending the inclusion $TP_0 \to T P$. Since
$T P_0 = TP \cap TQ$, by the assumption that $P$ and $Q$ intersect
cleanly, we can pass to quotients to obtain an
vector bundle map
\begin{equation*}
  \psi : N^Q P_0 \ede TQ|_{P_0}/TP_0 \seq TQ|_{P_0}/(TP\cap TQ) \, 
  \to \, TM|_P/TP \, =: \, N^M P\,,
\end{equation*}
which is injective, immersive, and a homeomorphism onto its image,
which is, moreover, a closed subset of $N^M P$. Next, the map $\psi$ restricts to an embedding 
$\SS(N^Q_+ P) \to \SS(N^M_+P)$, which defines $j^\beta$ on $\SS(N^Q_+ P)$. This 
completes the definition of $j^\beta$. This definition shows right away that 
$j^\beta$ is injective and that it satisfies the relation $\beta_{M, P} \circ j^\beta 
= j \circ \beta_{Q, P_0}$.

We next want to show that $j^\beta$ is smooth and that $j^\beta([Q: P \cap Q])$ is a  
\psbmanifold{} of $[M: P]$. The locality of the blow-up and from Lemma \ref{lemma.product} it follows 
that it is sufficient to prove this in the case
$Q := \RR^{n}_{k} \times \RR^{n'}_{k'} \times \{0\} \times \{0\}
\subset M := \RR^{n}_{k} \times \RR^{n'}_{k'} \times \RR^{n''}_{k''} \times \RR_\ell^p$
and $P := \RR^{n}_{k} \times \{0\} \times  \{0\} \times \RR_\ell^p$. Lemma \ref{lemma.lin.lift} 
(a standard calculation in local coordinates, but note that $n''$ in that
Lemma is $n' + n''$ in our definition) then gives the desired statement
(that $j^\beta$ is smooth and that $j^\beta([Q: P_0])$ is a \psbmanifold,
where, we recall, $P_0 = P \cap Q$).
To prove that $j^\beta([Q : P_0])$ is closed in $[M: P]$, we can use again Lemma \ref{lemma.product}.
Indeed, let $x_n \in j^\beta([Q : P_0])$ be a sequence convergent to some
$y \in [M: P]$. We want to show that $y \in j^\beta([Q : P_0])$.
Let $z := \beta_{M, P}(y)$, which is the limit of $\beta_{M, P}(x_n)$.
Then the result follows using local coordinates around $z$ compatible with
Lemma \ref{lemma.lin.lift}.


We now come back to the general case. Since $j^\beta$ is smooth, it is also 
an immersion, by its definition, since we have already
noticed that it is an immersion on the two sets defining $[Q : P \cap Q]$.
(An alternative argument would be to use the local description 
of the blow-up in terms of half-spaces as in \cite{ACN}.) It follows that
$j^\beta$ is a diffeomorphism onto its image. We also obtain that
$j^\beta$ is unique, since it is continuous (even differentiable!) 
and since $Q \smallsetminus (P \cap Q)$ is dense in $[Q : P \cap Q]$, 
by Corollary \ref{cor.dense}. The equality $\overline{\beta_{M,P}^{-1}
(Q \smallsetminus P)} \seq j^\beta([Q: P \cap Q])$ follows since the right
hand side is a closed set containing $Q \smallsetminus (P \cap Q)$.
\end{proof}

\begin{definition}\label{def.pull-back}
Let $P$ be a closed \psbmanifold{} of~$M$ and $Q$ be a closed subset of~$M$.
The {\em lifting} $\beta_{M, P}^* (Q)$ of $Q$ in $[M:P]$ is defined by
\begin{equation*}
  \beta_{M, P}^* (Q) \ede \overline{\beta_{M,P}^{-1}(Q \smallsetminus P)}\,.
\end{equation*}
(The closure is in $[M: P]$.)
\end{definition}

\begin{remark}\label{betaMPQremark}
Of course, in general, $\beta_{M, P}^* (Q)$ will not be a 
submanifold of some sort (of $[M: P]$), even if $Q$ is one. 
However, if $Q$ is a closed \psbmanifold{} of~$M$ and $P$ and $Q$
intersect cleanly, then $\beta_{M, P}^* (Q) \simeq j^\beta \big ( [Q: P \cap Q] \big )$
and will hence be a 
\psbmanifold{} of~$M$, by Proposition \ref{prop.beta.m1}.
Also, it should be pointed out that, if $Q\subset P$, the definition 
above of $\beta_{M, P}^* (Q)$ differs 
from Melrose's one in \cite[Chapter~5, Section~7]{MelroseBook}. 
More precisely, if $Q \subset P$, our definition is such that 
$\beta_{M,P}^*(Q) = \emptyset$, whereas it is $\beta_{M,P}^{-1}(Q)$ 
in Melrose's unpublished book. In part for this reason, we will avoid 
considering the case $Q \subset P$
when dealing with $\beta_{M,P}^*(Q)$. A first advantage of our definition is
that it preserves the inclusions. Another advantage of our
definition is that it is local in the sense that, for any open subset 
$U\subset M$, we have
\begin{equation*}
  \beta_{U,P\cap U}^*(Q\cap U)  \seq \beta_{M, P}^* (Q)\cap  
  \beta_{M, P}^{-1} (U)\,.
\end{equation*}
Moreover, in Melrose's definition, locality may fail if $Q$ is not connected.  
However, with our definition, in general $\bigl[[M: Q] : P\bigr] \neq \bigl[[M: P] : Q\bigr]$
if $Q \subsetneq P$.
\end{remark}

\section{The iterated and the graph blow-ups}\label{sec.graph.blowup}

We introduce, study, and compare 
in this section two types of blow-ups with respect 
to more than one submanifold: the {\em iterated blow-up} and the 
{\em graph blow-up}. The graph blow-up $\bl{M: \maP}$ 
has the advantage that it is defined in great generality and is
obviously independent of the order on the family of closed \psbmanifolds{}
$\maP$, up to an isomorphism. However, it is not clear whether
it has a smooth structure. The iterated blow-up, on the other
hand, comes with a smooth structure, but may not be always
defined and, at least in our approach, depends on the order
on the elements of the family $\maP$.

\subsection{Definition of the iterated blow-up}
Recall the definition of the lifting $\beta^* (Q) = \beta_{M, P}^*(Q)
:= \overline{Q \smallsetminus P} \subset [M : P]$ (closure in $[M:
  P]$), Definition \ref{def.pull-back}. We fix a manifold with
corners $M$. We now introduce the {\em iterated version of the
  blow-up}; it is different from the one in \cite{Kottke-Lin, MelroseBook}
  since we are using a different type of pull-back $\beta^*$. 

\begin{definition}\label{def.iterated.bu}
Let $\maP := (P_i)_{i=1}^k$, $P_i \subset M$, be a $k$-tuple of subsets of~$M$ with $P_1$ a closed \psbmanifold{} of~$M$, 
let $\beta_1 \ede \beta_{M, P_1} : [M:P_1] \to M$ be the associated blow-down map, and let
  $\maP' := \bigl(\beta_1^*(P_2), \beta_1^*(P_2), \ldots, \beta_1^*(P_k)\bigr)$,
  which is a $(k-1)$-tuple).
If $k = 1$ or if $\big[[M:P_1]: \maP' \big]$ is defined, then we define 
by induction on $k$ the \emph{iterated 
blow-up $[M : \maP]$ of~$M$ with respect to} or \emph{along}
$\maP$ by
\begin{equation*}
  [M: \maP] \seq [M : (P_i)_{i=1}^k] \ede
  \begin{cases}
    \ [M: P_1] & \mbox{ if }\ k = 1\,,\\
   \,  \big[[M:P_1]: \maP' \big]  &
   \mbox{ if }\ k > 1\,.
   \end{cases}
\end{equation*}
\end{definition}

Let us now discuss this definition.

\begin{remark} \label{rem.repetition} 
The following comments should be compared with the analogous ones
for the graph-blow-up (Remark \ref{rem.repetition2}).
\begin{enumerate}
\item We can remove from the $k$-tuple $\maP$ any element $P_j$ for which there 
exists a {\em larger} element $P_i$ preceding it (i.e., there exists $i < j$
with $P_i \supset P_j$) without affecting the resulting blown-up space. 
To see that, we notice that $\beta_{M, P}(Q) = \emptyset$ 
if $Q \subset P$ and $[Q: \emptyset] = Q$. 

\item Similarly, we can also remove any 
occurence of the empty set from the $k$-tuple without affecting the resulting
blow-up. In particular, removing or adding
repetitions to a $k$-tuple $\maP$ or removing or adding empty sets to
the $k$-tuple  will not affect whether $[M: \maP]$
is defined or not. (When removing repetitions, we always keep
the first element in the repeated sequence and remove only the ones following it.)

\item
Let us say that the $k$-tuple  $\maP$ is 
{\em degenerate} if it contains repetitions. Otherwise,
we shall say that it is {\em non-degenerate}. Thus, we can
always replace our $k$-tuple  $\maP$ with non-degenerate one without
changing the result of the iterated blow-up (in particular, not changing
the whether it is defined or not). 
We then notice that $(k-1)$-tuple $\maP'$ may be degenerate even if $\maP$ is 
non-degenerate. An example is provided by
the triple $\maP := (P, Q, P \cup Q)$, where $P$ and $Q$ are {\em disjoint}
\psbmanifolds{} of~$M$. 

\item We also notice that giving a 
$k$-tuple without repetitions of \psbmanifolds{} of~$M$ is equivalent to giving a totally 
ordered finite set (or family) of \psbmanifolds{} of~$M$ and that, when considering blow-ups, 
it is enough to consider only such totally ordered sets.

\item
Finally, in \cite{Kottke-Lin}, Kottke has introduced a {\em size order} as
one that satisfies
\begin{equation*}   
  P_i \subsetneq P_j \ \Rightarrow i < j
\end{equation*}
and used it to define his version of iterated blow-up.
(His version of iterated blow-up, however, is different from ours since he considered 
a different pull-back map $\beta^*$.) The above discussion shows that,
given a $k$-tuple $\maP$, there always exists another $j$-tuple $\maP_0$, $j \le k$,
that is size-ordered and such that $[M: \maP] = [M: \maP_0]$ (this statement
subsumes the fact that if one of these blow-ups is defined, then so is the other).
\end{enumerate}
\end{remark}


We shall also write
\begin{equation*}
   [M: \maP] \seq [M : (P_i)_{i=1}^k] \, =: \, [M : P_1,P_2, \ldots, P_k] \,,
\end{equation*}
and hence, using the pull-back by the map $\beta_1$, we have
\begin{equation*}
    [M : P_1,P_2, \ldots, P_k] \ede \bigl[[M:P_1] :
      \beta^{*}_1(P_{2}),\ldots, \beta^{*}_1(P_k)\bigr]\,.
\end{equation*}
We generalize this relation in the following remark.

\begin{remark}\label{rem.cond.ex}
Let us define by induction first $\beta_1 := \beta_{M, P_1} : [M: P_1] \to M$,
and $\gamma_1 := \beta^{*}_1$ and then, for $k \ge 2$, 
\begin{equation*}
  \beta_k \ede \beta_{[M,P_1,P_2, \ldots, P_{k-1}], \gamma_{k-1}^*(P_k) } : [M :
    P_1,P_2, \ldots, P_k] \to [M : P_1, P_2, \ldots, P_{k-1}]
\end{equation*}
and $\gamma_k := \beta_k^{*} \circ \gamma_{k-1} = \beta_k^{*} \circ ... \circ \beta_1^{*}$.
(In particular, $\beta_2 := \beta_{[M: P_1], \beta_1^*(P_2)}
: [[M:P_1] : \beta_1^*(P_2)] \to [M: P_1]$.)
We then have
\begin{align*}
  [M : P_1,P_2, \, \ldots \, , P_j]\ & =\ \bigl[[M:P_1] :
    \gamma_1(P_{2}),\ldots, \gamma_1(P_j)\bigr]\\
  &=\ \Bigl[\bigl[[M:P_1] : \gamma_1( P_{2})\bigr] : \gamma_2(P_3) , \, \ldots \, ,
    \gamma_{2}(P_j)\Bigr] \\
  & = \ \, \ \ldots \\
  & =\ [[\, \ldots [[[M:P_1] : \gamma_1( P_{2})]: \gamma_2(P_3) ]
      \ \ldots \ ] : \gamma_{j-1}(P_j)]\,.
\end{align*}
Note that $[M : P_1,P_2, \ldots, P_j]$ is always defined if $j
= 1$. Then the condition that the iterated blow-up $[M : P_1,P_2,
  \ldots, P_j]$ be defined can then be formulated by induction as
follows:
\begin{enumerate}[(i)]
  \item the iterated blow-up $[M: P_1, \ldots ,P_{j-1}]$ is defined,
    and
  \item the lift $\gamma_{j-1}(P_j)$ is a closed
    \psbmanifold{} of $[M: P_1, \ldots ,P_{j-1}]$.
\end{enumerate}
\end{remark}

%

\subsection{Disjoint submanifolds}
A particularly simple instance of the iterated blow-up is when
the corresponding manifolds are disjoint. Nevertheless, this is an
important case that will be discussed in this subsection. This
will also alow us to introduce and prove the first factorization
lemma, Lemma \ref{lemma.factorization1}. It also allows us to deal
with the case of a closed \psbmanifold{} that has connected components of 
different dimensions, which, we recall, is allowed. In particular, 
blowing-up with respect to such a manifold amounts, as we
will see, to blowing up successively with respect to each component.

We need first to discuss the gluing of open subsets. Let us assume
that we have two manifolds with corners $M_1$ and $M_2$ and that $U_i
\subset M_i$ are open subsets ($i=1,2$). Let us also assume that we
are given a diffeomorphism $\phi : U_1 \to U_2$. Then we define
\begin{equation}
  \begin{gathered}
    M_1 \cup_{\phi} M_2 \ede (M_1 \sqcup M_2)/\{ x \equiv \phi(x) \,
    \vert \ x \in U_1\}\,,\\
   M_1 \cup_{\id} M_2 \, =: \, M_1 \cup_{U_1} M_2\,, \ \ \mbox{ if
   }\ U_1 = U_2 \ \mbox{ and }\ \phi \mbox{ is the identity map } \id\,.
   \end{gathered}
\end{equation}
If $\phi$ is the identity, we shall call $M_1 \cup_{U_1} M_2$ the {\em
  union of $M_1$ and $M_2$ along $U_1 = U_2$.} Under favorable
circumstances (but not always), $M_1 \cup_{\phi} M_2$ is also a
manifold with corners.

We have the following simple lemma.

\begin{lemma}\label{lemma.union}
Let $M$ be a manifold with corners (and hence Hausdorff) and $M_i
\subset M$, $i=1,2$, be open subsets with $U:=M_1\cap M_2$ and
$M_1\cup M_2=M$. Then there exists a unique structure of a manifold
with corners on $M_1 \cup_U M_2$ that induces the given smooth
structures on $M_i$, and hence we have a canonical
diffeomorphism $M_1 \cup_U M_2 \simeq M$.
\end{lemma}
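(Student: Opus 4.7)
The plan is to construct an explicit diffeomorphism $\psi : M_1 \cup_U M_2 \to M$ induced by the two inclusions $j_i : M_i \hookrightarrow M$, and then to deduce uniqueness from the fact that smoothness on a manifold with corners is a local property.

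First I would produce $\psi$. Since $j_1$ and $j_2$ agree on the overlap $U$, the universal property of the quotient topology on the gluing $M_1 \cup_U M_2$ yields a continuous map $\psi : M_1 \cup_U M_2 \to M$. Surjectivity of $\psi$ follows from the assumption $M_1 \cup M_2 = M$. For injectivity, suppose $x_i \in M_i$ represent two classes with $\psi([x_1]) = \psi([x_2])$; then $x_1 = x_2$ in $M$, so this common point lies in $M_1 \cap M_2 = U$, and $x_1 \sim x_2$ under the gluing relation. Hence $\psi$ is a continuous bijection.

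Next I would verify that $\psi$ is a homeomorphism. Each $j_i : M_i \to M$ is an open embedding because $M_i$ is open in $M$. Denoting by $\pi : M_1 \sqcup M_2 \to M_1 \cup_U M_2$ the quotient map, $\pi(M_i)$ is open in $M_1 \cup_U M_2$, and the restriction of $\psi$ to $\pi(M_i)$ is a homeomorphism onto $M_i \subset M$. Since these two open sets cover both source and target, $\psi$ is a homeomorphism. I would then transport the smooth structure of $M$ through $\psi$, obtaining an atlas on $M_1 \cup_U M_2$ whose restriction to each $\pi(M_i)$ coincides (via $\pi$) with the original atlas on $M_i$, because the transition through $\psi$ is literally the inclusion $j_i$.

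For uniqueness, suppose $(M_1 \cup_U M_2, \mathcal{A}')$ is any manifold-with-corners structure whose restriction to each $\pi(M_i)$ is the given one on $M_i$. Then $\psi$ and $\psi^{-1}$ are smooth when restricted to the open cover $\{\pi(M_1), \pi(M_2)\}$, hence globally smooth, so $\psi$ is a diffeomorphism; this forces $\mathcal{A}'$ to agree with the structure pulled back from $M$. The main (minor) obstacle is being careful that the quotient topology on $M_1 \cup_U M_2$ actually agrees with the one transported from $M$; this uses essentially that the $M_i$ are \emph{open} in $M$, so that the $j_i$ are open embeddings rather than merely embeddings. Hausdorffness and paracompactness of $M_1 \cup_U M_2$ are then inherited from $M$ via the homeomorphism $\psi$.
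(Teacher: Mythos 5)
Your proof is correct and is essentially the same as the paper's, which is considerably terser: the paper simply observes that the union $\maA_1 \cup \maA_2$ of atlases for $M_1$ and $M_2$ is an atlas for $M$ and also for any compatible manifold-with-corners structure on $M_1 \cup_U M_2$, and concludes immediately. You spell out explicitly the topological identification $\psi: M_1 \cup_U M_2 \to M$ (via the universal property of the gluing and openness of the inclusions $j_i$) that the paper's proof leaves implicit, which is a reasonable and perhaps advisable elaboration, but not a different argument.
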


\begin{proof}
Let $\maA_i$ be an atlas for $M_i$. Then their union $\maA
  := \maA_1 \cup \maA_2$ is an atlas for~$M$. It is also an
  atlas for any manifold with corners structure on $M_1 \cup_U M_2$
  that induces the given one on each $M_i$. Hence the desired manifold
  with corners structure on $M_1 \cup_U M_2$ is given by the union
  $\maA := \maA_1 \cup \maA_2$.
\end{proof}

This lemma allows us to ``commute'' the blow-ups with
respect to disjoint closed \psbmanifolds. We thus have the following
simple result, see for example \cite{ACN, Kottke-Lin, MelroseBook}.

\begin{lemma}\label{lemma.disjoint}
Let $P$ and $Q$ be closed \psbmanifolds{} of~$M$ and $\beta_{M, Q} : [M:Q] \to M$ be 
the blow-down map. Assume that $P \cap Q = \emptyset$. Then $\beta_{M, Q}^*(P) :=
\beta_{M, Q}^{-1}(P) = P$ and the iterated blow-up $[M: Q, P] := \bigl[[M: Q]: P\bigr]$ is
defined and diffeomorphic to $([M:Q] \smallsetminus P) \sqcup_{M
  \smallsetminus (P \cup Q)} ([M:P] \smallsetminus Q)$, the union of
$[M:Q] \smallsetminus P$ and $[M:P] \smallsetminus Q$ along
$M\smallsetminus (P \cup Q)$, a common open subset. In particular, by symmetry, 
\begin{equation*}
  \bigl[[M:P]:Q\bigr]  \ \seq \ [M: P \cup Q] \ \seq \ \bigl[[M:Q]:P\bigr] \,,
\end{equation*}
with the same smooth structure.
\end{lemma}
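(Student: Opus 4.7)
The plan is to exploit the fact that $P\cap Q=\emptyset$ makes each blow-up invisible to the other, so everything reduces to set-theoretic bookkeeping plus the local description of the blow-up in Definition~\ref{def.blow-up}.

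First I would show that the hypotheses of $[[M:Q]:P]$ are satisfied. Since $Q$ is closed and disjoint from $P$, the set $U:=M\setminus Q$ is an open neighborhood of $P$. The blow-down map $\beta_{M,Q}$ restricts to the identity on $\beta_{M,Q}^{-1}(U)=U$, so $\beta_{M,Q}^{-1}(P)=P$, and through this identification $P$ inherits the structure of a closed \psubmanifold{} of $[M:Q]$. In particular, the iterated blow-up $[[M:Q]:P]$ is defined, and because $P\subset U\subset[M:Q]$ while $\mathbb{S}(N^M_+Q)\subset[M:Q]\setminus U$, the normal bundle is unchanged: $N^{[M:Q]}_+P=N^M_+P$.

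Next I would unwind Definition~\ref{def.blow-up} twice. On the one hand,
\begin{equation*}
   [[M:Q]:P] \seq \bigl([M:Q]\setminus P\bigr)\sqcup\mathbb{S}(N^M_+P)
   \seq \bigl(M\setminus(P\cup Q)\bigr)\sqcup\mathbb{S}(N^M_+Q)\sqcup\mathbb{S}(N^M_+P).
\end{equation*}
On the other hand, since $P\sqcup Q$ is a closed \psubmanifold{} of $M$ (possibly with components of different dimensions, which our conventions allow) and $N^M_+(P\sqcup Q)=N^M_+P\sqcup N^M_+Q$, Definition~\ref{def.blow-up} gives
\begin{equation*}
   [M:P\cup Q] \seq \bigl(M\setminus(P\cup Q)\bigr)\sqcup\mathbb{S}(N^M_+P)\sqcup\mathbb{S}(N^M_+Q).
\end{equation*}
The two right-hand sides coincide as sets, and the symmetric argument yields $[[M:P]:Q]=[M:P\cup Q]$ as well.

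For the gluing description, observe that $[M:Q]\setminus P$ and $[M:P]\setminus Q$ are both open subsets of (what will turn out to be) $[M:P\cup Q]$, with common overlap $M\setminus(P\cup Q)$, and their union covers all three strata above. I would then invoke Lemma~\ref{lemma.union} to conclude that the gluing carries a unique manifold-with-corners structure restricting to the given ones on each piece.

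Finally, to promote the bijections above to diffeomorphisms, I would appeal to the local nature of the smooth structure established in Lemma~\ref{lemma.atlas} and Definition~\ref{def.smooth.s}: near any point of $P$ one can work in a chart disjoint from $Q$, where $[[M:Q]:P]$ is literally $[M:P]$, and symmetrically near $Q$; on $M\setminus(P\cup Q)$ all three spaces are canonically $M\setminus(P\cup Q)$. Thus the charts of $\beta_{M,P\cup Q}^*(\maA)$ pull back to charts of the iterated blow-ups on these open covers. The main (mild) obstacle is the bookkeeping needed to check compatibility of atlases on the overlap, but this is immediate because the two descriptions literally agree on $M\setminus(P\cup Q)$, where no blowing up occurs.
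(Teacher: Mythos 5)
Your proof is correct and follows essentially the same route as the paper's: show the lift of one manifold after blowing up the other is unchanged, unwind Definition~\ref{def.blow-up} to obtain the symmetric set-theoretic description with three strata, and invoke Lemma~\ref{lemma.union} to obtain the manifold-with-corners structure. You are slightly more explicit than the paper in verifying that $P\cup Q$ is a closed \psubmanifold{} and that $[M:P\cup Q]$ gives the same underlying set, which the paper leaves implicit; the paper, on the other hand, phrases the lifting step via $\beta_{M,P}^*(Q)=Q$ rather than your $\beta_{M,Q}^{-1}(P)=P$, but the two are symmetric and equivalent.
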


The idea of the proof is to use the fact that the
construction of the blow-up is local in nature.

\begin{definition}
Suppose $f_i : X \to Y_i$, $i=1, \ldots , N$, are continuous maps. We
say $(f_1,\ldots,f_N): X \to \prod_{i=1}^N Y_i$, $x\mapsto
(f_1(x),\ldots,f_N(x))$ is proper in each component if each $f_i$ is
proper.
\end{definition}

We shall need the following ``factorization'' lemma.
The existence of the map $\zeta_{M, Q, P}$ of follows from Lemma \ref{lemma.disjoint} or
from a more general result in \cite{Kottke-Lin}.

\begin{lemma}[The first factorization Lemma] 
\label{lemma.factorization1}
Let us assume that $P$ and $Q$ are closed, {\em disjoint}
\psbmanifolds{} of~$M$. Then there exists a unique, smooth, natural
map
\begin{equation*}
  \zeta_{M, Q, P} : \bigl[[M: Q]: P\bigr] \ \to \ [M : P]
\end{equation*}  
that restricts to the identity on $M \smallsetminus (P \cup
  Q)$.
Moreover, the product map
\begin{equation*}
  \maB_{M, Q, P} \ede ( \zeta_{M, Q, P}, \beta_{[M:Q], P}) :
  \bigl[[M: Q]: P\bigr] \to [M : P] \times [M : Q]
\end{equation*}
is proper in each component. Its image is a \wsbmanifold{} in the
sense of Definition~\ref{def.weak-submanifold} and $\maB_{M, Q, P}$ is
a diffeomorphism onto its image.
\end{lemma}

Again, our main focus lies on the case $\dim (P) < \dim (M)$ and $\dim (Q) < \dim (M)$. 
The statements, however remain (trivially) true if one of these dimensions 
is equal to $\dim (M)$ (equivalently, if a connected component of~$M$ is contained in 
$P$ or $Q$). Then this component is removed both from $\bigl[[M: Q]: P\bigr]$ 
and from  $[M : P]$ or $[M : Q]$.

\begin{proof}
Lemma \ref{lemma.disjoint} states that $\bigl[[M: Q]: P \bigr] = [M :
  P \cup Q] = \bigl[[M: P]: Q\bigr]$. This gives $\zeta_{M, Q, P} =
\beta_{[M: P], Q}$.  In particular, $\zeta_{M, Q, P}$ is proper, by
Corollary \ref{cor.beta.proper}. The map $\beta_{[M: Q], P}$ is proper
by Corollary \ref{cor.beta.proper}. As $P$ and $Q$ are disjoint, at
each point, at least one component of $\maB_{M, Q, P} = ( \zeta_{M, Q,
  P}, \beta_{[M:Q], P})$ is a local diffeomorphism. Thus $\maB_{M, Q,
  P}$ is an immersion. As it is injective and proper, it is a
homeomorphism onto its image. Proposition~\ref{prop.smfd.crit} implies
that the image is thus a \wsbmanifold{} and that $\maB_{M, Q, P}$ is
a diffeomorphism onto its image.
\end{proof}

By iterating the above lemma, we obtain the following consequence.

\begin{corollary}\label{cor.map.to.graph}
Let $\maP := (\emptyset,P_1, P_2, \ldots, P_k)$ be a family of closed,
 {\em disjoint} \psbmanifolds{} of a manifold with
 corners $M$.
Then we have canonical diffeomorphisms inducing the
identity on $M \setminus\unionP := M \smallsetminus \bigcup_{j=1}^k P_j$ between the
usual blow-ups and the graph blow-up (Definitions \ref{def.blow-up}
and \ref{def.unres.blowup}):
\begin{equation*}
  [[\ldots [[ M: P_1] : P_2]: \, \ldots \, : P_{k-1} ]: P_k] \, \simeq
  \, [M: \bigcup_{j=1}^k P_j] \, \simeq \, \bl{ M: \maP }\,.
\end{equation*}
\end{corollary}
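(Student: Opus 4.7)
The strategy is to proceed by induction on $k$, using Lemma~\ref{lemma.disjoint} to identify the iterated blow-up with $[M: \bigcup_j P_j]$ independently of the chosen order, and then combining the blow-down maps associated to each $P_j$ into a proper map $X_k \to \prod_j [M:P_j]$ whose image is precisely $\bl{M: \maF}$.

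\emph{First isomorphism.} Setting $X_k := [[\ldots [M:P_1]: P_2]: \ldots : P_k]$, I would apply Lemma~\ref{lemma.disjoint} inductively: since $\maF$ is pairwise disjoint, at each successive stage the next \psubmanifold{} remains disjoint from (and equal to its own lift through) all previously performed blow-ups. Repeated use of the identification $\bigl[[M:Q]:P\bigr] = [M: P \cup Q]$ provided by Lemma~\ref{lemma.disjoint} then yields $X_k \simeq [M: \bigcup_{j=1}^k P_j]$, canonically and independently of the order of the $P_j$'s.

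\emph{Second isomorphism.} Using the order-independence, for every $j$ I rewrite
\[
  X_k \simeq \bigl[[M: P_j] : \bigcup_{i \neq j} P_i\bigr]\,,
\]
whose associated blow-down map $\pi_j : X_k \to [M: P_j]$ is proper by Corollary~\ref{cor.beta.proper} and restricts to the identity on $M \setminus \unionF$. The combined map
\[
  \Pi := (\pi_1, \ldots, \pi_k) : X_k \to \prod_{j=1}^k [M: P_j]
\]
thus restricts to the diagonal embedding $\delta$ on the dense open subset $M \setminus \unionF$. By continuity, the image of $\Pi$ is contained in $\bl{M: \maF}$. Conversely, $\Pi$ is proper in each component, so its image is closed in the product and contains $\delta(M \setminus \unionF)$, whose closure is by definition $\bl{M: \maF}$; hence the image of $\Pi$ equals $\bl{M: \maF}$.

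\emph{Diffeomorphism property and main obstacle.} Since the $P_j$ are pairwise disjoint, every point $p \in X_k$ lies over at most one $P_j$, and for this index (or for any index if $p \in M \setminus \unionF$) the corresponding $\pi_j$ is a local diffeomorphism near $p$. Thus $\Pi$ is an injective immersion; being proper in each component it is a homeomorphism onto its image, so Proposition~\ref{prop.smfd.crit} gives that $\Pi$ is a diffeomorphism onto its image, which coincides with $\bl{M: \maF}$. The principal subtlety is verifying that the image of $\Pi$ is \emph{exactly} $\bl{M: \maF}$ rather than merely contained in it; this is resolved by combining the density of $\delta(M \setminus \unionF)$ in $\bl{M: \maF}$ with the closedness of the image furnished by properness of the $\pi_j$.
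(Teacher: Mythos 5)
Your proposal is correct and uses the same essential ingredients as the paper; the paper's own proof is a one-line invocation of Lemmas~\ref{lemma.disjoint} and~\ref{lemma.cancellation2} by induction, whereas you unroll that induction into a direct construction of the map $\Pi := (\pi_1, \ldots, \pi_k)$ using the order-independence of the iterated blow-up to obtain each proper component $\pi_j$ simultaneously. Both arguments rest on the same pillars: the identification $[[M:P]:Q]=[M:P\cup Q]$ for disjoint $p$-submanifolds, properness of the blow-down maps (Corollary~\ref{cor.beta.proper}), injective immersion plus Proposition~\ref{prop.smfd.crit} to get a diffeomorphism onto the image, and the density/closedness argument to pin down that image as exactly $\bl{M:\maF}$. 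One small place where you should be more explicit is the injectivity of $\Pi$: being an immersion does not by itself imply injectivity, and ``each point lies over at most one $P_j$'' needs to be turned into an actual injectivity argument (e.g.\ compare images under the full blow-down $X_k\to M$ to pin down which $P_j$, if any, the two points lie over, then use that $\pi_j$ restricted to $\SS(N^M_+P_j)$ is the inclusion). With that detail filled in, your argument is a clean and complete substitute for the paper's terse induction.
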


\begin{proof}
This follows by induction from Lemmas \ref{lemma.disjoint} and
\ref{lemma.factorization1} since $P_j$ identifies naturally with a
\psbmanifold{} of $[[\ldots [[ M: P_1] : P_2]: \, \ldots \, : P_{j-2}
  ]: P_{j-1}]$.
\end{proof}

\subsection{Clean semilattices}\label{subsec.clean.semilattices}
We now investigate the iterated blow-up $[M: (P_i)_{i=1}^k]$ of a
manifold with corners $M$ with respect to a suitable cleanly
intersecting finite {\em totally ordered}
family of closed \psbmanifolds{} of~$M$. (If we arrange the elements of
this family according to the total order, then we obtain a $k$-tuple.)

\begin{definition}
\label{def.clean.family}
Let $\maF$ be a locally finite (unordered) set of closed 
\psbmanifolds{} of~$M$. We shall say that $\maF$ is a \emph{cleanly intersecting family}
if any $X_1, X_2, \ldots, X_j \in \maF$ have a clean intersection
(Definition~\ref{def.clean.intersection}).
\end{definition}

Recall that a \emph{meet semilattice} (or, simply, \emph{semilattice} in
what follows) is a partially ordered set $\maL$ such that, for every
two $x, y \in \maL$, there is a greatest common lower bound $x \cap y
\in \maL$ for $x$ and $y$. We shall consider only semilattices of
subsets of a given set where the order is given by $\subset$ and where
$x \cap y$ is the usual intersection of sets. We can now introduce the
semilattices we are interested in. We let $2^M$ denote the set of
all subsets of~$M$. Thus all our semilattices
will be subsets of~$M$ that are stable for intersection. The order given
by inclusion will not play a role and will thus be ignored.

\begin{definition}\label{def.clean.lattice} 
A semilattice $\maS \subset 2^M$  will be called \emph{clean} if it is a 
cleanly intersecting locally finite set of
closed \psbmanifolds\ of~$M$.
\end{definition}

We stress that this definition requires that all elements of a clean semilattice
are closed \psbmanifolds\ of~$M$. It is easier to deal with semilattices since a 
semilattice of closed \psbmanifolds{}
is clean if, and only if, any two members of the semilattice intersect
cleanly. For simplicity of the notation, we shall consider only
semilattices $\maS \subset 2^M$ (the set of subsets of~$M$) with $\emptyset \in \maS$. 
This changes nothing in our results, by Remark \ref{rem.repetition}, but avoids us 
treating separately the cases $\emptyset \in \maS$ and $\emptyset \notin \maS$ 
in proofs. 

The concept of a clean semilattice introduced here is very closely related
to that of a weakly transversal family considered in \cite[Definition~2.7]{ACN}, 
except that in that paper, the authors considered only
\psbmanifolds\ that were {\em not} contained in the boundary.
Similar concepts were also considered in \cite[Theorem 3.2]{Kottke-Lin} 
and in \cite{VasyReg}. The case considered in \cite[Sec.~2]{VasyReg} was
that when all \psbmanifolds\ with respect to which we blow-up \emph{are 
  contained in the boundary.}

The following result was proved in special cases
in \cite{ACN} and in \cite[Lemma~2.7]{VasyReg} with similar proofs. 
(See also \cite{Kottke-Lin}.) In \cite{MelroseBook}, Melrose proved that a 
the lift of a normal family remains a normal family if we do the blow-up by 
an element of the family. Lemma~5.11.2 of~\cite{MelroseBook} also treats the 
lift of a family under the blow-up.

\begin{proposition}\label{prop.bu.lattice}
Let $\emptyset \in \maS \subset 2^M$ be a clean semilattice. (Thus $\maS$
consists of closed \psbmanifolds{} of~$M$.) Let $P$ be a minimal element of $\maS \smallsetminus
\{\emptyset\}$. Then, for all $Q \in \maS$, $Q' := [Q: P\cap Q]$ is a closed \psbmanifold{} 
of $[M: P]$ and (after we remove the repetitions)
\begin{equation*}
  \maS' \ede \Bigl\{\, Q' = [Q : P\cap Q] \, \mid \ Q \in \maS \Bigr\}
\end{equation*}
is a clean semilattice of\/ $[M:P]$ with $\emptyset = \emptyset' = P' \in \maS'$
and, hence, with fewer elements than $\maS$.
\end{proposition}

\begin{proof}
The first part of tis result was proved in slightly less generality in
\cite[Theorem 2.8]{ACN} (assuming that the \psbmanifolds{} are {\em
not} contained in the boundary). The proof extends right away to
the current setting using Lemma \ref{prop.beta.m1}. See also \cite{Kottke-Lin}.

For the last part of the result, recall that the minimality of $P$ and the 
semilattice property of $\maS$ imply that, for any $Q\in \maS$, we have either 
$P \subset Q $ or $P\cap Q=\emptyset$. In the first case, we have $Q' :=
        [Q:P\cap Q]=[Q:P]$ and in the second case we have $Q' := [Q:
          P\cap Q]{=Q}$. Thus
\begin{equation*}
   \maS' \ede \{\, [Q:P] \, \vert \ P \subset Q \in \maS \} \, \cup
   \, \{ \, Q \, \vert \ Q \in \maS , Q \cap P = \emptyset \} \, .
\end{equation*}
Let us also notice that $P' := [P : P\cap P] = \emptyset =
[\emptyset: \emptyset \cap P] = \emptyset'$. Therefore, $\maS'$
has fewer elements than $\maS$. 
\end{proof}

We are ready now to describe what are the $k$-tuples $\maP$ with respect to which
we will consider the blow-up $[M: \maP]$: they are $k$-tuples coming from a total
order on a set of closed \psbmanifolds{} of~$M$ (denoted by abuse of notation still 
$\maP$) that contains $\emptyset$ and is stable by intersection (thus, a semilattice). 
In addition to this, our semilattice $\maP$ needs to be clean and the total order on 
$\maP$ needs to be ``admissible,'' a concept that we define next.

\begin{definition}\label{def.admissible}
Let $\maS \ni \emptyset$ be a {\em finite,} clean semilattice of closed \psbmanifolds\
of~$M$. We define by induction an {\em admissible order on $\maS$}
to be a total order $\maS = (P_0 = \emptyset, P_1, \ldots , P_n)$ on $\maS$
(thus different, in general, from the inclusion order) such that $P_1$ is minimal for
inclusion in $\maS \smallsetminus \{\emptyset\}$ and the resulting ordering on $\maS'$ 
is also admissible. (Recall that $\maS'$ has fewer elements than $\maS$). If $\maS$ has 
only one element, then its order is admissible. If $\maS$ has two elements, then the 
order is admissible if $\emptyset$ is minimal.
\end{definition}

We have the following important remark.

\begin{remark}\label{rem.size-order}
Proposition \ref{prop.bu.lattice} allows us to define 
the iterated blow-up $[M: \maS]$ with respect to an admissible order
on $\maS$ by induction by
\begin{equation*}
   [M: \maS] \ede [[M: P_1] : \maS']\,,
\end{equation*}
where we used the notation of that proposition.
This was the approach in \cite{ACN} for the definition of $[M: \maS]$.
A very similar method was used by Kottke in \cite{Kottke-Lin}, but
using size orders (recalled in Remark \ref{rem.repetition}). For our 
version of the iterated blow-up, the concept of size order, while very 
close to that of an admissible order, does not work, since $\maS'$ may 
not be size ordered anymore, even if $\maS$ was size ordered. 
Nevertheless, we can
always choose an admissible order on the semilattice $\maS$ 
of \psbmanifolds{} of~$M$ that is also a size order. (Of course, 
in view of Remark \ref{rem.repetition}, we
can assume that the order on any $k$-tuple with respect to which
we blow-up is a size-order, but that is a weaker statement.)
\end{remark}

\subsection{The pair blow-up lemma}
We now perform some essential calculations in local coordinates that
will be needed for our main result. Recall from Equation
\eqref{eq.not.sphere} that
\begin{equation*}
  \SS_k^{n} \ede \SS^{n} \cap \RR_k^{n+1},
\end{equation*}
where $\SS^{n}$ is the unit sphere in $\RR^{n+1}$, as always. For $\psi
\in \SS^{n'+1}_{k'+1} := \SS^{n'+1} \cap \RR^{n'+2}_{k'+1}$, we shall
write $\psi =: (\psi_1, \tilde \psi)$, with $\psi_1 \in [0, 1]$ and
$\tilde \psi \in \RR^{n'+1}_{k'}$, and we define the map
\begin{equation}\label{eq.def.G}
\begin{gathered}
  \Upsilon : \SS^{n-1}_k \times \SS^{n'+1}_{k'+1} \to \SS_{k,
      k'}^{n, n'} \ede  \SS^{n+n'} \cap \big ( \RR^{n}_k \times
  \RR^{n'+1}_{k'} \big )\\
  (\phi,\psi) \mapsto (\psi_1 \phi, \tilde{\psi})\,.
  \end{gathered}
\end{equation}
We embed the sphere orthant $\{0\} \times \SS^{n'}_{k'} = \{0_{\RR^n}\}
\times \SS^{n'}_{k'} \subset \RR^{n+n'+1}$ into $\RR^{n+n'+1}$ by
mapping the sphere orthant to the {\em last} components of
$\RR^{n+n'+1}$. Of course, we have an isomorphism
\begin{equation*}
   \SS_{k, k'}^{n, n'} = \SS^{n+n'} \cap \big ( \RR^{n}_k \times
   \RR^{n'+1}_{k'} \big ) \, \cong\, \SS^{n+n'}_{k + k'} =
   \SS^{n+n'} \cap \RR^{n+n'+1}_{k + k'}
\end{equation*}
given by the canonical permutation of coordinates diffeomorphism of
Equation \eqref{eq.def.can}.

We recall Proposition 5.8.1 of \cite{MelroseBook} and
we give the proof to fix the notation.

\begin{lemma}
\label{blow.sphere}
Let again $\SS_{k, k'}^{n, n'} := \SS^{n+n'} \cap \big ( \RR^{n}_k
  \times \RR^{n'+1}_{k'} \big ) \, \cong\, \SS^{n+n'}_{k+k'}$ and let the
  map $\Upsilon : \SS^{n-1}_k \times \SS^{n'+1}_{k'+1} \to \SS_{k,
    k'}^{n, n'}$ be as in Equation \eqref{eq.def.G}. If we define
\begin{equation*}
    \Psi : \SS_{k, k'}^{n, n'} \smallsetminus \big ( \{0\} \times
    \SS^{n'}_{k'} \big ) \ \to \ \SS^{n-1}_k \times \SS^{n'+1}_{k'+1}
    \,, \quad \Psi(\eta,\mu) \seq \Big (\, \frac{\eta}{|\eta|},\, \big
    (|\eta|,\mu \big ) \, \Big ) \,,
\end{equation*}
then $\Upsilon \circ \Psi$ is the inclusion $\SS_{k, k'}^{n, n'}
  \smallsetminus \big ( \{0\} \times \SS^{n'}_{k'} \big ) \subset
  \SS_{k, k'}^{n, n'}$ and $\Psi$ extends to a diffeomorphism
\begin{equation*}
  \widetilde \Psi : [\SS_{k, k'}^{n, n'} : \{0\} \times
    \SS^{n'}_{k'}]\ \stackrel{\sim}{\longrightarrow}\ \SS^{n-1}_{k}
  \times \SS^{n'+1}_{k'+1}
\end{equation*}
such that $\beta_{\SS_{k, k'}^{n, n'},\, \{0\} \times \SS^{n'}_{k'}} =
\Upsilon \circ \widetilde \Psi$.
\end{lemma}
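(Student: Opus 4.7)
The plan is to exhibit the map $\widetilde{\Psi}$ as a diffeomorphism by showing that, in the standard local coordinates supplied by the local model of the blow-up, $\widetilde{\Psi}$ is essentially the identity. The factor $\SS^{n-1}_k$ will parametrize the front face, while $\SS^{n'+1}_{k'+1}$ supplies ``generalized polar coordinates'' in the normal direction, with the (newly added) boundary coordinate $\psi_1$ playing the role of the radial variable $r$ in Equation \eqref{eq.kappa}.

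First I would dispatch the purely algebraic content. A direct computation shows that $\Upsilon$ lands in $\SS_{k,k'}^{n,n'}$: one has $|\psi_1 \phi|^2 + |\tilde\psi|^2 = \psi_1^2 + |\tilde\psi|^2 = |\psi|^2 = 1$, while $\psi_1 \geq 0$ and the sign conditions on $\phi$ and $\tilde\psi$ guarantee membership in $\RR^n_k \times \RR^{n'+1}_{k'}$. Next, for $(\eta,\mu) \in \SS_{k,k'}^{n,n'}$ with $\eta \neq 0$, one checks $(\eta/|\eta|, (|\eta|, \mu)) \in \SS^{n-1}_k \times \SS^{n'+1}_{k'+1}$ and $\Upsilon \circ \Psi(\eta,\mu) = (|\eta|\cdot \eta/|\eta|, \mu) = (\eta,\mu)$, and the same computation in reverse gives $\Psi \circ \Upsilon = \id$ on $\{\psi_1 > 0\}$. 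Hence $\Upsilon$ restricts to a smooth bijection of ``interior'' parts, with smooth inverse $\Psi$.

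The main obstacle is the smoothness of $\widetilde{\Psi}$ across the front face, which I would verify in local coordinates. Fix $(0,\mu)$ with $\mu$ an interior point of $\SS^{n'}_{k'}$; without loss of generality the largest coordinate of $\mu$ is the last. Use the chart on $\SS^{n+n'}$ projecting off that last coordinate: this identifies an open neighborhood of $(0,\mu)$ in $\SS_{k,k'}^{n,n'}$ with an open subset of $\RR^n_k \times \RR^{n'}_{k'}$ via $(x,y') \leftrightarrow (x, y', \sqrt{1-|x|^2-|y'|^2})$, and carries $\{0\} \times \SS^{n'}_{k'}$ to $\{0\} \times \RR^{n'}_{k'}$. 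By Equation \eqref{eq.Blow.up} and \eqref{eq.kappa}, the blow-up is then locally parametrized by $(\phi, r, y') \in \SS^{n-1}_k \times [0,\infty) \times \RR^{n'}_{k'}$, with blow-down $(\phi, r, y') \mapsto (r\phi, y')$. Using the analogous chart on $\SS^{n'+1}_{k'+1}$, one parametrizes a neighborhood of $(0,\mu)$ in $\SS^{n'+1}_{k'+1}$ by $(\psi_1, \tilde\psi') \in [0,\infty) \times \RR^{n'}_{k'}$. In these matched coordinates, $\widetilde{\Psi}$ reads $(\phi, r, y') \mapsto (\phi, (r, y'))$, which is literally the identity; so $\widetilde{\Psi}$ is a diffeomorphism near the front face, and combined with the previous paragraph it is a global diffeomorphism.

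Finally I would verify $\beta = \Upsilon \circ \widetilde{\Psi}$. Off the front face this is exactly $\Upsilon \circ \Psi = \id$ together with the fact that $\beta$ restricts to the identity there. On the front face, $\widetilde{\Psi}$ sends $(\phi, \mu) \in \SS^{n-1}_k \times \SS^{n'}_{k'}$ (viewed as a point of $\SS(N^+)$) to $(\phi, (0,\mu))$, and $\Upsilon(\phi, (0,\mu)) = (0 \cdot \phi, \mu) = (0,\mu)$, which agrees with the image under $\beta$ of the normal-sphere fiber over $(0,\mu)$. The only nontrivial point in the whole argument is the identification of the local model, so I would emphasize in the write-up that the two charts above ``match'' the standard blow-up chart exactly, making $\widetilde{\Psi}$ the identity in coordinates.
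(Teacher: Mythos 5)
Your proposal is correct in substance but takes a genuinely different route from the paper. The paper's proof avoids coordinate charts on the sphere itself: it embeds $\SS_{k,k'}^{n,n'}$ in the ambient $\RR^n_k \times \RR^{n'+1}_{k'}$, invokes Proposition~\ref{prop.beta.m1} to identify $[\SS_{k,k'}^{n,n'} : \{0\} \times \SS^{n'}_{k'}]$ with the lift $\beta^*(\SS_{k,k'}^{n,n'})$ inside $[\RR^n_k \times \RR^{n'+1}_{k'} : \{0\} \times \RR^{n'+1}_{k'}]$, and then performs one global computation in the $\kappa$-coordinates of Equation~\eqref{eq.kappa} to see that this lift is exactly $\SS^{n-1}_k \times \SS^{n'+1}_{k'+1}$. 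This gives the diffeomorphism in a single stroke, at the cost of relying on the lift proposition. You instead work intrinsically on the sphere, constructing matched graph charts (projecting off the largest coordinate of $\mu$) and verifying that in those charts $\widetilde\Psi$ is literally the identity of the local blow-up model. Your approach is more elementary and self-contained, since it bypasses Proposition~\ref{prop.beta.m1}, but it trades the one global computation for a chart-by-chart verification.

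There is one omission worth flagging. You explicitly restrict to $(0,\mu)$ with $\mu$ an \emph{interior} point of $\SS^{n'}_{k'}$, but the front face of $[\SS_{k,k'}^{n,n'} : \{0\} \times \SS^{n'}_{k'}]$ also contains points lying over the boundary of $\SS^{n'}_{k'}$, i.e.\ points where one or more of the first $k'$ coordinates of $\mu$ vanish. At such a point the largest coordinate of $\mu$ may itself be one of the constrained ones, and dropping it gives a chart with values in $\RR^{n'}_{\ell}$ for some $\ell < k'$. The computation is formally identical after this replacement of $k'$ by $\ell$ in the chart ranges of both $\SS_{k,k'}^{n,n'}$ and $\SS^{n'+1}_{k'+1}$, so the gap is benign and fixable — but as written your argument only establishes smoothness of $\widetilde\Psi$ on a neighborhood of the interior of the front face. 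The write-up should either carry out the boundary case explicitly or state clearly that it is verbatim analogous with $k'$ adjusted. This kind of chart bookkeeping is precisely what the paper's ambient-space/lift strategy is designed to avoid.
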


If we write by abuse of notation $\SS^{n'}_{k'}$ for the image of
$\{0\} \times \SS^{n'}_{k'}$ in $\SS^{n+n'}_{k + k'}$ under the
permutation of coordinates described above, then we obtain a
diffeomorphism
\begin{equation*}
 \widetilde\Psi : [\SS^{n+n'}_{k+k'} :
   \SS^{n'}_{k'}]\ \stackrel{\sim}{\longrightarrow}\ \SS^{n-1}_{k}
 \times \SS^{n'+1}_{k'+1} \,.
\end{equation*}

\begin{proof}
Let
\begin{equation*}
  \beta \ede \beta_{\RR^{n}_{k} \times \RR^{n'+1}_{k'}, \{0\} \times
    \RR^{n'+1}_{k'}} : [\RR^{n}_{k} \times \RR^{n'+1}_{k'} : \{0\}
    \times \RR^{n'+1}_{k'}] \to \RR^{n}_{k} \times \RR^{n'+1}_{k'}\,.
\end{equation*}
denote the blow-down map. Also, recall that the lifting
$\beta^*(\SS_{k, k'}^{n, n'})$ is defined as the closure of
$\beta^{-1}(\SS_{k, k'}^{n, n'} \smallsetminus \{0\} \times
\RR^{n'+1}_{k'})$ in $[\RR^{n}_{k} \times \RR^{n'+1}_{k'} : \{0\}
  \times \RR^{n'+1}_{k'}]$. Since
\begin{equation*}
    \SS_{k, k'}^{n, n'} \cap \bigl(\{0\} \times \RR^{n'+1}_{k'}\bigr)
    \seq \{0\} \times \SS^{n'}_{k'} \,,
\end{equation*}
Proposition~\ref{prop.beta.m1} gives a diffeomorphism
\begin{equation*}
  \Phi : [\SS_{k, k'}^{n, n'} : \{0\} \times \SS^{n'}_{k'}]
  \ \stackrel{\sim}{\longrightarrow}\ \beta^*(\SS_{k, k'}^{n, n'}),
\end{equation*}
uniquely determined by the condition that it is the inclusion on $\SS_{k,
  k'}^{n, n'} \smallsetminus \{0\} \times \SS^{n'}_{k'}$. (That is,
the blow-up of $\SS_{k, k'}^{n, n'}$ along $ \{0\} \times
\SS^{n'}_{k'}$ is diffeomorphic to the lifting $\beta^*(\SS_{k,
  k'}^{n, n'})$ of $\SS_{k, k'}^{n, n'}$ to $[\RR^{n}_{k} \times
  \RR^{n'+1}_{k'} : \{0\} \times \RR^{n'+1}_{k'}]$ via the blow-down
map $\beta \ede \beta_{\RR^{n}_{k} \times \RR^{n'+1}_{k'}, \{0\}
  \times \RR^{n'+1}_{k'}}$.)

To identify more explicitly the space $\beta^*(\SS_{k, k'}^{n, n'})$,
it is convenient to use the diffeomorphism $\kappa :\SS^{n-1}_{k}
\times [0, + \infty) \times \RR^{n'+1}_{k'} \to [\RR^{n}_{k} \times
    \RR^{n'+1}_{k'} : \{ 0 \} \times \RR^{n'+1}_{k'}]$ of Equation
  \eqref{eq.kappa} with the order of its arguments reversed. To start
  with, the blow-down map $\beta := \beta_{\RR^{n}_{k} \times
    \RR^{n'+1}_{k'} , \{ 0 \} \times \RR^{n'+1}_{k'}}$ is such that
  $\beta_1 := \beta \circ \kappa$ satisfies
\begin{equation*}
  \begin{gathered}  
   \beta_1 := \beta \circ \kappa : \SS^{n-1}_{k} \times [0, + \infty)
     \times \RR^{n'+1}_{k'} \, \to \, \RR^{n}_{k} \times
     \RR^{n'+1}_{k'} \,, \\
     \beta_1(z,r, x) \seq (rz, x).
  \end{gathered}
\end{equation*}

We have that $(z,r, x) \in \beta_1^{-1} (\SS_{k, k'}^{n, n'}
\smallsetminus (\{0 \} \times \RR^{n'+1}_{k'}))$ if, and only if
$\|\beta_1(z,r, x)\| = 1$ and $r > 0$. Assume that $\|\beta_1(z,r,
x)\| = 1$ and $r > 0$. Then $ \|rz\|^2 + \|x\|^2 = 1$. Note that $z
\in \SS^{n-1}_k$, and hence $r^2 + \|x\|^2 = 1$. This leads to $(r,
x) \in \SS^{n'+1}_{k'+1} \subset \RR^{n'+2}_{k'+1} = [0,\infty)
  \times \RR^{n'+1}_{k'}$. We thus have
\begin{equation*}
   \beta_1^{-1} \big (\SS_{k, k'}^{n, n'} \smallsetminus (\{0 \} \times
   \RR^{n'+1}_{k'} \big )) \seq (\SS^{n-1}_{k} \times \SS^{n'+1}_{k'+1})
   \smallsetminus (\{0\} \times \RR^{n'+1}_{k'} )\,.
\end{equation*}
The closure of this set is $\SS^{n-1}_{k} \times \SS^{n'+1}_{k'+1}$,
and hence we obtain a diffeomorphism $\Phi_1 := \kappa^{-1} \circ \Phi
: [\SS_{k, k'}^{n, n'} : \{0\} \times \SS^{n'}_{k'}]
\ \stackrel{\sim}{\longrightarrow}\ \SS^{n-1}_{k} \times
\SS^{n'+1}_{k'+1}$. That $\Upsilon \circ \Psi$ is the inclusion
follows from the defining formulas. The relation $\beta_{\SS_{k,
    k'}^{n, n'},\, \{0\} \times \SS^{n'}_{k'}} = \Upsilon \circ
\widetilde \Psi$ follows from the fact that they are both continuous
and they coincide on the dense, open subset $\SS_{k, k'}^{n, n'}
\smallsetminus ( \{0\} \times \SS^{n'}_{k'})$. This shows that $\Phi_1
= \Psi$ on $\SS_{k, k'}^{n, n'} \smallsetminus (\{0\} \times
\SS^{n'}_{k'})$ and hence $\tilde \Psi := \Phi_1$ is the desired
extension.
\end{proof}

We now treat another basic case when the blow-up 
with respect to a clean semilattice consisting of three manifolds is defined, namely the
simplest case when we blow up with respect to $\emptyset$ and two closed \psbmanifolds{} 
$P$ and $Q$ with $\emptyset \neq Q\subset P$. This gives
 the ``second factorization lemma.'' 
(The other basic case is that two disjoint \psbmanifolds,
which was already treated in Lemma~\ref{lemma.factorization1}.)
Again, the existence of the map $\zeta_{M, Q, P}$ was proved
in \cite{Kottke-Lin}, but for a slightly different version of
the iterated blow-up.

\begin{lemma}[The second factorization lemma] 
\label{lemma.factorization2}
Let us assume that $Q$ is a closed \psbmanifold{} of $P$ and that $P$ 
is a closed \psbmanifold{} of~$M$ (and hence $Q$ is also a \psbmanifold{} 
of~$M$). Then $[P:Q]$ is canonically diffeomorphic to
a \psbmanifold{} of $[M: Q]$ and there exists
a unique, smooth, natural map
\begin{equation*}
  \zeta_{M, Q, P} : [M: Q, P ] \simeq 
  \bigl[[M: Q]: [P:Q]\bigr] \ \to \ [M : P]
\end{equation*}  
that restricts to the identity on $M \smallsetminus P$. Moreover, the
product map
\begin{equation*}
  \maB_{M, Q, P} \ede ( \zeta_{M, Q, P}, \beta_{[M:Q], [P:Q]}) : [M:
    Q, P ] \to [M : P] \times [M : Q]
\end{equation*}
is proper in each component. The image of $\maB_{M, Q, P}$ is a weak
submanifold in the sense of Definition~\ref{def.weak-submanifold}, and
$\maB_{M, Q, P}$ is a diffeomorphism onto its image.
\end{lemma}

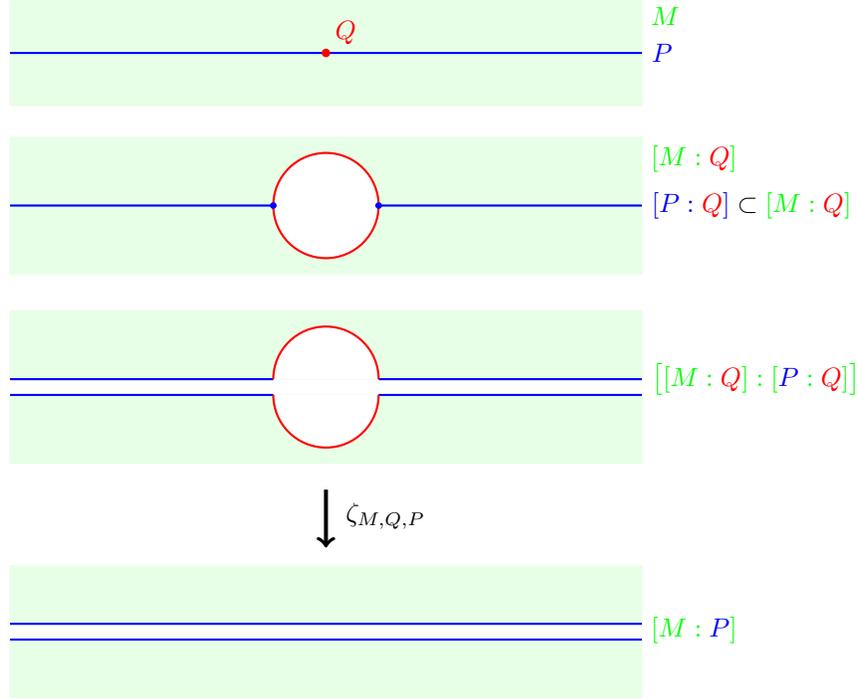
\begin{figure}\label{fig.MPQ}
\begin{center}
    \begin{tikzpicture}[scale=.7]    
       \begin{scope}[shift = {(0,10)}]
            \draw[lightgreen,fill=lightgreen] (-6,1) rectangle (6,-1);
            \draw[-,thick,blue] (-6,0) -- (6,0) node[right]
                 {\blue{$P$}}; \draw[red,fill=red,radius=2pt] (0,0)
                 circle node[above right] {$Q$}; \node[right] (M1) at
                 (6,.7) {$\green{M}$};
      \end{scope}
      \begin{scope}[shift = {(0,7.1)}]
            \draw[lightgreen,fill=lightgreen] (-6,1.3) rectangle
            (6,-1.3); \draw[-,thick,blue] (-6,0) -- (-1,0);
            \draw[-,thick,blue](1,0) -- (6,0) node[right] {$\blue{[P:
                  \red{Q}]}\color{black}\subset \green{[M:
                  \red{Q}]}$}; \draw[red,fill=white,thick,radius=1]
            (0,0) circle node[above right] {};
            \draw[blue,fill=blue,radius=1.5pt] (1,0) circle node[] {};
            \draw[blue,fill=blue,radius=1.5pt] (-1,0) circle node[]
                 {}; \node[right] (M2) at (6,.9) {$\green{[M:
                       \red{Q}]}$};
      \end{scope}
      \begin{scope}[shift = {(0,3.8)}]
            \draw[lightgreen,fill=lightgreen] (-6,1.3) rectangle
            (6,0); \draw[-,thick,blue] (-6,0) -- (-1,0);
            \draw[-,thick,blue](1,0) -- (6,0) node[right]
                 {\green{$\bigl[[M: \red{Q}]: [ \blue{P} : \red{Q}
                       ]\bigr]$}}; \draw[red,fill=white,thick] (0:1)
                 arc (0:180:1); \node[right] (M3) at (6,.9) {};
      \end{scope}
      \begin{scope}[shift = {(0,3.5)}]
            \draw[lightgreen,fill=lightgreen] (-6,-1.3) rectangle (6,0);
            \draw[-,thick,blue] (-6,0) -- (-1,0);
            \draw[-,thick,blue](1,0) -- (6,0) node[right] {};
            \draw[red,fill=white,thick] (-1,0) arc (180:360:1);
      \end{scope}
      \begin{scope}[shift = {(0,.4)}]
            \draw[->,line width=1.5pt,black]  (0,1.3) -- (0,.2); 
            \node[right] (A) at (.2,.8) {$\zeta_{M,Q,P}$};
      \end{scope}
      \begin{scope}[shift = {(0,-1)}]
         \begin{scope}[shift = {(0,.15)}]
            \draw[lightgreen,fill=lightgreen] (-6,1.1) rectangle (6,0);
            \draw[-,thick,blue] (-6,0) -- (6,0) node[] {};
         \end{scope}
         \begin{scope}[shift = {(0,-.15)}]
            \draw[lightgreen,fill=lightgreen] (-6,-1.1) rectangle
            (6,0); \node[right] (Z) at (6.0,.18) {\green{$[M:
                  \blue{P}]$}}; \draw[-,thick,blue] (-6,0) -- (6,0)
            node[] {}; \node[right] (M4) at (6,1.0) {};
         \end{scope}
      \end{scope}
   \end{tikzpicture}
\end{center}
\caption{The blow-ups $[M:Q]$, $\bigl[[M:Q]:[P:Q]\bigr]$, and $[M:P]$}
\end{figure}

See Figure~\ref{fig.MPQ} for a local picture of these blow-ups in the
example $M=\RR^2$, $P=\RR\times \{0\}$, $Q=\{0\}$.

\begin{proof}
The fact that $[P:Q]$ is (diffeomorphic to)
a closed \psbmanifold{} of $[M: Q]$ follows from \ref{prop.beta.m1}.
The uniqueness of the map $\zeta_{M, Q, P}$ follows from the fact that
it is the identity on the dense subset $M \smallsetminus (P \cup
Q)$. The statement is local, so, in view of Lemma~\ref{lemma.product}, we
can assume that $Q = \{0\}$. That is, we can assume that
\begin{equation}\label{local.MPQ}
\begin{cases}
  \ M\! & :=\ \ \RR^{m}_{k_m} \times \RR^{p}_{k_p}\,,\\
  \ P\! & :=\ \ \{0\} \times \RR^{p}_{k_p}\,, \\
  \ Q\! & :=\ \ \{0\}\,.
\end{cases}
\end{equation}
We have
\begin{align*}
  [M:P]\ & =\ [\RR^{m}_{k_m} \times \RR^{p}_{k_p} : \{0\} \times
    \RR^p_{k_p}] \\
  & =\ [\RR^m_{k_m} : \{0\}] \times \RR^{p}_{k_p} \\
  & \simeq\ \SS^{m-1}_{k_m} \times [0,\infty) \times \RR^{p}_{k_p} \\
  &  =\ \SS^{m-1}_{k_m} \times \RR^{p+1}_{k_p
      +1} \,.
\end{align*}
Its blow-down map is $\beta_{M, P}(x, t, y) = (tx, y)$.

On the other hand, we have (using the notation of Lemma~\ref{blow.sphere}):
\begin{equation*}
  [M:Q] \seq [\RR^{m+p}_{k_m+k_p}:\{0\}] \seq \SS_{k_m, \, k_p}^{m,
      p-1}  \times [0,\infty).
\end{equation*}
Its blow-down map is $\beta_{M, Q}(x, t) = tx$. Lemma~\ref{prop.beta.m1} 
gives that the lift of $P$ to $[M:Q]$ is $P' :=
    [P:Q] = \{0\} \times \SS^{p-1}_{k_p} \times [0,\infty)$. Lemmas~\ref{lemma.product} and~\ref{blow.sphere} (in this order) then give
      canonical diffeomorphisms
\begin{align*}
  \bigl[[M:Q]:P'\bigr]\ & \simeq\ [ \SS_{k_m, \, k_p}^{m, p-1} \times
    [0,\infty) : \SS^{p-1}_{k_p} \times [0,\infty) ] \\
  & =\ [ \SS_{k_m, \, k_p}^{m, p-1}  :\SS^{p-1}_{k_p} ] \times
      [0,\infty)\\
  & \simeq\ \SS^{m-1}_{k_m} \times \SS^{p}_{k_p+1} \times
        [0,\infty)\,.
\end{align*}
The blow-down map $\beta_{[ [M:Q]:P] } : \bigl[ [M:Q]:P' \bigr] \to
[M:Q]$ is given, up to canonical diffeomorphisms, by the map $\Upsilon
\times \id$, where $\Upsilon$ is as defined in Equation
\eqref{eq.def.G}. Hence $\Upsilon\times \id(\phi, \psi, t) = (\psi_1
\phi, \tilde \psi, t)$.

The desired map $\zeta_{M, Q, P}$ is then obtained from the blow-down
map $\SS^{p}_{k_p+1} \times [0,\infty) \to \RR^{p+1}_{k_p+1} = [0,
    \infty) \times \RR^{p}_{k_p}$, that is $\zeta_{M, Q, P}(x, y, t) =
    (x, ty)$. In particular, it is proper. It remains to check that
    this map is the identity on $M\setminus P$. Since we used, for $x\in
    M\setminus P$, the identifications
    $x =  \beta_{M,Q}(x) = \beta_{M,P}(x)= \beta_{[M:Q],[P:Q]}(x)$,
    it is enough to check
\begin{equation}\label{tocheck.noups}
  \beta_{M, P} \circ \zeta_{M, Q,P} = \beta_{M, Q} \circ \beta_{[M:Q],[P:Q]}
\end{equation}
on $M\setminus P$. As this calculation is local, we can again assume
\eqref{local.MPQ} and the concrete presentations of $\beta_{M,Q}$,
$\beta_{M,P}(x)$ and $\beta_{[M:Q],[P:Q]}$ described above,
\eqref{tocheck.noups} turns into
\begin{equation}\label{tocheck.lokal.ups}
   \beta_{M, P} \circ \zeta_{M, Q, P} = \beta_{M, Q} \circ (\Upsilon
   \times \id)
\end{equation} 
on $\SS^{m-1}_{k_m} \times \SS^{p}_{k_p+1} \times [0,\infty)$.
Indeed, for $x\in \SS^{m-1}_{k_m}$, $y=(y_1,\tilde y)\in
  \SS^p_{k_p+1}\subset \RR^{p+1}_{k_p+1}=\RR^1_1\times \RR^p_{k_p}$,
  $t \in[0,\infty)=\RR^1_1$, we have
\begin{equation*}
   \beta_{M, P} \circ \zeta_{M, Q, P}(x, y, t) \seq \beta_{M, P} (x,
   ty)
   \seq \beta_{M, P} (x, ty) \seq
   (ty_1 x, t \tilde y)\,.
\end{equation*}
Together with
\begin{equation*}
  \beta_{M, Q} \circ (\Upsilon \times \id) (x, y, t) \seq \beta_{M,
    Q}(y_1 x, \tilde y, t)
  \seq (ty_1 x, t \tilde y)\,,
\end{equation*}
this implies \eqref{tocheck.lokal.ups}.  

The map $\maB$ is given in local coordinates by $\maB(x, y, t) = \big
(x, ty, (y_1x, \tilde y), t)$ with differentiable left inverse $(x, z,
(w_1, w_2), t) \mapsto (x, (\|w_1\|, w_2) , t)$. Hence by
Corollary~\ref{cor.smfd.crit} the image of $\maB$ is a weak
submanifold and $\maB$ is a diffeomorphism onto its image.
\end{proof}

\begin{remark}\label{rem.Bernd}
Note that, in general, the image of the map $\maB_{M,Q,P}$ introduced
in the proof above is not a \psbmanifold{} of $[M:P] \times
[M:Q]$. Indeed, let us consider the case when~$M$ is the closed unit
disk in~$\RR^2$, and let $p$ and $q$ be two disjoint points in the
interior of~$M$. Let $P := \{p\}$ and $Q := \{q\}$. We claim that the
image $N$ of $\maB=\maB_{M,Q,P}$ is {\em not} a \psbmanifold{} of
$M_1:= [M:P] \times [M:Q]$. Suppose $N$ were a \psbmanifold{} of
$M_1$.  As~$N$ is connected, the function $\depth_{M_1}(x) -
\depth_N(x)$ is constant on $N$, see Remark~\ref{rem.depths}. However,
the map $\maB$ sends the interior points of $M \smallsetminus \{p,q\}$
to the interior of $M_1 = [M:P] \times [M:Q]$, thus $\depth_{M_1}(x) -
\depth_N(x) = 0 - 0 = 0$ for $x = \maB(y)$ with $y$ in the interior of
$M \smallsetminus \{p,q\}$. On the other hand $\maB$ maps the boundary
$\partial M = \pa ( M \smallsetminus \{p,q\}) $ to the corner
$\partial M \times \partial M$ of $[M:P] \times [M:Q]$, which has
boundary depth $2$ in $M_1 = [M:P] \times [M:Q]$. Thus, if $x =
\maB(y)$, with $y \in \pa M$, we obtain $\depth_{M_1}(x) - \depth_N(x)
= 2 - 1 = 1$. Therefore, the function $\depth_{M_1}(x) - \depth_N(x)$
is not constant on $N$, and hence $N$ is not a \psbmanifold{} of $M_1
= [M:P] \times [M:Q]$.

A careful investigation \cite{koenig.master} shows that the image  
$\maB_{M,Q,P}([M: Q, P]$ of $[M: Q, P]$ in $[M : P ] \times 
[M : Q]$ may fail to be a submanifold of $[M : P ] \times 
[M : Q]$ in the sense of manifolds with corners, see Definition~\ref{def.submanifold-gen}.
This fact justifies our introduction of the notion of a ``\wsbmanifold.'' 
In particular, a \wsbmanifold{} is neither a 
\bsbmanifold{} nor a \wibsbmanifold{}, see Appendix~\ref{ssec.appendix.submanifold.2}. 
\end{remark}

\subsection{The graph blow-up}
In this section we introduce the \emph{graph blow-up}, which is a version of the blow-up 
with respect to a family
of closed \psbmanifolds{} that obviously does not depend on any order on that family.
For our applications the most important case is the one of a compact manifold with 
corners $M$. In this case locally finiteness implies finiteness. Finiteness of the 
semilattice simplifies the presentation, thus we will assume this from now on; but 
let us mention that there are obvious extensions to locally finite clean semilattices 
in non-compact manifolds with corners.

Let $M$ be a manifold with corners and $\maP = (P_i)_{i \in I}$ be a 
finite family of closed \psbmanifolds{} of~$M$. Let $\delta : M \smallsetminus \unionP 
\to \prod_{i \in I} [M: P_i]$ be the diagonal map $\delta(x) = (x, x, \ldots,
x)$, as before. We write $\unionP \ede \bigcup_{i \in I} P_i$.
Then $M \smallsetminus \unionP$ is  an open subset of $[M: P_i]$, for each
$i \in I$.  Motivated by the results of \cite{GN, MNP}, we now
introduce the following definition.

\begin{definition}\label{def.unres.blowup}
Let $\maP = (P_i)_{i \in I}$ be a  finite family  of closed 
\psbmanifolds{} of the manifold with corners~$M$. 
Then the {\em graph blow-up} $\bl{M: \maP}$ of
$M$ along $\maP$ is defined by
\begin{equation*}
    \bl{M: \maP} \ede \overline{ 
    \delta(M\smallsetminus \unionP) }
    \seq       
      \overline{ \{ (x, x, \ldots, x) \, \vert \ x \in
      M\smallsetminus \unionP \} } \ \subset\ \prod_{i \in I} [M:
      P_i] \,.
\end{equation*}
\end{definition}



\begin{remark}\label{rem.repetition2}
The definition of the family $\maP = (P_i)_{i \in I}$ allows for repetitions.
That is, we may have $P_i = P_j$ for some $i, j \in I$, $i \neq j$.
If we {\em remove the repetitions,} will obtain a graph blow-up that is canonically
homeomorphic to the original one. Similarly, changing the index set $I$ will also yield a 
canonically homeomorphic graph blow-up. In particular, if $I$ is finite
(which is the case for most of this paper), we can introduce a total order
on $\maP$. The graph blow-up corresponding to different orders on the
finite set $\maP$ will be, however, canonically homeomorphic. See also
Remark~\ref{rem.repetition}.
\end{remark}

We now show that $\bl{M: \maP}$ is a \wsbmanifold{}
of a suitable manifold with corners provided that $\maP$ is an
admissible ordered clean semilattice.

\begin{theorem}\label{thm.main1}
Let $\maS \ni \emptyset$ be a finite, clean semilattice of closed
\psbmanifolds{} of~$M$ with an admissible order
(Definition~\ref{def.admissible}), so that $[M: \maS]$ is
well-defined (Remark~\ref{rem.size-order}).  
\begin{enumerate}[\kern3mm\rm (i)]
\item\label{thm.main1.i} For each $P
\in \maS$, there exists a unique smooth map $\phi_{\maS, P} : [M:
  \maS] \to [M: P]$ that is the identity on $M \smallsetminus
\bigcup_{P \in \maS} P$. These maps are such that the induced map
\begin{equation*}
   \maB_\maS \ede \left(\phi_{\maS, P_0}, \ldots, \phi_{\maS,
     P_k}\right) : [M: \maS] \to \prod_{j=0}^k [M: P_j]
\end{equation*}
is an injective immersion    and proper in each component. 

\item\label{thm.main1.ii} The image of $\maB_\maS$ is $\bl{M:\maS}$. Hence, $\bl{M:\maS}$
is a \wsbmanifold{} of the product $\prod_{j=0}^k [M: P_j]$ in the sense of
Definition~\ref{def.weak-submanifold} and $\maB_\maS$ is 
a diffeomorphism
 \begin{equation*}
   \maB_{\maS} : [M: \maS] \ \stackrel{\sim}{\longrightarrow}\ \bl{M:
     \maS}\,.
\end{equation*}  
\end{enumerate}
\end{theorem}

\begin{proof}
We shall prove~\eqref{thm.main1.i} and~\eqref{thm.main1.ii} together by induction on the number $k+1$ of elements of
$\maS$. Recall that the iterated blow-up $[M: \maS]$ is defined since $\maS$ is 
endowed with an admissible order. (See also Remark~\ref{rem.size-order}.)
In the case $k=0$, there is nothing to prove, since $\maS = \{\emptyset\}$
then. 
\smallskip

\noindent \textbf{Case $k=1$:} If $\maS$ has $1+1=2$ elements, we have
$\maS = (\emptyset , P)$ and $\mathcal{B_\maS} = (\beta_{M,
  P},\id_{[M:P]})$ so the claim is trivially satisfied, since the
blow-down map is proper (Corollary~\ref{cor.beta.proper}).
\smallskip

\noindent  \textbf{Case $k=2$:}
If $\maS$ has $2+1=3$ elements, we have $\maS = \{\emptyset , Q, P\}$
and either $Q \cap P = \emptyset$ or $Q \subset P$ (the case $P \subset Q$
would not yield an admissible order on $\maS$).  
\smallskip

\noindent 1)\ In the {\em first} subcase, that is, if $Q\cap P =
\emptyset$, the result was already proved in the first factorization lemma,
Lemma~\ref{lemma.factorization1}, with $\maB_\maS= (\beta_{M,P\cup
  Q},\beta_{[M:Q],P},\beta_{[M:P],Q})$, that is, all the components of
$\maB_\maS$ are given by blow-down maps. The diffeomorphism property
for $\maB_\maS$ comes from the fact that its restriction to
$[M\setminus Q:P]$ and $[M\setminus P:Q]$ has a component equal to the
identity, so it is a local diffeomorphism onto its image, which is at
the same time injective and proper, thus having a continuous inverse.
\smallskip

\noindent 2)\ Similarly, in the {\em second} subcase, that is, 
if $Q \subset P$, the result was already proved in the second factorization 
lemma, Lemma~\ref{lemma.factorization2}, with
\begin{equation*}
  \maB_\maS \ede (\beta_{M,Q}\circ \beta_{[M: Q], [P:Q]} , \beta_{[M:
      Q], [P:Q]},\zeta_{M, Q, P} ) \,,
\end{equation*}
that is, we have, $\phi_{\maS,\emptyset}= \beta_{M,Q}\circ\beta_{[M:
    Q], [P:Q]}$, $\phi_{\maS, Q} := \beta_{[M: Q], [P:Q]}$,
$\phi_{\maS, P} := \zeta_{M, Q, P}$. In particular, the fact that
$\maB_{M,Q,P} = (\beta_{[M: Q], [P:Q]},\zeta_{M, Q, P})$ is a diffeomorphism
onto its image implies the same statement for $\maB_\maS$.
\smallskip

\noindent \textbf{Case $k\geq 3$:} Let us now proceed with the
induction step from $k-1$ to $k$, that is, let us assume that $\maS$
has $k+1$ elements, $P_0=\emptyset, P_1,\ldots, P_k$, arranged in
the given, admissible order. Let $P' := [P: P \cap P_1]$. Thus we have $P'= [P:
  P_1]$, if $P_1 \subset P$, and $P'= P$, if $P_1 \cap P =
\emptyset$. We shall use the notation of Proposition~\ref{prop.bu.lattice} with $P:=P_1$, in particular, $Q' := [Q: Q \cap
  P_1]$. The semilattice $\maS' = \bigl(P_j' := [P_j: P_j\cap
  P_1])_{j=1,\ldots k}$ of Proposition~\ref{prop.bu.lattice} is then
clean and with an admissible order.  
As we have remarked already, $P_1' := [P_1: P_1\cap P_1] = \emptyset =
\emptyset'$, and hence $\maS'$ has at most $k$ elements. By the induction
hypothesis, the map $\maB_{\maS'}$ is a diffeomorphism onto its
image. The same property is shared by the maps
\begin{equation*}
  \maB_{M, P_1, P_j} :\bigl[[M: P_1]: [P_j:P_1]\bigr] \to [M: P_1]
  \times [M: P_j]
\end{equation*}
of the Lemmata~\ref{lemma.factorization1} and~\ref{lemma.factorization2},
since either $P_1 \cap P_j = \emptyset$ or $P_1 \subset P_j$,
since we have assumed that the order on $\maS$ is admissible.
Let $\Phi := {\id \times \prod_{j=2}^k} \maB_{M, P_1, P_j}$ and
consider the composition
\begin{multline}\label{eq.important2}
  [M: \maS] \ede \bigl[[M:P_1] : \maS'\bigr] \ \stackrel{
    \maB_{\maS'}}{\xrightarrow{\hspace*{.54cm}}}
    \prod_{j=1}^k \bigl[[M: P_1]: [P_j:P_1]\bigr] \\
  \stackrel{\Phi }{\longrightarrow} \ [M: P_1] \times
  \prod_{j=2}^k \Bigl([M: P_1] \times [M: P_j]\Bigr)\,. \\
\end{multline}

The two maps of the composition are both injective immersions, and
hence their composition is again an injective immersion. The desired
map $\phi_{\maS, P_j}$ is the projection onto the $P_j$-component.
The projection of the composite map onto any of the factors is the
identity on $M \smallsetminus \left(\bigcup_{j=1}^k P_j\right) $.  Note that all
components with factors of the form $[M: P_1]$ (which are repeated),
yield the same projection, again because this projection is the
identity map on $M \smallsetminus \left(\bigcup_{j=1}^k P_j\right) $. By
removing these repetitions, and by adding the iterated blow-down map
$[M: \maS]\to M$ we obtain the desired map $\maB_{\maS}$, which is
consequently also an injective immersion. The map $\maB_{\maS}$, is
proper in each component, and thus proper. It follows
from Corollary~\ref{cor.prop.homeo} that $\maB_{\maS}$ is a
homeomorphism to its image $N:=\maB_{\maS}([M:\maS])$. With
Proposition~\ref{prop.smfd.crit} we see that $N$ is a \wsbmanifold{}
of $\prod_{j=0}^k [M: P_j]$, and that $\maB_{\maS}$ is a
diffeomorphism onto $N$.

It remains to argue that $N$ coincides with 
\begin{equation*}
  \bl{M:\maS}
  := 
  \overline{\maB_{\maS}\Bigl(M\setminus
    \bigcup_{j=1}^k P_j \Bigr)}\,.
\end{equation*}

For any $x\in [M:\maS]$, there is a sequence $(x_i)$ in $M\setminus
\left(\bigcup_{j=1}^k P_j\right) $ converging to $x$ in $[M:\maS]$. Therefore
\begin{equation*}
  \maB_{\maS}\Bigl(M \smallsetminus (\bigcup_{j=1}^k P_j)  \Bigr)\ni
  \maB_{\maS}(x_i)\to \maB_{\maS}(x)\,,
\end{equation*}
and hence $\maB_{\maS}(x)\in \bl{M:\maS}$. It follows that $N\subset
\bl{M:\maS}$.

Conversely, for $y\in \bl{M:\maS}$, there is a sequence
$y_i=\maB_\maS(x_i)$ in $\maB_{\maS}\Bigl(M \smallsetminus \left(\bigcup_{j=1}^k P_j\right)
 \Bigr)$ converging to $y$ in $\prod_{j=0}^k [M: P_j]$. Thus
$\{y_i\mid i\in \mathbb{N}\}\cup \{y\}$ is compact, and by properness
of $\maB_\maS$ the set
\begin{equation*}
    \bigl(\maB_\maS\bigr)^{-1}\left(\{y_i\mid i\in \mathbb{N}\}\cup
    \{y\}\right)=\{x_i\mid i\in \mathbb{N}\}\cup
    \bigl(\maB_\maS\bigr)^{-1}\left(\{ y\}\right)
\end{equation*}
is compact as well. As a consequence a subsequence $x_{i_k}$ has to
converge to some ${z\in [M:\maS]}$. We conclude that
\begin{equation*}
N\ni \maB_\maS(z)=\lim_{l \to\infty }\maB_\maS(x_{i_l})=
\lim_{l\to\infty}y_{i_l}=y.
\end{equation*} This yields $\bl{M:\maS}\subset N$.
\end{proof}

Again, the image of the map $\maB_{\maS}$ is, in general, not a
\psbmanifold, see Remark~\ref{rem.Bernd}. We obtain as a first application 
of our results the following 
actions of Lie groups.

\begin{definition}\label{def.Gfam}
If $G$ is a Lie group acting smoothly on~$M$ and $\maP$ is a
 finite set of closed \psbmanifolds{} of~$M$ 
such that, for every $P \in \maP$ and $g \in
G$, we have $g(P) =Q$  for some $Q$  $\in \maP$, then 
we shall say that {\em $\maP$ is a $G$-family of closed \psbmanifolds{} of~$M$}.
\end{definition}

Proposition~\ref{prop.group.action} yields right away the following
corollary.



\begin{theorem}\label{thm.cor.main1}
Let $G$ be a Lie group acting smoothly on~$M$ and let $\maP$ be 
a $G$-family of closed \psbmanifolds{} of~$M$ (see Definition~\ref{def.Gfam}). 
Then $G$ acts continuously on $\bl{M: \maP}$.
If, moreover, $\maP$ is a clean semilattice of closed \psbmanifolds{} 
of~$M$, then $G$ commutes with the homeomorphism $\maB_\maP$ of
Theorem~\ref{thm.main1} and it acts smoothly on $\bl{M: \maP} \simeq [M: \maP]$.
\end{theorem}

\begin{proof}
Let $\delta(x) = (x, \ldots, x)$ be the diagonal embedding 
$\delta : M \smallsetminus \cup \maP \to \prod_{P \in \maP} [M: P]$
considered before. We have that each $G$ acts smoothly on
  $M \smallsetminus \cup \maP$ and on $\prod_{P \in \maP} [M: P]$, with
  the action sending $[M: P]$ to $[M: g(P)]$, by Proposition~\ref{prop.group.action}. The action
  by homeomorphisms of $G$ on $\bl{M: \maP}$ then 
  follows since $\delta$ commutes with the action of~$G$. 
  The map $\maB_\maP$ is also clearly $G$-equivariant. The 
smoothness of the action of $G$ on $\bl{M: \maP}$ then follows from Theorem~\ref{thm.main1}
and the smoothness of the action of $G$ on $\prod_{P \in \maP} [M: P]$.
\end{proof}


\section{Identification of the Georgescu-Vasy space} \label{sec6} 

We now apply the results of the previous sections
to identify the spaces introduced by Georgescu and Vasy with the
space $\XGV := \overline{\delta_\maF(X)}$ defined in the Introduction.
In what follows, the role played by $M$ in the previous sections will
be played by the spherical compactification $\overline{Z}$ of a vector
space~$Z$, which we recall next.

\subsection{Spherical compactifications}
\label{ssec.sph.comp}
For any finite-dimensional real vector space $Z$, recall that $\SS_Z$
denotes the set of vector directions in $Z$, that is, the set of
(non-constant) open half-lines $\RR_{+} v$, with $0 \neq v \in Z$ and
$\RR_{+}:=(0,\infty)$. The disjoint union
\begin{equation} \label{eq.oZ}
  \overline{Z} \ede Z \sqcup \SS_Z 
\end{equation}
is then called the {\em radial compactification} of $Z$. For example,
if $Z = \RR$, then $\overline{Z} := [-\infty, \infty]$ with the usual
topology. The action of the group $\GL(Z)$ of linear automorphisms of
$Z$ extends, by definition, to an action on~$\overline{Z}$. Similarly,
if $Y \subset Z$, then $\overline{Y} \subset \overline{Z}$. In
particular, $\overline{Z}$ is the union of all {\em closed lines}
$\overline{\RR v}$, $0 \neq v \in Z$, with closure taken
in~$\overline{Z}$.

As it is well known, $\overline{Z}$ carries a topology and a
smooth structure, and our next goal is to recall their
definitions, which will, in particular, turn $\overline{Z}$ into a smooth manifold with
boundary. For notational purposes it is convenient consider the case
$Z = \RR^n$ first. We start by noticing that there is a bijection between the set of vector
directions in $\RR^{n+1}$ and its unit sphere~$\SS^n$. This allows us
to regard $\SS^n_1 := \{(x_1, x') \in [0,\infty)\times \RR^n \mid x_1^2 + |x'|^2 = 1  \}$ as  
the set of vector directions in~$\RR^{n+1}_1$, where we used the usual
notation of Equation~\eqref{eq.def.Rnk}. Let
\begin{equation*}
  \< x\>^2 \ede 1 + \|x\|^2 \seq \|(1, x)\|^2 \,,
\end{equation*}
as usual. We then have the following simple observation.

\begin{remark}\label{rem.Theta_n}
Let $\Theta_n:\overline{\RR^n} = \RR^n \sqcup
\SS_{\RR^n} \to \SS^n_1$ be given by the formula:
\begin{equation*}
 \Theta_n(x) \ede
 \begin{cases} 
    \ \frac{1}{\langle x \rangle} (1,x) & \ \mbox{ if } x \in \RR^n\,, \\
    \ \frac1{\|v\|}(0,v) & \ \mbox{ if } x = \RR_{+}v\in \SS_{\RR^n}\,.
  \end{cases}
\end{equation*}
First, the map $\Theta_n$ is well defined because $\RR_{+}v=\RR_{+}w$ implies
$v=\lambda w$, for some $\lambda \in \RR_{+}$. Second, $\Theta_n$ is 
$\GL(n,\RR):= \GL(\RR^n)$-invariant for the action defined in the last paragraph. 
Finally, it is bijective and its inverse is given by 
\begin{equation}\label{theta.inv.formula}
    \mathbb{S}^n_1 \ni (y_0,y_1, \ldots, y_n) \stackrel{ \Theta^{-1}_n}{\longmapsto}
     \begin{cases}\frac{1}{y_0}(y_1,\ldots ,y_n) \in \mathbb{R}^n & \text{if } y_0\neq 0\\
       \RR_{+} (y_1,\ldots,y_n) \in \SS_{\RR^n} & \text{if } y_0=0 \,.
     \end{cases}
\end{equation}
We endow $\overline{\RR^n}$ with the structure of a smooth manifold
(with boundary) that makes $\Theta_n$ a
diffeomorphism. This manifold structure on $\overline{\RR^n}$ extends
the standard manifold structure of~$\RR^n$. See also \cite{MelroseBook, VasyReg}.
\end{remark}

We now extend the definition of the smooth structure on $\overline{\RR^n}$
of Remark~\ref{rem.Theta_n} to any $n$-dimensional real vector space $Z$ in the usual way.
First, choose a vector space isomorphism $Z\to \RR^n$, which
yields bijections
\begin{equation}\label{eq.Z.to.Sn1}
  \overline{Z}\ \stackrel{\sim}{\longrightarrow} \ \overline{\RR^n}
  \ \stackrel{\sim}{\longrightarrow} \ \SS^n_1\,.
\end{equation}
In turn, these bijections can be used to define a smooth structure on the radial
compactification~$\overline{Z}$ of~$Z$. The $\GL(n,\RR)$-invariance of~$\Theta_n$
implies that the resulting smooth structure
on~$\overline{Z}$ does not depend on the isomorphism $Z \to \RR^n$.
We note, in passing, that $\overline{Z} \simeq [Z^+: \{\infty\}]$.

It follows from the definition of the radial
compactification and of its topology that, if $Y \subset Z$ is a
(linear) subspace, then $\overline{Y} \subset \overline{Z}$ is a
closed \psbmanifold{} and $\SS_Y = \SS_Z \cap \overline{Y}$.

\subsection{Quotients and compactifications}
If $Y$ is a proper linear subspace of~$X$, then the natural 
projection map $\pi_{X/Y} : X \to X/Y \to \oXY$ extends to a
well-defined map $\oX \setminus  \SS_Y \to \oXY$, which,
at the boundary, is given by $\RR_{+}x \mapsto \RR_{+}(x+Y)$. This
map does not extend to a continuous map on $\overline{X}$, but, as we
will show next, it extends to the blow-up of $\oX$ with respect to $\SS_Y$.

\begin{proposition}\label{prop.X/Y}
The canonical surjection $\pi_{X/Y} : X \to X/Y$ extends to a smooth
map $\psi_Y :[\oX: \SS_Y] \to \overline{X/Y}$ such that the induced
map 
\begin{equation*}
 \theta_{Y}  \ede (\beta_{\oX, \SS_Y}, \psi_Y) : 
 [\oX : \SS_Y] \to \oX \times \oXY
\end{equation*} 
is a diffeomorphism onto its image, which is
a \wsbmanifold{} of the product $\oX \times
\oXY$. Let $G = \GL(X, Y) \subset \GL(X)$ be the group of
automorphisms of~$X$ that map $Y$ to itself. Then $\psi_Y$ is
$G$-equivariant.
\end{proposition}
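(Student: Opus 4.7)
\medskip
\noindent\textbf{Proof plan for Proposition~\ref{prop.X/Y}.}
The plan is to define $\psi_Y$ first on the dense open subset $\oX\setminus\SS_Y\subset[\oX:\SS_Y]$, then to verify smoothness and the diffeomorphism property by an explicit computation in local coordinates near $\SS_Y$, and finally to invoke compactness together with Proposition~\ref{prop.smfd.crit} to conclude that the image is a weak submanifold. I would pick once and for all a linear complement $Y'$ of $Y$ in $X$, giving $X/Y\simeq Y'$ and $\overline{X/Y}\simeq\overline{Y'}$. The map $\pi_{X/Y}:X\to X/Y$ extends in the obvious continuous way to $\psi_Y^\circ:\oX\setminus\SS_Y\to\overline{X/Y}$: on $\SS_X\setminus\SS_Y$ one sends a direction $\RR_+ v$ to $\RR_+\pi_{X/Y}(v)$, which is well defined because $v\notin Y$ forces $\pi_{X/Y}(v)\neq 0$. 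Since $\oX\setminus\SS_Y$ is an open subset of $[\oX:\SS_Y]$, the only thing to prove is that $\psi_Y^\circ$ extends smoothly across the new boundary face $\beta_{\oX,\SS_Y}^{-1}(\SS_Y)$.

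Next, I would work locally. Using the diffeomorphism $\Theta_n$ of Lemma~\ref{lemma.central}, identify $\oX$ with $\SS^n_1\subset\RR^{n+1}$, and, after choosing adapted coordinates, arrange $Y=\RR^k\times\{0\}$ and $Y'=\{0\}\times\RR^{n-k}$. Near any point of $\SS_Y$, a chart $(x_0,\tilde a,b)\in\RR_1\times\RR^{k-1}\times\RR^{n-k}$ exhibits $\SS_Y$ as the \psubmanifold{} $\{x_0=0,\,b=0\}$. By Lemma~\ref{facto.L} together with the local blow-up model of Equations~\eqref{eq.kappa}--\eqref{eq.Blow.down2}, $[\oX:\SS_Y]$ is then $\SS^{n-k}_1\times[0,\infty)\times\RR^{k-1}$ with blow-down $(\phi,r,\tilde a)\mapsto(r\phi_0,\tilde a,r\phi')$, where $\phi=(\phi_0,\phi')\in\SS^{n-k}_1$. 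A direct computation using the explicit formula for $\Theta_{n-k}$ (Definition~\ref{def.central}) gives
\begin{equation*}
  \psi_Y^\circ(x_0,\tilde a,b)\seq \frac{1}{\sqrt{x_0^2+\|b\|^2}}(x_0,b)\,\in\,\SS^{n-k}_1\,,
\end{equation*}
valued in the local model of $\overline{X/Y}$. Pulling back under the blow-down, the factors of $r=\sqrt{(r\phi_0)^2+\|r\phi'\|^2}$ cancel exactly, so that $\psi_Y^\circ\circ\beta_{\oX,\SS_Y}$ becomes the projection $(\phi,r,\tilde a)\mapsto\phi$. This is manifestly smooth and therefore yields the desired extension $\psi_Y$.

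In the same coordinates, the combined map reads $(\beta_{\oX,\SS_Y},\psi_Y):(\phi,r,\tilde a)\mapsto\bigl((r\phi_0,\tilde a,r\phi'),\phi\bigr)$, which is an injective immersion admitting the smooth left inverse $\bigl((x_0,\tilde a,b),\phi\bigr)\mapsto(\phi,\sqrt{x_0^2+\|b\|^2},\tilde a)$. Since $\oX$, $\overline{X/Y}$ and the blow-up $[\oX:\SS_Y]$ are all compact, $(\beta_{\oX,\SS_Y},\psi_Y)$ is closed, and injectivity (easily checked, the $\psi_Y$-coordinate distinguishing preimages over $\SS_Y$) makes it a homeomorphism onto its image. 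Proposition~\ref{prop.smfd.crit} then forces the image to be a weak submanifold and $(\beta_{\oX,\SS_Y},\psi_Y)$ to be a diffeomorphism onto it. Equivariance is automatic: any $g\in\GL(X,Y)$ preserves $\SS_Y$ and acts smoothly on $\oX$ (by Lemma~\ref{lemma.central}), hence on $[\oX:\SS_Y]$ (by Corollary~\ref{cor.group.action}), and also on $\overline{X/Y}$; the identity $\psi_Y\circ g=g\circ\psi_Y$ holds on the dense subset corresponding to $X$ by equivariance of $\pi_{X/Y}$, and hence everywhere by continuity.

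The one genuinely delicate step is the local smoothness computation across the front face. What makes it go through is the exact cancellation of the radial factor $r$ in the explicit formula for $\psi_Y^\circ\circ\beta_{\oX,\SS_Y}$, and this cancellation is precisely the reason for blowing up along $\SS_Y$ in the first place; the rest of the proposition reduces, once this smoothness is established, to routine applications of the compactness and weak-submanifold criteria already set up in the paper.
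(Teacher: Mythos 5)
Your proposal is correct in substance but takes a genuinely more computational route than the paper. The paper first establishes a global ``pair blow-up'' diffeomorphism $\widetilde\Psi : [\SS^n_1 : \SS^{q-1}]\stackrel{\sim}{\to}\SS^{n-q}_1\times\SS^q_1$ in Lemma~\ref{blow.sphere}, then simply defines $\psi_Y:=\Theta_{n-q}^{-1}\circ p_1\circ\widetilde\Psi\circ\Theta_n^\beta$ and verifies on $X$ that it agrees with $\pi_{X/Y}$; the whole radial-cancellation mechanism is hidden inside Lemma~\ref{blow.sphere}. You instead work directly in a local chart near $\SS_Y$, produce the explicit formula $\psi_Y^\circ(x_0,\tilde a,b)=(x_0,b)/\sqrt{x_0^2+\|b\|^2}$, and observe the cancellation $\psi_Y^\circ\circ\beta(\phi,r,\tilde a)=\phi$ by hand. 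Both arguments are sound and use essentially the same local model; the paper's version is more structural (and reuses Lemma~\ref{blow.sphere} elsewhere), yours is more self-contained and makes the mechanism of the blow-up visible.

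One small but genuine slip: the claimed \emph{smooth} left inverse $\bigl((x_0,\tilde a,b),\phi\bigr)\mapsto\bigl(\phi,\sqrt{x_0^2+\|b\|^2},\tilde a\bigr)$ is not smooth at $(x_0,b)=(0,0)$, which is exactly the front face $r=0$ that you need to control. The formula is a perfectly good \emph{continuous} left inverse, which together with compactness already gives a homeomorphism onto the image, but it does not by itself certify the immersion property across $r=0$ and so cannot be fed into Corollary~\ref{cor.smfd.crit} as written. The fix is easy: on the image one has $(x_0,b)=r\phi$ with $\phi\in\SS^{n-k}_1$, so $r=x_0\phi_0+\langle b,\phi'\rangle$, which is polynomial in the ambient coordinates and therefore a genuinely smooth left inverse near every point of the image. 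With that replacement (or with a direct check that the differential is injective at $r=0$, as the paper does) your argument closes.
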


Again, one can check, that the image $\theta_Y$ is not a submanifold
in the sense of Definition~\ref{def.submanifold-gen};
it is only a \wsbmanifold{} (that is, a submanifold in our weaker sense
of Definition~\ref{def.weak-submanifold}.).

\begin{proof}
In view of the equivariance of $\Theta_n$ and of the bijections in 
\eqref{eq.Z.to.Sn1}, we can assume $X = \RR^n$ and $Y
= \{0 \} \times \RR^q$. We will write $\SS^{q-1}$ and $ \RR^q$
instead of $\{0\} \times \SS^{q-1}$ and $\{0\} \times \RR^q$, for
simplicity. Recall that Lemma~\ref{blow.sphere} yields a
diffeomorphism $\widetilde \Psi : [\SS_{k, k'}^{r, r'} : \{0\} \times
  \SS^{r'}_{k'}]\ \stackrel{\sim}{\longrightarrow}\ \SS^{r-1}_{k}
\times \SS^{r'+1}_{k'+1}$. We shall use this result for $r = n-q+1$,
$r' = q-1$, $k = 1$, and $k' = 0$. Since $\SS^{q-1}_0 = \SS^{q-1}$ and
$\SS_{1, 0}^{n-q+1, q-1}=\SS_1^{n}$, we obtain the diffeomorphism
\begin{equation*}
  \widetilde \Psi : [\SS_1^n :
    \SS^{q-1}]\ \stackrel{\sim}{\longrightarrow}\ \SS^{n-q}_{1} \times
  \SS^{q}_{1}\,.
\end{equation*}
Let $p_1 : \SS^{n-q}_{1} \times \SS^q_{1} \to \SS^{n-q}_{1}$ be the
projection onto the first component.

By definition of the smooth structure on $\overline{X}$, the map
$\Theta_n:\overline{X}\to \SS^n_1= \SS_{1, 0}^{n-q+1, q-1}$ of
Remark~\ref{rem.Theta_n} is a diffeomorphism, and it maps
diffeomorphically $\SS_Y$ onto $\SS^{q-1}$. Then by
Lemma~\ref{lemma.ACN} we obtain a diffeomorphism $\Theta_n^\beta :
[\oX: \SS_Y] \to [ \SS_1^n :\SS^{q-1}]$.

We define $\psi_Y$ as the composition
\begin{equation*}
   [\oX: \SS_Y]\xrightarrow{\Theta_n^\beta} [ \SS_1^n
    :\SS^{q-1}]\xrightarrow{ \widetilde \Psi} \SS^{n-q}_{1} \times
  \SS^q_{1} \xrightarrow{ p_1} \SS^{n-q}_1 \xleftarrow{\Theta_{n-q}}
  \overline{X/Y},
\end{equation*}
in other words  
\begin{equation*}
  \psi_Y \ede \bigl(\Theta_{n-q}\bigr)^{-1} \circ p_1 \circ \widetilde
  \Psi \circ \Theta_n^\beta \, :\, [\oX: \SS_Y] \, \to\,
  \overline{X/Y},
\end{equation*}
and we claim that $\psi_Y$ is the desired extension.

To prove the claim, recall that we defined $\widetilde \Psi$ in
Lemma~\ref{blow.sphere} as the unique continuous extension of the map
\begin{equation*}
    \Psi : \SS_1^n \smallsetminus \SS^{q-1} \ \to \ \SS^{n-q}_1 \times
    \SS^q_1,\quad (\eta,\mu) \ \mapsto\ \Big (\,
    \frac{\eta}{|\eta|},\, \big (|\eta|,\mu \big ) \, \Big ),
\end{equation*}
where $\eta \in\RR^{n-q+1}_1$ and $\mu\in \RR^q$. We write $v\in X=
Y^{\perp} \oplus Y$ as $v=(v_\perp,v_Y)$, that is, $v_Y\in Y$ and
$v_\perp\perp Y$, which means $v_\perp \in Y^{\perp} =
\RR^{n-q}\times\{0\}$. Then, in the case $v_\perp\neq 0$, we have
$\Theta_n (v)= \frac{1}{\<v\>} (1,v)\in \SS_1^n \smallsetminus
\SS^{q-1}$, and in this case we then calculate
\begin{equation*}
  \widetilde \Psi \circ \Theta_n (v) \,=\, \Psi \Big (\,
  \frac{1}{\<v\>} (1,v)\, \Big) \,=\, \Big (\, \frac{1}{
    \<v_{Y^\perp}\>}\, (1,v_{Y\perp}),\ \frac{1}{\<v\>}\, (
  \<v_{Y^\perp}\> , v_Y)\, \Big )\,.
\end{equation*}
By continuity of the extension, this formula even holds for all 
$v\in [\SS_1^n : \SS^{q-1}]$. By formula \eqref{theta.inv.formula} we have
$\bigl(\Theta_{n-q}\bigr)^{-1}(y_0,y_1, \ldots, y_{n-q}) =
\frac{1}{y_0}(y_1,\ldots , y_{n-q})$, if $y_0 >0$.
This formula will be used in the following  straightforward calculation:
\begin{equation*}
  \Theta_{n-q}^{-1} \circ p_1 \circ \widetilde \Psi \circ \Theta_n (v)
  = \Theta^{-1}_{n-q} \Big ( \, \frac{1}{ \<v_{Y^\perp}\> }
  (1,v_{Y\perp}) \, \Big ) = v_{Y^\perp} = \pi_{X/Y}(v)\,.
\end{equation*}
So $\psi_Y$ is indeed the desired extension of $\pi_{X/Y}$.

In the remaining part of the proof, we will use  Proposition~\ref{prop.smfd.crit} 
to show that   $\theta_Y = (\beta_{\oX,\SS_Y}, \psi_Y) : [\oX : \SS_Y] 
\to \oX \times \overline{X/Y}$ is a diffeomorphism on its image, and that 
the image of this map is a \wsbmanifold{} of the product $\oX \times \oXY$. 
One of the conditions required by this proposition is that $\theta_Y$  
be  an injective immersion, which we will check now.

The restriction of the map $\beta_{\oX, \SS_Y} : [\oX: \SS_Y] \to
\oX$ to $\oX \smallsetminus
\SS_Y$ is a diffeomorphism onto its image, by the definition of the blow-up, 
and thus $\theta_Y|_{\oX \smallsetminus\SS_Y}$
is an injective immersion as well.
The complement of $\oX \smallsetminus\SS_Y$ in  $[\oX: \SS_Y]$
is $\beta_{\oX, \SS_Y}^{-1}(\SS_Y) := \SS N^{\oX}_+ \SS_Y
 \simeq \SS_Y \times \overline{X/Y}$. On this set the map $\theta_Y$
 becomes the inclusion map 
\begin{equation*}
   \SS_Y \times \overline{X/Y} \to \oX \times \overline{X/Y}\,, 
\end{equation*}
and obviously this is smooth as well.
As $\theta_Y$ maps $\oX \smallsetminus\SS_Y$ and
$\SS_Y \times \overline{X/Y}$ to disjoint sets,
the injectivity of $\theta_Y$ follows. Similarly,
the differential of the map $\theta_Y$
is also injective at the boundary points. Thus, $\theta_Y$ is an injective
immersion.
Furthermore, $\theta_Y$ is defined on a compact set,
and thus it is a homeomorphism onto
its image. Using Proposition~\ref{prop.smfd.crit}, we see that 
its image, $\theta_Y([\oX : \SS_Y])$, is a
\wsbmanifold{} of $\oX \times \oXY$, and the diffeomorphism 
property follows as well. This completes the proof.
\end{proof}

\begin{remark}
Let $\psi := p_1 \circ \widetilde \Psi$, using the notation of the
proof of Proposition~\ref{prop.X/Y}. We thus have a commutative diagram
\begin{equation}  \xymatrix{
  [\oX: \SS_Y] \ar[r]^{\ \ \ \psi_Y} \ar[d]_{\Theta_n^\beta} &
  \overline{X/Y} \\ [ \SS_1^{n-1} : \SS^{q-1}]
  \ \ar[r]^{\ \ \ \ \psi} & \ \SS^{n-q}_1 \ar[u]_{\Theta_{n-q}^{-1}} }
\end{equation}
\end{remark}

The map $\psi_Y$ was also considered in \cite{GN,
  Kottke-Lin}.

\subsection{Georgescu's constructions using $C^*$-algebras}
\label{ssec.sp}
As mentioned before, one of the main motivations of our work is to 
prove that Georgescu's and Vasy's compactifications of $\RR^{3N}$
described in the introduction  are canonically homeomorphic.
Recall from the Introduction that
Georgescu's construction is that of a spectrum of a commutative
$C^*$-algebra \cite{Georgescu2018,GeIf06, GN}, whereas Vasy used blow-ups
\cite{VasyReg, VasySurv}. Georgescu's construction provides a
topological space, whereas Vasy's construction defines a smooth
manifold with corners. Thus a homeomorphism of these spaces that extends 
the identity of $\RR^{3N}$ is the best that we can hope for. In turn, however,
this homeomorphism will then equip Georgescu's compactification with
the structure of a smooth manifold with corners. To compare the approaches
of these two authors, we need to recall a few facts about commutative
$C^*$-algebras. We refer to \cite{Dixmier,Pedersen} for more basic facts about 
$C^\ast$-algebras.

\begin{definition}\label{def.C*algebre}
A \emph{$C^\ast$-algebra} $A$ is an algebra over $\mathbb{C}$ with a
norm $\|.\|$ and with a map $\ast : A \to A$ such that $A$ is a Banach
algebra and for every $\lambda,\mu \in \mathbb{C}$ and $a,b\in A$, we
have
\begin{enumerate}
\item[(i)] $(a^\ast)^\ast=a$,
\item[(ii)] $(ab)^\ast=b^\ast a^\ast$,
\item[(iii)] $(\lambda a +\mu b)^\ast=\overline{\lambda} a^\ast
  +\overline{\mu}b^\ast$,
\item[(iv)] $\|aa^\ast\|=\|a\|^2$.
\end{enumerate}
The $C^\ast$-algebra is \emph{commutative} if $ab = ba$ for all $a, b
\in A$. 
\end{definition}

Every closed, self-adjoint subalgebra of bounded operators on a Hilbert space
is a $C^*$-algebra. In fact, this is a general example, as a basic result 
is that every $C^*$-algebra is isometrically isomorphic to a 
closed, self-adjoint subalgebra of bounded operators on a Hilbert space.
We shall mostly be interested in the following commutative
$C^*$-algebra.

\emph{Let us fix from now on a finite semilattice $\maF$ of linear subspaces 
of some finite dimensional, real vector space  $X$.} It will be convenient 
to assume that $X \notin \maF$, but that $\{0\} \in \maF$. Our approach
works also without these assumptions, but they do simplify the presentation.

\begin{example}\label{ex.Georgescu}
Give two vector spaces $X$ and $Y\subset X$, the composition 
\begin{equation*}
  X\xrightarrow{\pi_{X/Y}}X/Y\xrightarrow{\mathrm{incl}} \overline{X/Y}
\end{equation*}
induces by pullback an injective map
$\maC(\overline{X/Y})\xrightarrow{\pi_{X/Y}^*} \maC_b(X)$, 
where $\maC_b(X)$ is the $C^*$-algebra
of continuous and bounded complex-valued functions on $X$, 
again equipped with the supremum norm. Recall that $\maF$ be a finite semilattice of linear
subspaces of~$X$, $X \notin \maF$, $\{0\} \in \maF$. 
As in the Introduction, Equation \eqref{eq.def.maES}, let $\mEF X$ be the norm
closed subalgebra of $\maC_b(X)$ generated by the pullbacks of the spaces
$\maC(\overline{X/Y})$, where $Y$ runs over $\maF$. Then $\mEF X$ is
stable for complex conjugation, and hence it is a $C^*$-algebra that contains
$\mathcal{C}(\overline{X})$ because $\{0\} \in \mathcal{F}$.
\end{example}

Another general result is that all commutative $C^*$-algebra
are isometrically isomorphic to one of the form $\maC_0(Z)$, discussed in
the following example.

\begin{example}\label{ex.commutative}
For a locally compact and Hausdorff topological space $Z$, let 
$\maC_0(Z)$ be the algebra of complex-valued continuous function $f$ on $Z$
that vanish at infinity (in the sense that the set $|f(z)| \ge \epsilon > 0$
is compact for all $\epsilon > 0$). We endow $\maC_0(Z)$
with the involution $f^*=\overline{f}$ (the complex conjugation) and
with the norm $\|f\|_{\infty}= \sup\limits_{z \in Z} |f(z)|$. With this
structure, $\maC_0(Z)$ is a commutative $C^*$-algebra. It is unital if,
and only if, $Z$ is compact.
\end{example}

The space $Z$ in the last example can be recovered (up to a homeomorphism) 
from the algebra $\maC_0(Z)$ via \emph{characters}, so we now recall this 
concept.
A \emph{character} of a $C^*$-algebra~$\maA$ is a non-zero $*$-morphism
$\chi : \maA \to \CC$. A basic result is that such a character is
continuous of norm $1$. If~$\maA$ is commutative, we denote by
$\Spec(\maA) \subset \maA^*$ the set of characters of~$\maA$ and endow it with
the topology induced from the weak topology on $\maA^*$. It is a locally compact
space and $\maA \simeq \maC_0(\Spec(\maA))$.
If $\maA = \maC_0(Z)$, then~$Z$ and $\Spec(\maA)$ are homeomorphicm via 
evaluations: $Z \ni z \to \ev_z \in \Spec(\maA) \subset A^*$, $\ev_z(f) := f(z)$.
Although we shall not use this in this paper, let us mention that, in case~$\maB$ 
is non-commutative, it may have very few characters (maybe none!),
so the concept of $\Spec(\maB)$ is not very useful in this case. It is rather
the concept of a primitive ideal spectrum that is useful. Recall that an ideal of 
$\maB$ is \emph{primitive} if it is the kernel of a non-zero irreducible representation 
of~$\maB$. If~$\maA$ is commutative, then the primitive ideals of~$\maA$ are exactly 
the maximal ideal of $\maA$ and there is a one-to-one correspondence between the set of 
characters $\chi : \maA \to \CC$ of $A$ and the set of maximal ideals of $A$. This 
correspondence is given by $\chi \mapsto \ker(\chi)$.

In view of the discussion and the examples just introduced, we can now 
recall Georgescu's definition of the compactification of~$X$.

\begin{definition}\label{def.georgescus.compactif}
Let $\maF$ be a finite semilattice of linear subspaces of the finite
dimensional vector space $X$ with $\{0\} \in \mathcal{F}$ and $X \notin \maF$, as 
(as agreed in the lines before Example~\ref{ex.Georgescu}).
Then the spectrum $\Spec(\mEF X)$ of the algebra introduced
in Equation~\eqref{eq.def.maES} is called \emph{Georgescu's compactification of~$X$ with respect to~$\maF$}.
\end{definition}

This definition makes sense since $\maC_0(X) \subset \mEF X$, by definition
(since we have assumed that $\{0\} \in \maF$, and this assumption was introduced
exactly for this reason.) In \cite[Theorem 4.4]{MNP}, two of the authors of this paper 
(together with  Prudhon) have proved the following result. Recall the
space $\XGV := \overline{\delta_\maF(X)}$ defined in Equation \eqref{eq.def.XGV}.

\begin{proposition}\label{prop.Mougel}
The spectrum $\Spec(\mEF X)$ of $\mEF X$ is homeomorphic to the closure 
$\XGV := \overline{\delta_\maF(X)}$ of the image of~$X$ in the product 
$\prod_{Y \in \maF} \overline{X/Y}$ under the ``diagonal'' map 
$\delta_\maF : X \to \prod_{Y \in \maF} \overline{X/Y}$, $\delta_\maF(x) := (\pi_Y(x))_{Y \in
  \maF}$. More precisely, the homeomorphism $\Phi_\maF :
\overline{\delta_\maF(X)} \to \Spec(\mEF X)$ is given as follows. Let
$z=(z_Y)_{Y\in \maF }$ be in the closure of $\delta_\maF(X)$. Then the
homeomorphism $\Phi_\maF$ sends $z$ to the character $\chi_z$
defined by $\chi_z(f_Y) = f_Y(z_Y)$ whenever $f_Y \in \maC(\overline{X/Y})$.
\end{proposition}

Proposition~\ref{prop.Mougel} thus identifies the spectrum $\Spec(\mEF X)$ of 
the $C^*$-algebra $\mEF X$ (Georgescu's space) with the space $\XGV := 
\overline{\delta_\maF(X)}$ introduced in \cite{MNP} and recalled in 
Equation \eqref{eq.def.XGV}.

\subsection{Identification of the Georgescu and Vasy spaces}
Recall that beginning with Example~\ref{ex.Georgescu}, we have assumed that
$\maF$ denotes a finite semilattice of linear subspaces of~$X$ with $\{0\} \in \maF$,
$X \notin \maF$. Let $\SS_{\maF} := \{ \SS_Y = \SS_X \cap \overline{Y} \mid Y   \in \maF \}$ 
be the semilattice introduced in Equation~\eqref{eq.def.SSmaF}.
Then $\emptyset \in \SS_{\maF}$, as it corresponds to the subspace $\{0\}
\subset X$ that was assumed to be in $\maF$. Moreover, $\SS_{\maF}$ is a 
clean semilattice and we endow it with an admissible order.
Recall also the space $\XGV := \overline{\delta_{\maF}(X)}$ introduced
in Equation~\eqref{eq.def.XGV} in the Introduction and the fact that
the graph blow-up $\bl{\oX: \SS_{\maF}}$ has a natural structure
of manifold with corners, by Theorem~\ref{thm.main1}.

\begin{proposition} \label{prop.graph.Mougel}
The product map 
\begin{equation*}
   \Psi_{\maF} := \prod_{Y \in \maF} \psi_Y : \prod_{Y \in \maF} [\oX: \SS_{Y}] \to 
   \prod_{Y \in \maF} \oXY
\end{equation*}
of the maps $\psi_Y$ of Proposition~\ref{prop.X/Y} induces
a diffeomorphism of the graph blow-up 
$\bl{\oX: \SS_{\maF}} \subset \prod_{Y \in \maF} [\oX: \SS_{Y}]$ 
onto its image. Moreover, its image is  $\overline{\delta_{\maF}(X)} =: \XGV$, 
so the latter is a \wsbmanifold{} of the product, and hence a manifold
with corners on its own.
\end{proposition}

\begin{proof} 
Let 
\begin{equation*}
   \Theta_\maF :=  
   \prod_{Y \in \maF} \theta_Y : \prod_{Y \in \maF} [\oX: \SS_{Y}] \to 
   \prod_{Y \in \maF} (\oX \times \oXY)\,,
\end{equation*}
be the product of the maps $\theta_Y$ of Proposition~\ref{prop.X/Y}.
By that proposition, the map $\Theta_\maF$ is a product of injective immersions, 
and hence it is an injective imersion itself. 
Hence $\Theta_\maF$ maps $\bl{\oX: \SS_{\maF}}$
diffeomorphically onto its image. The restriction of the map $\Theta_\maF$ to $\bl{\oX: \SS_{\maF}}$
repeates the component corresponding to $\oX$, and hence,
it is obtained from $\Psi_{\maF} := \prod_{Y \in \maF} \psi_Y$ by repeating these components.
(Note that if $Y = \{0\}$, then $\psi_Y = \psi_{\{0\}} = \id : \oX \to \oX = \oXY$,
so all the $\oX$ component can be obtained from $\psi_{\{0\}}$, and hence
from $\Psi_{\maF} : \prod_{Y \in \maF} \psi_Y$ as well.
(This is yet another reason why we assume $\{0\} \in \maF$.)
It follows that $\Psi_{\maF}$ also maps $\bl{\oX: \SS_{\maF}}$
diffeomorphically onto its image in $\prod_{Y \in \maF} \overline{X/Y}$.
The result follows since $X$ is dense in both $\bl{\oX: \SS_{\maF}}$ and in 
$\XGV := \overline{\delta_{\maF}(X)}$.
\end{proof}

We now obtain the desired diffeomorphism between Vasy's space $[\oX : \SS_{\maF} ]$
with the Georgescu-Vasy space $\XGV := \overline{\delta_{\maF}(X)}$ introduced 
in \cite{MNP} (see Equation \eqref{eq.def.XGV}).

\begin{proposition}\label{prop.cor.Mougel}
The product map 
\begin{equation}
    \Xi_{\maF} : [\oX : \SS_{\maF} ] \ede \prod_{Y \in \maF} 
    \psi_Y \circ \phi_{\SS_{\maF}, \SS_Y} 
    \ \to \ \prod_{Y \in \maF} \overline{X/Y}
\end{equation}
of the composite maps 
\begin{equation*}
     [\oX: \SS_{\maF}]
  \stackrel{ \phi_{\SS_{\maF}, \SS_Y} }{\ - \! - \! \! \! \longrightarrow\ } 
  [\oX: \SS_Y] \stackrel{\psi_Y}\longrightarrow \overline{X/Y}
\end{equation*} 
is a diffeomorphism onto $\XGV := \overline{\delta_{\maF}(X)}.$
Let $G$ be a Lie group of linear
automorphisms of~$X$ that map elements of $\maF$ to elements of
$\maF$ (thus $g(\SS_{\maF}) = \SS_{\maF}$ for all $g \in G$). Then $G$ acts smoothly on
$[\oX: \SS_{\maF}]$, the map $\Xi_{\maF}$ is $G$-equivariant,
and $G$ acts smoothly on $\XGV$.
\end{proposition}

For $Y = \{0\}$, we have $\SS_Y = \emptyset$ (yet another reason for
requiring $\{0\} \in \maF$ and $\emptyset$ to belong to our semilattices),
and hence the map $\psi_Y \circ \phi_{\SS_{\maF}, \SS_Y} = \psi_Y
\circ \phi_{\SS_{\maF}, \emptyset}$ is simply the blow-down map $[\oX :
  \SS_{\maF}] \to \oX$.

\begin{proof}
This follows by combining Proposition~\ref{prop.graph.Mougel} 
with Theorem~\ref{thm.main1} applied to the 
semilattice $\SS_{\maF}$ of closed \psbmanifolds\ of $\oX$ (so $\maS$ of that
theorem is replaced by $\SS_{\maF}$). More precisely, in that theorem, 
the pairs $(\maS, P_j)$  are replaced with the pairs
$(\SS_{\maF}, \SS_Y)$, $Y \in \maF$, and the map $\maB_\maS$
is replaced with the map $\maB_{\SS_{\maF}} := 
\prod_{Y \in \maF} \phi_{\SS_{\maF}, \SS_Y}$. 
Thus Theorem~\ref{thm.main1} gives a diffeomorphism
$\maB_{\SS_{\maF}} : [\oX : \SS_{\maF}] \to \bl{\oX : \SS_{\maF}}$
(including a manifold with corners structure on the latter).
The fact that $\Xi_{\maF}$ is a diffeomorphism with
the stated properties follows from the diffeomorphism $\Psi_{\maF} : 
\bl{\oX : \SS_{\maF}} \to \XGV$ of Proposition~\ref{prop.graph.Mougel}
and the fact that $\Xi_{\maF} = \Psi_{\maF} \circ \maB_{\SS_{\maF}}$.

Finally, the action of $G$ and the fact that
$\Xi_{\maS}$ is $G$-equivariant follow from the fact that all
the maps used to definite $\Xi_{\maS}$ are $G$-equivariant and 
from Proposition~\ref{prop.group.action}.
\end{proof}

Combining Propositions~\ref{prop.Mougel} and~\ref{prop.cor.Mougel}, 
we obtain the following result.

\begin{theorem}\label{thm.main2}
Let $\maF$ be a finite semilattice of linear subspaces of~$X$ containing
$\{0\}$ and $\SS_{\maF} := \{\SS_Y \mid Y \in \maF \}$ be as in 
Equation \eqref{eq.def.SSmaF}.
There exists a unique homeomorphism
\begin{equation*}
   \Spec(\mEF X) \ \simeq\ [\oX : \SS_{\maF} ] 
\end{equation*}
that is the identity on $X$.
\end{theorem}

\begin{proof} 
Let $\delta_\maF : X \to \prod_{Y \in \maF} \overline{X/Y}$ be the diagonal
map. Proposition~\ref{prop.Mougel} states that we have a homeomorphism
$\Spec(\mEF X) \to \XGV := \overline{\delta_\maF(X)}$. 
The result follows from Proposition~\ref{prop.cor.Mougel}, which
states that the map $\Xi_{\maF}$ defined on $[\oX: \SS_{\maF}]$ is a
diffeomorphism onto $\XGV$.
\end{proof}

To conclude, the above results show that the following spaces:
\begin{itemize}
  \item the iterated blow-up $[\oX : \SS_{\maF}]$ (Vasy's space),
  \item the graph blow-up $\bl{\oX: \SS_{\maF}}$, 
  \item $\XGV := \overline{\delta_\maF(X)}$ of Equation \eqref{eq.def.XGV}, and
  \item $\Spec(\mEF X)$ (Georgescu's space)
\end{itemize}
are all homeomorphic.
This yields the sequence of homeomorphisms \eqref{eq.plan.proof} of the Introduction.
More precisely, we can complete that equation with the explicit diffeomorphisms
proved (in order) in Theorem~\ref{thm.main1} (for $\maB := \maB_{\SS_\maF}$), 
Proposition~\ref{prop.graph.Mougel} (for $\Psi := \Psi_{\maF}$), and,
finally, Proposition~\ref{prop.Mougel} for the last morphism $\Phi := \Phi_\maF$
\begin{equation}
  [\oX : \SS_{\maF}] \stackrel{ \maB }{\ \longrightarrow \ } \bl{\oX: \SS_{\maF}}
   \stackrel{ \Psi }{\ \longrightarrow \ } \XGV \ede \overline{\delta_\maF(X)}  
   \stackrel{ \Phi }{\ \longrightarrow \ } \Spec(\mEF X) \,. 
\end{equation}
Any of these spaces will be denoted $\XGV$ from now on and called the 
{\em Georgescu-Vasy space}.
We obtain as a corollary the following description for the 
space introduced in \cite{GeIf06, GN} (the ``small Georgescu space'').

\begin{remark}\label{rem.Georgescu}
In \cite {GeIf06, GN}, Georgescu and his collaborators have
considered the norm closed subalgebra of functions $\mfkA_\maF$ of
$L^\infty(X)$ generated by all the algebras $\maC_0(X/Y)$
with $Y \in \maF$. This
corresponds to potentials that have zero limit at infinity on
$X/Y$. The spectrum of this algebra (after adjoining a unit)
identifies with the closure of the image of the diagonal map of
$X$ to $\prod_{Y \in \maS} (X/Y)^+$, where $Z^+$ denotes the one point 
compactification of a locally compact space $Z$. 
(This is a result analogous to Proposition~\ref{prop.Mougel},
likewise proved in \cite{MNP}.)  Since $\mfkA_\maF \subset \mEF X$, we
obtain that $\Spec(\mfkA_\maF)$ is a quotient of $\Spec(\mEF X)$, and
hence also a quotient of $$\XGV := [\oX: \SS_{\maF}],$$ by Theorem~\ref{thm.main2}. Generally, the topology on $\Spec\left(\mfkA_\maF\right)$ is
rather complicated and singular, see also \cite[Section 5]{Mageira2}
for concrete examples when $\dim(X) = 2$.
\end{remark}

\section{Applications to the $N$-body problem}\label{sec7}

Our main motivation is that, by identifying the spaces appearing in Georgescu's and
Vasy's constructions (Theorem~\ref{thm.main2}), one will be able to combine the results 
and the techniques in their papers and in other related papers to obtain new results.
(Among the papers that we have in mind are Georgescu's works
\cite{GeorgescuBookNew, DaGe04, Georgescu2018,GeIf06, GN}
and Vasy's papers \cite{VasyReg, VasySurv} as well as in \cite{DerGer2, Kottke-Lin, 
KottkeMelrose, MPR1, MantoiuJOT}, and in the references therein). 
In this spirit, in this section, we  discuss some applications of our
results. We provide a brief, but complete account of these applications based on
a complete set of references.

\subsection{The $N$-body semilattice and Pauli exclusion principle}\label{subsec.N-body}
The setting considered in the previous sections of a semilattice $\maF$ 
of linear subspaces of a vector space $X$ is inspired from the $N$-body
problem. In this subsection, we explain the concrete choice
of $\maF = \maF_N$ and of~$X$
in the case of the $N$-body problem and notice that it is compatible
with symmetry and antisymmetry assumptions as, for instance, the Pauli exclusion 
principle. More precisely, the Georgescu-Vasy space associated
to the semilattice of the effective $N$-body problem
carries a natural, concrete action of the symmetric group $S_N$ 
(the permutation group on $N$ letters). This subsection, while relevant
on its own, also sets the stage for the applications in the following sections.

\subsubsection{The semilattice of the $N$-body problem}\label{subsubsec.semilat}
Here is what the choices of~$X$ and $\maF$ are for the 
effective Hamiltonian $H_{N-1}^{\eff}$ of the $N$-body
problem.

\begin{example}\label{ex.Nbody}
In the concrete case of the Hamiltonian $H_N'$ of Equation
\eqref{eq.BO}, we take $X := \RR^{3N}$ and consider the subspaces
\begin{equation*}
  \begin{gathered}
  Y_j \ede \{ x = (x_1, x_2, \ldots, x_N) \in \RR^{3N} \ \vert \ x_j =
  0 \} \quad \mbox{and} \\
  Y_{ij} \ede \{ x = (x_1, x_2, \ldots, x_N) \in \RR^{3N} \ \vert
  \ x_j = x_j \}\,, \ \ i \neq j \,.
  \end{gathered}
\end{equation*}
Thus each $x_i \in \RR^3$. We let $\maF := \maF_N$, be the semilattice 
generated by the subspaces $Y_i$ and $Y_{ij}$, $i, j \in \{1, 2, \ldots, N\}$ 
\cite{GeorgescuBookNew, DerGer2}. (This example is related to the Born-Oppenheimer
approximation for a system with a single nucleus \cite{JeckoBO}.)
The case of $H_N$ of Equation \eqref{eq.def.HN} 
is very similar: we only consider the subspaces $Y_{ij}$. We note, however, that, in this
case $\{0\}$ will not be in the semilattice generated, but the minimal element
is the subspace $\{(x, x, \ldots, x) \mid x \in \RR^{3} \}$. That is not
a real problem, however, since the condition $\{0\} \in \maF$ is imposed
only for convenience. Besides, one can always increase $\maF$ by including
also the zero subspace. This problem does not arise in the case of $H_{N}'$ or
  $H_{N-1}^{\eff}$. For $H_{N-1}^{\eff}$, the semilattice would be more difficult 
  to describe. See the last chapter of \cite{GeorgescuBookNew} for a complete 
  treatement of this class of examples.
\end{example}

In particular, our results give the following.

\begin{remark}\label{rem.Nbody} 
Let us consider the case of the Hamiltonian $H_N'$ of Equation \eqref{eq.BO}, 
the case of the usual $N$-body Hamiltonian $H_N$ being completely similar.
Let the vector space be $X := \RR^{3N}$ and let the semilattice 
$\maF:=\maF_N$ be as 
in the last example,  Example~\ref{ex.Nbody}. Let $\SS_{\maF_N} := \{\SS_Y \mid Y \in \maF_N \}$ 
be the finite semilattice of closed \psbmanifolds\ of
$\oX$ as in Equation~\eqref{eq.def.SSmaF}. 
Then our results, especially Theorem~\ref{thm.cor.main1} imply
that $M_N := [\oX: \SS_{\maF_{N}}] = \XGV$, the Georgescu-Vasy space associated
to the semilattice~$\maF_{N}$, will be endowed with natural, smooth
actions of the following groups:
\begin{itemize}
  \item $S_{N}$, the symmetric group
    on $\{1,2,\ldots,N\}$, acting on the variables  by permutation;
  \item $\GL(3, \RR)$ acting diagonally on
    the components of $X := \RR^{3N}$; and
  \item $X$, extending the action by translation on itself.
  (This is valid for all semilattices $\maF$ of linear
  subspaces of~$X$, not just for $\maF_{N}$, yielding a smooth action of~$X$ on $\XGV$).
\end{itemize}

(These actions can also be obtained from Theorem~\ref{thm.main2} and 
Proposition~\ref{prop.cor.Mougel}.) These actions are easy to obtain at the level of spectra of
$C^*$-algebras or for the graph-family blow-up, but more difficult to
obtain geometrically using iterated blow-ups. In particular, in \cite{MelroseSinger}, 
it was formulated the problem of constructing a compactification of $\RR^{3N}$
endowed with the action of the symmetric group as above. Answers to this problem 
were provided in \cite{Kottke-Lin, MougelPrudhon}. The smoothness of these actions
is based on Theorem~\ref{thm.main2}, since the groups act smoothly on each $\oXY$,
$Y \in \maF$.
\end{remark}

See also  \cite{BaerCMP15, BaerGinoux, Dappiaggi1, Chrusciel, Dappiaggi2, 
GGH15, GerardSurvey, GerardStoskopf, Schrohe, KRS} 
for physically relevant results that can point out to further extensions of
our work, including to Quantum Field Theory on a curved space-time.

\subsubsection{Symmetry, antisymmetry and the Pauli exclusion principle}\label{pauli}
As already remarked above, the action of the symmetric group $S_N$ on 
$M_N := [\oX : \SS_{\maF_N}]$ is important for applications. Recall that, in the motivational 
part of the Introduction, we allowed mixed systems of 
particles. Some of them will be bosons, in which case the wave function will be symmetric under permuation 
of two variables corresponding to bosons of the same kind. Other particles will be fermions (for instance,
electrons), in which case the wave function is antisymmetric under permutations of two variables corresponding 
to fermions of the same kind. This is commonly known as the \emph{Pauli exclusion principle}. In total, 
we consider the subgroup $\Gamma \subset S_N$ of permutations of particles of the same kind, and we obtain a 
map $\chi: \Gamma \to \{-1,+1\}$ such that only functions with 
  $$f:\RR^{3N}\to \CC,\quad f(gx)=\chi(g) f(x)$$
are allowed as wave functions for physical reason, where $\Gamma $ acts on $\RR^{3N}$ by 
permutation of components. 
It is thus helpful that we have proven, see Theorem~\ref{thm.cor.main1}, that this 
$\Gamma $-actions extends to the compactification $M_N$.

In fact the situation becomes slightly more complicated if some particles will have spin, 
which implies -- in mathematical terms -- that they are vector valued. As an example, which 
hopefully is representative of 
the general case, let us explain the case of $N$ electrons. As electrons have spin $1/2$ , we should 
enlarge the target of the wave function and discuss functions
\begin{equation*}
   \Psi:\RR^{3N}\to (\CC^2)^{\otimes N} \,.
\end{equation*}  
Here the tensor product is the tensor product over $\CC$ and the $k$-th factor models the spin of the 
$k$-th electron. Let $S_N$ act on the target $(\CC^2)^{\otimes N}$ by permutation of the factors, that is,
$g(v_1\otimes \cdots\otimes v_N)= (v_{g(1)}\otimes \cdots\otimes v_{g(N)})$
and on $\RR^{3N}$ be exchanging the component vectors, more precisely,
$g(x_1,\ldots,x_N)$ to $(x_{g(1)},\ldots,x_{g(N)})$.
The Pauli exclusion principle states 
that the physically allowed wave functions are described by functions satisfying
\begin{equation*}  
  \Psi(g(x))=\sgn(g) g(\Psi(x))\,.
\end{equation*}

\subsection{Vasy's pseudodifferential calculus and Georgescu's algebra}\label{ssec.psdo}
We will now return to the more general setting 
of a general semilattice $\maF$ of linear subspaces of a finite-dimensional vector space $X$, 
using the notation of \eqref{eq.def.SSmaF}. Recall that we are assuming,
for convenience, that $\{0\} \in \maF$ and $X \notin \maF$ (this is no loss of
generality, since our argument works in general but is just a little bit
more involved; moreover, the general case can be reduced to this one).

The action of~$X$ by translation on 
\begin{equation}\label{eq.def.Omega}
    \XGV = [\oX: \SS_{\maF}]
\end{equation}
(see Remark~\ref{rem.Nbody} and Theorem~\ref{thm.cor.main1})
can be used to define Georgescu's algebra and (possibly) Vasy's 
pseudodifferential calculus, along the lines of Georgescu's method \cite{Georgescu2018, GeIf06}
(see also \cite{aln2}). Let us outline this construction and derive
some consequences.

Let $\sS(X)$ denote the Schwartz space of smooth, rapidly decreasing functions on $X$. 
Any $f \in \sS(X)$ gives rise to a  convolution operator $f(T):L^2(X)\to L^2(X)$, $h\mapsto f*h$. 
In the notation $T:X\to \maL(L^2(X))$, $q\mapsto T_q$ stands for the translation operator 
$T_qf(x)=f(x+q)$ as, for instance, in \cite{GeIf06, GN}. In fact, much more general functions~$f$ 
can be allowed here, such as function whose Fourier transform is a classical symbol. 
Similarly, a function  $g\in\maC(\XGV)$ gives rise to a multiplication operator $M_g$ on 
$L^2(X)$. By results of Georgescu \cite{GeIf06} (using also Theorem~\ref{thm.main2}),
Georgescu's algebra $\maC(\XGV) \rtimes X$
is the norm closure of the algebra generated by operators of
the form $M_gf(T)$ acting on $L^2(X)$. So, if $\maL(\maH)$ denotes
the algebra of bounded operators on a Hilbert space~$\maH$, then we obtain
$\maC(\XGV) \rtimes X \subset \maL(L^2(X))$. The resulting subalgebra 
$\maC(\XGV) \rtimes X$, called crossed product, is norm closed and
closed under adjoints, hence is a $C^*$-algebra.
See \cite{GeIf06, Georgescu2018} for the details on the link between
the crossed product by $\mathbb{R}^n$ of such a commutative $C^*$-algebra
and operators of the form $M_gf(T)$. 

If $f$ is such that its Fourier transform is a classical symbol 
of order $m$ on $X$, then $P := M_g f(T)$ is a pseudodifferential 
operator. Classically then, its distribution kernel $k_P \in \maD^\prime$ is  a 
classical conormal distribution in $I^m(X \times X; X)$, with $X$ 
diagonally embedded in ${X \times X}$. (See \cite{Hormander3} for the definition
of (classical) conormal distributions.)
The map $(x_1, x_2) \to x_1 - x_2$ extends then to a smooth map of pairs
$(X \times X, X) \to (\XGV \times X; \XGV)$, with 
the embedding $\XGV \simeq \XGV \times \{0\} \subset \XGV \times X$.
This embedding sends~$k_P$ to $g \otimes f$, and hence $k_P$
can be identified with a classical conormal distribution 
in $I^m(\XGV \times X ; \XGV)$ (this is a particular case of the 
construction in \cite{aln2}).
Let $I_c^m(\XGV \times X ; \XGV)$ be the set of such distributions with
compact support. Then it follows that 
\begin{equation*}
   \Psi_c^\infty(\XGV)  \ede I_c^\infty(\XGV \times X; \XGV) 
   \ede  \bigcup_{m \in \ZZ} I_c^m(\XGV \times X ; \XGV)
\end{equation*}
is a filtered algebra acting by convolution on (suitable) functions $X \to \CC$ 
as an algebra of pseudodifferential operators \cite{aln2}. (Recall that we are
considering only classical conormal distributions and the index ``$c$''
comes from ``compact support.'' Vasy's $N$-body calculus $\Psi_{N}^\infty(X)$ is
certainly bigger and better than $I_c^\infty(\XGV \times X; \XGV)$
in the sense that it contains the
resolvents of its $L^2$-invertible operators. Let
\begin{equation*}
   I_c^{- \infty}(\XGV \times X; \XGV) \subset \sS(X) \otimes_{\pi} 
   \CI(\XGV) \subset I^{-\infty}(\XGV \times X; \XGV)
\end{equation*}
be the projective tensor product. We have good reasons to believe
and hence we conjecture that Vasy's $N$-body calculus can be identified with
\begin{equation}\label{eq.Vasy}
   \Psi_{NB}^\infty(\XGV) 
   \ede I_c^\infty(\XGV \times X; \XGV) + 
   \sS(X) \otimes_{\pi} \CI(\XGV) \,.
\end{equation}
We need to include $\sS(X) \otimes_{\pi} \CI(\XGV) := \sS(X ; \CI(\XGV))$ on the 
right hand side to accomodate operators of the form $M_g f(T)$ with $f \in \sS(X)$
{\em with non-compact support,} since $M_g f(T) \in I_c^m(\XGV \times X; \XGV)$
if, and only if, $f$ is compactly supported (recall that $\hat f$ is a classical
symbol of order $m$).

\begin{proposition} \label{prop.sp.inv}
We define
\begin{equation*}
     \Psi_{NB}^n(\XGV) 
      \ede I^n_c(\XGV \times X; \XGV) + 
     \sS(X) \otimes_{\pi} \CI(\XGV)\,.
\end{equation*}
The space $\Psi_{NB}^\infty(\XGV) \ede \bigcup_k \Psi_{NB}^k(\XGV)$
is a filtered algebra that is closed under holomorphic functional calculus.
Let $D$  be a strongly elliptic differential operator of order $m > 0$, with constant 
coefficients and $v_Y \in \CI(\oXY)$. Then
$H_N^\prime := D + \sum_{Y \in \maF} v_Y \in \Psi_{NB}^m(\XGV).$
Consequently, for all $\lambda \notin \Spec(H_N^\prime)$, we have 
\begin{equation*}
    (H_N^\prime - \lambda)^{-1} \in \Psi_{NB}^{-m}(\XGV) 
    \ede I_c^{-m}(\XGV \times X; \XGV) + 
    \sS(X) \otimes_{\pi} \CI(\XGV)\,.
\end{equation*}
In the case  $X := \RR^{3N}$ and $\maF$ ans in the $N$-body problem, the action of the symmetric group $S_N$ on $X$
induces an order-preserving automorphism of the algebra $\Psi_{NB}^\infty(\XGV)$.
\end{proposition}

\begin{proof}
(Sketch) There are two main things to prove here: first, that the convolution
product makes 
$\sS(X) \otimes_{\pi} \CI(\XGV)  := \sS(X ; \CI(\XGV))$ an algebra
and, second, that it is stable for holomorphic functional calculus
(equivalently in this case, that the algebra with adjoint unit contatins the resolvents of its 
$L^2$-invertible elements). The first question is answered by noticing that
the action of~$X$ on $\CI(\oXY)$ is with polynomial growth (this is quite
unusual for the action of~$X$ on a manifold!) and hence it is again with
polynomial growth on $\XGV$ in view of our Theorem~\ref{thm.main2}. The
second question is answered by using the results of \cite{LMN2} as follows.
We consider three families of operators on $L^2(\XGV \times X)$, 
possible unbounded (so not defined everywhere). Let $A_1$ be the set of 
differential operators on $\XGV$, let~$A_2$ the set of multiplication
operators with polynomials on $X$ and, finally, let~$A_3$ be the set of constant coefficients differential operators on $X$.
Then 
\begin{equation*}
    \sS(X) \otimes_{\pi} \CI(\XGV) \ede \left\{\, f \in \maC(\XGV) 
    \rtimes X \,\Big|\, \bigl[[f, P_1], P_2\bigr] P_3 \mbox{ is bounded}\, 
    P_j \in A_j 
     \right\} \,.
\end{equation*}
The results of \cite{LMN2}, especially Theorems 2 and 3, then give that
$\sS(X) \otimes_{\pi} \CI(\XGV)$ is spectrally invariant (\ie\ stable
under holomorphic functional calculus).
\end{proof}

We ignore if one can replace in the resolvent estimate of the last proposition
$H_N^\prime$ with $H_N$ or with $H_{N-1}^{\eff}$,
which are not in $\Psi_{NB}^\infty(\XGV)$,
since these operators allow for Coulomb singularities in the potential. This brings us to 
one of our main reasons for considering Georgescu's algebras $\mEF X \rtimes X$ 
instead of a pseudodifferential calculus (and one of the reasons why
we may need to take norm closures), namely, that Georgescu's algebra does 
not suffer from this deficiency, and, in fact, one has
\begin{equation}\label{eq.res.G}
   (H_N - \lambda)^{-1} \in  \mEF X \rtimes X  \,, \quad \lambda \notin \Spec(H_N)\,.
\end{equation}
\cite{GeorgescuBookNew, GeIf06, Georgescu2018} (this is a consequence of Hardy's 
inequality and is explained also in \cite{GN}). Of course, the above proposition 
provides a much more precise result, when applicable, but is also much more difficult 
to prove than the relation of Equation \eqref{eq.res.G}. Let us notice, moreover,
that $\mEF X \rtimes X$ is the norm closure of 
$\Psi_{NB}^{-1}(\XGV)$
in $\maL(L^2(X))$, the algebra of bounded operator on $L^2(X)$.

\subsection{Connections to the HVZ theorem} \label{ssec.HVZ}
The algebras considered in the previous
subsection were introduced, in part, in order to obtain conceptual proofs 
and extensions of the classical HVZ theorem, named after Hunziker, van Winter, and Zhislin,
describing the essential spectrum
of $N$-body Hamiltonians $H$ \cite{DerGer2, DerGer1, Georgescu2018, MantoiuJOT, Teschlbook}. 
It is well-known that the operators $H$ considered here are self-adjoint. We have that 
$\lambda$ {\em is not} in the essential spectrum of $H$ if,
and only if, $H - \lambda$ is Fredholm. 
Our next application is of a conceptual nature on how to relate the
HVZ theorem with other classical Fredholm results in PDE theory. 
There exist many refinements of the HVZ theorem,
in the simple setting of an atom with $N$-electrons, we refer to 
\cite[Section 11, Theorem 11.2]{Teschlbook}; for a more general
version see \cite[Theorem XIII.12]{ReedSimon4}.
The HVZ theorem determines the essential spectrum of the Hamiltonian $H_N$ in
terms of other, simpler Hamiltonians $H_{N_\alpha}$, where $\alpha$
ranges over a certain index set. The operators $H_{N_\alpha}$ are usually
called ``limit operators,'' and can indeed be obtained as
strong limits of translations of $H_N$.
Very powerful
generalizations of the HVZ theorem were obtained by Georgescu
(using $C^*$-algebras \cite{GeorgescuBookNew, Georgescu2018, GN})
and by many other authors -- more on this below.

Nowadays, there are many 
results telling us when (pseudo)differential operators on non-compact
or singular spaces are Fredholm, and typically they are also in terms of
certain ``generalized limit operators'', the terminologies ``indicial
operator'' or ``normal symbol'' are also used  by Melrose and Schulze independently. 
We refer to \cite{MelroseVSSchulze} for a overview and comparison between these two approach.
Results of this type go back at least to Kondratiev's 1967 celebrated 
paper \cite{Kondratiev67}. Some of the strongest current results
are based on groupoids, see \cite{CCQ, CNQ, MaNi} and
the references therein. We also refer to
\cite{ComeFredholm, MantoiuJOT} for the case when the groupoid is obtained from the action
of a group on a space, as it is our case in this paper. The results
are in terms of orbits, their isotropies, and the induced operators.
In fact, each of these induced operators, referred to as ``a generalized limit
operator'' above, acts on the product of the corresponding orbit 
with the corresponding isotropy group and is invariant with respect
to that group.

A natural question is to reconcile the classical results on $H_N$ using limit
operators with the classical PDEs results based on ``generalized limit
operators''. This is almost done by \cite{VasyAsympt}, 
except that it is not clear whether the resolvents of $H_N$
belong to Vasy's pseudodifferential calculus. (We do know, however, that
the resolvents of $H_N$ belong to
the norm closure of the pseudodifferential calculus introduced in the
previous subsection, as discussed in the previous subsection.
Note that here we are using our Theorem~\ref{thm.main2}.)

The Fredholm results just mentioned, do apply, however, also to the norm
closure of the corresponding pseudodifferential calculi, and hence, 
in principle, the HVZ theorem could then be obtained from the structure of the orbits
of the action of~$X$ on $\XGV := [\oX: \mathbb{S}_\mathcal{F}]$ and their isotropies (both the orbits
and the isotropies are linear subspaces of~$X$) and the explicit form of
the generalized limit operators. 
However, in order not to increase too much the length 
of this paper, we leave this for a future publication. 
Nevertheless, it is interesting to point out  
that this approach has the potential to provide Fredholm conditions for
the restrictions of $H_N$ and its variants to the isotypical components
of the action of $S_N$ or some subgroup of $S_N$. Results in this direction
(for operators on compact manifolds) were recently obtained in 
\cite{BCLN1, BCLN2, Baldare3}.
See also \cite{ComeFredholm, DLR, GN, MougelH, MNP, MaNi} for related results.

\subsection{A regularity result for bound states}\label{subsec.regres} 
\label{ssec.reg}
An application of our results to regularity for bound states for Schr\"odinger
operators with inverse square potentials is contained in our recent
preprint \cite{AMN2}. Here we just quickly explain the result. Let
\begin{equation}\label{eq.def.maS}
   \overline{\maF} \ede \left\{\, \overline{Y} \Bigm |
  Y \in \maF  \right\}\,,
\end{equation}
which is a clean semilattice that we endow with an admissible order. Let also
\begin{equation}
   X_{\maF} \ede \big[\XGV: \overline{\maF}\big] \seq \big[\oX :\mathbb{S}_\mathcal{F} 
   \cup \overline{\maF}\big] \,.
\end{equation}
For instance, if $\mathcal{F} = \big\{\{0\},Y_1,Y_2\big\}$ with $Y_1 \subset Y_2$, then
$X_\mathcal{F} = \big[\oX: \mathbb{S}_{Y_1},\mathbb{S}_{Y_2}, \{0\}, \overline{Y}_1, 
\overline{Y}_2\big]$. Our results then show that 
$\big[\oX:\mathbb{S}_{Y_1},\mathbb{S}_{Y_2}, \{0\}, \overline{Y}_1, \overline{Y}_2\big] 
\simeq \big[\oX: \{0\}, \mathbb{S}_{Y_1}, \overline{Y}_1,
\mathbb{S}_{Y_2},\overline{Y}_2\big]$. See also \cite{Kottke-Lin}.

For each $Y \in \maF$, let $a_Y, b_Y \in \CI(X_{\maF})$ and let $d_Y$ denote the
distance to $Y$ in some fixed euclidean metric on $X$. Let also $c \in \CI(X_{\maF})$.
A function of the form 
\begin{equation}  \label{eq.def.genV}
    V(x) \ede \sum_{Y \in \maF} ( a_Y(x) d_Y(x)^{-2} 
    + b_Y(x) d_Y(x)^{-1} ) + c(x)
\end{equation}
will be called an {\em inverse square potential (associated to the semilattice $\maF$)}. 
Let 
\begin{equation}\label{eq.def.rho}
   \rho(x) \ede \min\big\{\dist(x,\unionF), 1\big\}\,,
\end{equation}
where $\dist(x, \unionF)$ is the distance to $x$ to $\maF$ in
some euclidean metric on~$X$.
The following result (which combines techniques of this paper with those in \cite{ACN})
was proved in \cite{AMN2}.

\begin{theorem}\label{theorem.main.reg} 
Let $D$ be a second order strongly elliptic operator with constant coefficients.
Let $V$ be an inverse square potential associated to the semilattice
$\maF$ of linear subspaces of the euclidean space $X$
see Equation~\eqref{eq.def.genV}, $\rho(x) := 
\min\{\dist(x, \unionF), 1\}$, and assume $u \in L^2(X)$ is an 
eigenfunction of $D + V$, that is $(D + V)u = \lambda u$ on 
$X\smallsetminus \unionF$ for some $\lambda \in \CC$, 
then, for all multi-indices $\alpha$, we have
\begin{equation*}
    \rho^{|\alpha|} \pa^\alpha u \in L^2(X)\,.
\end{equation*}
\end{theorem}

Our theorem covers, of course, the case of the operators
$H_N$ and $H_{N-1}^{\eff}$ of Equations~\eqref{eq.def.HN} and~\eqref{eq.def.H1eff} (for $D = - \Delta$).
Also, we note that, since $V$ is not assumed to be real valued, we do
not necessarily have $\lambda \in \RR$.
The regularity of bound states and, in general, the geometry of the 
Georgescu--Vasy space $\XGV = [X: \SS_{\maF}]$
may be useful for approximation 
purposes. In fact, another, related motivation of our work is the
approximation of the isolated eigenfunctions of $N$-body Hamiltonians
using the Finite Element Method. The role of the Georgescu--Vasy space~$\XGV$ 
here is to provide a good underlying support for the construction of the
approximation spaces. This is very tentative yet, but see 
\cite{Flad, Griebel1, HLiN, Yserentant} for some results in this 
direction, including more references. 

A first motivation for inverse square potentials comes
from relativistic physics, where operators of the form ``Dirac plus Coulomb
potential'' are used. The square of these operators will be an operator with
inverse square potentials of the type covered by Theorem~\ref{theorem.main.reg}. 
See also \cite{HLiN} and, especially, the recent paper by Derezi\'{n}ski and Richard 
\cite{DerezinskiRichard} and the references therein for
further physical motivation for inverse square potentials.


\appendix

\section{Proper maps} 
We now provide a characterization of proper maps used in the
main body of the paper.
Let $f : X \to Y$ be a continuous map between two Hausdorff
spaces. Recall that $f$ is called {\em proper} if $f^{-1}(K)$ is
compact for every compact subset $K \subset Y$.

\begin{lemma}[Generalizes {\cite[Prop 4.32]{Lee-Top}}]
\label{lemma.prop.closed}
Let $f : X \to Y$ be a continuous map between two Hausdorff spaces
with $Y$ locally compact. If $f$ is proper, then $f$ is closed.
\end{lemma}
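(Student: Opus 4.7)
The plan is to prove that $f$ is closed by showing that the complement of $f(C)$ is open in $Y$, for any closed subset $C \subseteq X$. The key idea is that properness lets us handle $C$ only one compact ``slice'' at a time, and local compactness of $Y$ is exactly what makes this localization legitimate.

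First I would fix a closed subset $C \subseteq X$ and pick an arbitrary point $y \in Y \setminus f(C)$; the goal is to produce an open neighborhood of $y$ in $Y$ disjoint from $f(C)$. Since $Y$ is locally compact Hausdorff, I can choose a compact neighborhood $K$ of $y$, that is, a compact set $K$ whose interior $V := \mathrm{int}(K)$ contains $y$.

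Next I would invoke properness: $f^{-1}(K)$ is compact, so its closed subset $f^{-1}(K) \cap C$ is also compact, and therefore its image $A := f\bigl(f^{-1}(K) \cap C\bigr)$ is compact in $Y$. Because $Y$ is Hausdorff, $A$ is closed in $Y$. The small observation that makes everything work is the identity
\begin{equation*}
   A \seq f\bigl(f^{-1}(K) \cap C\bigr) \seq f(C) \cap K \, ,
\end{equation*}
which is straightforward: the inclusion $\subseteq$ is immediate, and for the reverse, any $y' = f(x) \in f(C) \cap K$ forces $x \in f^{-1}(K) \cap C$. Consequently $V \setminus A$ is open in $Y$, contains $y$ (since $y \in V$ and $y \notin f(C) \supseteq A$), and is disjoint from $f(C) \cap K \supseteq f(C) \cap V$, hence disjoint from $f(C) \cap V$; combined with $V \subseteq K$ this means $(V \setminus A) \cap f(C) = \emptyset$. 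This produces the desired open neighborhood.

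There is no real obstacle here; the only step that requires attention is the verification $A = f(C) \cap K$, which is what turns the ``global'' statement about $f(C)$ into the ``local'' statement controlled by properness. The use of Hausdorffness of $Y$ (to conclude that the compact set $A$ is closed) and of local compactness of $Y$ (to produce the compact neighborhood $K$ in the first place) are both essential: without them the argument collapses.
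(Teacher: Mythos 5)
Your proof is correct. The paper deliberately omits a proof for this lemma, referring instead to Lee's Proposition 4.32 and noting only that the local compactness hypothesis on $X$ appearing there is unnecessary; your argument (localize at $y$ via a compact neighborhood $K$, observe $f\bigl(f^{-1}(K)\cap C\bigr)=f(C)\cap K$ is compact hence closed, and excise it from $\mathrm{int}(K)$) is exactly the standard argument that source supplies, and it indeed nowhere uses local compactness of $X$, so it matches what the paper intends.
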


In \cite[Prop 4.32]{Lee-Top} the lemma is stated with the additional
requirement that $X$ be  locally compact. However in the proof the
locally compactness of~$X$ is not needed. We omit the proof
since we will apply the lemma only when $X$ is locally compact.

\begin{corollary}\label{cor.prop.homeo}
Let $f : X \to Y$ be a continuous injective map between two Hausdorff
spaces with $Y$ locally compact. If $f$ is proper, then $f$ is a
homeomorphism onto its image.
\end{corollary}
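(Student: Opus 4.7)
The plan is to deduce the corollary essentially immediately from Lemma \ref{lemma.prop.closed}, since the assumptions of that lemma are already satisfied: $f : X \to Y$ is continuous and proper, with $Y$ a locally compact Hausdorff space. The only additional hypothesis here is the injectivity of $f$, and the task is to upgrade the conclusion ``$f$ is closed'' to ``$f$ is a homeomorphism onto its image.''

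First I would observe that, because $f$ is continuous and injective, the induced map $\tilde f : X \to f(X)$ is a continuous bijection, where $f(X)$ carries the subspace topology inherited from $Y$. To conclude that $\tilde f$ is a homeomorphism, it suffices to show that $\tilde f$ is a closed map, i.e.\ that $\tilde f(C)$ is closed in $f(X)$ for every closed subset $C \subseteq X$.

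Next I would invoke Lemma \ref{lemma.prop.closed}: since $f : X \to Y$ is proper and $Y$ is locally compact Hausdorff, $f$ sends closed subsets of $X$ to closed subsets of $Y$. For any closed $C \subseteq X$, the set $f(C)$ is therefore closed in $Y$, and hence $\tilde f(C) = f(C) \cap f(X) = f(C)$ is closed in $f(X)$ in the subspace topology. Thus $\tilde f$ is a closed continuous bijection, which is equivalent to $\tilde f^{-1}$ being continuous, so $f$ is a homeomorphism onto its image.

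There is no real obstacle in this argument; essentially all of the work has already been carried out in Lemma \ref{lemma.prop.closed}, and the corollary is a one-line consequence combining injectivity with the closed-map property. The only minor point to keep in mind is the standard fact that a continuous bijection is a homeomorphism if and only if it is closed (or equivalently open), which is what lets us translate ``closed map into $Y$'' into ``homeomorphism onto the subspace $f(X)$.''
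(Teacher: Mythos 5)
Your proof is correct and follows the same route as the paper's: apply Lemma \ref{lemma.prop.closed} to get that $f$ is closed, then observe that a continuous closed bijection onto $f(X)$ (with the subspace topology) is a homeomorphism. Your version is slightly more detailed in spelling out why closedness in $Y$ implies closedness in the subspace $f(X)$, but the argument is the same.
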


\begin{proof}
The map $f:X\to f(X)$ is bijective continuous and closed and thus a
homeomorphism.
\end{proof}

We shall say that $f$ is {\em locally proper} if, for every $y \in Y$,
there exists an open neighborhood $V_y$ of $y$ in $Y$ such that the
map $f^{-1}(V_y) \to V_y$ induced by $f$ is proper.

\begin{lemma}\label{lemma.loc.prop}
Let $f : X \to Y$ be a continuous map between two Hausdorff spaces
with $Y$ locally compact. Then $f$ is proper if, and only if, it is
locally proper.
\end{lemma}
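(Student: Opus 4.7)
The plan is to prove the two implications separately, with the forward direction being essentially formal and the backward direction requiring a standard compactness argument that exploits the local compactness of $Y$.

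For the forward direction, I would argue as follows. Suppose $f$ is proper, and pick any $y \in Y$. Local compactness of $Y$ (although this is not even needed here) lets us choose any open neighborhood $V_y$ of $y$; I claim the restriction $g : f^{-1}(V_y) \to V_y$ of $f$ is proper. Indeed, if $K \subset V_y$ is compact, then $K$ is also compact in $Y$, since $V_y$ carries the subspace topology and compactness is intrinsic. Hence $g^{-1}(K) = f^{-1}(K) \cap f^{-1}(V_y) = f^{-1}(K)$ is compact in $X$, and being a subset of $f^{-1}(V_y)$, it is compact there as well. This handles ``proper implies locally proper'' with no fuss.

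For the converse, suppose $f$ is locally proper and let $K \subset Y$ be compact; we must show $f^{-1}(K)$ is compact. For each $y \in K$, pick an open neighborhood $V_y$ of $y$ on which $f$ is proper in the local sense. Using that $Y$ is locally compact and Hausdorff, shrink if necessary to obtain an open neighborhood $U_y$ of $y$ whose closure $\overline{U_y}$ is compact and contained in $V_y$. By compactness of $K$, finitely many of the $U_y$ cover $K$, say $U_{y_1}, \ldots, U_{y_n}$. For each index $i$, the set $K \cap \overline{U_{y_i}}$ is a closed subset of the compact set $\overline{U_{y_i}}$, hence compact; since it lies inside $V_{y_i}$, local properness gives that $f^{-1}(K \cap \overline{U_{y_i}})$ is compact. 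Finally, because $K \subset \bigcup_i U_{y_i} \subset \bigcup_i \overline{U_{y_i}}$, we have $f^{-1}(K) = \bigcup_{i=1}^n f^{-1}(K \cap \overline{U_{y_i}})$, a finite union of compact sets, hence compact.

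The main (minor) obstacle is just the care needed in the backward direction to arrange that each piece of $K$ sits inside a set on which we have a local properness hypothesis; this is exactly what local compactness of $Y$ provides via the ``shrinking to compact closure'' step. No further subtleties are expected.
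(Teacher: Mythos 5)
Your proof is correct and takes essentially the same approach as the paper's: the forward direction is immediate, and for the converse you shrink each $V_y$ to an open $U_y$ with compact closure $\overline{U_y}\subset V_y$, take a finite subcover of $K$, and write $f^{-1}(K)$ as the finite union of the compact sets $f^{-1}(K\cap\overline{U_{y_i}})$.
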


\begin{proof}
Clearly, every proper map is locally proper, by definition. Let us
assume that $f$ is locally proper and let $K \subset Y$ be a compact
subset. For any $y\in K$ we choose the open neighborhood $V_y$ as
in the definition of a locally proper map. As $Y$ is locally
compact, there is an open neighborhood $W_y$ of $y$ in $V_y$ such that
its closure $\overline{W}_y$ in $Y$ is a compact subset of $V_y$.
The local properness of $f$ together with the choice of $V_y$ implies that
$f^{-1}(\overline{W}_y\cap K)$ is compact. By the compactness of $K$
we can choose $y_1,\ldots,y_N$ such that $K$ is covered by
$\left(W_{y_j}\right)_{1\leq j\leq N}$. Then $K=\bigcup_{j=1}^N
\left(\overline{W}_{y_j}\cap K\right)$. Hence
\begin{eqnarray*}
  f^{-1}(K)&=& \bigcup_{j=1}^N f^{-1}(\overline{W}_{y_j}\cap K)
\end{eqnarray*}
is also compact. This completes the proof.
\end{proof}

\section{More on submanifolds of manifolds with corners}
\label{sec.appendix.submanifold}

We discuss here a few other notions of submanifolds and the
relation to our concept of \wsbmanifold. While this is not needed for
the proof of the main result,  we hope the interested reader will 
find this material useful.

\subsection{Submanifolds in Melrose's sense}\label{ssec.appendix.submanifold.1}
We begin with Melrose's concept of a submanifold in a manifold with corners, 
  following \cite[Definition~1.7.3]{MelroseBook}.

  
\begin{definition}
\label{def.submanifold-gen}
A subset $S$ of a manifold with corners $M$ 
of dimension $n$ is a \emph{submanifold (in the sense of manifolds with corners)}
if, for every $p \in S$, there exists $0 \le k \le n$ and a
(corner) chart $\phi:U\to \Omega\subset 
\RR_k^n := [0,\infty)^k \times \RR^{n-k}
$, numbers $n'\leq n$
and $k'\leq n'$, and a matrix $G\in \GL(n,\RR)$ such that
\begin{enumerate}
\item $p \in U$ 
\item $G \left(\RR^{n'}_{k'}\times \{0\}\right) \subset \RR^n_k\,.$ 
\item $\phi(S \cap U)\seq G \left(\RR^{n'}_{k'}\times \{0\}\right)
  \cap \Omega \,.$
\end{enumerate}
\end{definition}

Obviously, every submanifold  in the sense of manifolds with corners is 
a \wsbmanifold{}, see Definition~\ref{def.weak-submanifold}. In \cite{MelroseBook} 
all submanifolds are submanifolds   in the sense of Definition~\ref{def.weak-submanifold}, 
see \eg Lemma~\ref{lemma.krit.weak.submanifold}.   In Remarks~\ref{rem.sbmdf-corner.mfd-corners} 
\eqref{rem.sbmdf-corner.mfd-corners.i} we explained that any \wsbmanifold{} 
of a manifold with corners inherits an atlas, and thus this also applies to 
submanifolds in the above sense. However, it can be shown \cite{koenig.master}
that many submanifolds in our article are not submanifolds in the sense of manifolds 
with corners, but only \wsbmanifolds{}, as defined in Definition~\ref{def.weak-submanifold}. 
In Example~\ref{ex.weak.submanifold} we provide an example of a \wsbmanifold{} of 
$\RR^2_1$ that is not one in the sense of Definition~\ref{def.submanifold-gen}.

\begin{example}[Diagonal]\label{example.diagonal}
Let $N$ be a manifold with corners. Then $M:=N\times N$ is also a
manifold with corners. Consider the diagonal $\Delta_N:=\{(p,p)\in
M\mid p\in N\}$. Then $\Delta_N$ is a submanifold of~$M$ 
in the sense of manifolds with corners.
\end{example}

The following provides examples of \wsbmanifolds{} that are 
not \Msbmanifolds.

\begin{examples}\label{ex.weak.submanifold}\
\begin{enumerate}
\item\label{ex.weak.submanifold.i} The function $f : \RR^2_1 := [0, \infty) \times \RR \to \RR^2_1$,
$f(x,y):=(x+y^2,y)$, is an injective immersion. It is a homeomorphism onto 
its image $S:=f\bigl(\RR^2_1\bigr)$. However, it can be easily seen that $S$ 
is not a submanifold of $\RR^2_1$ in the sense of manifolds with corners. 
On the other hand $S$ is a submanifold of $\RR^2$ in the sense of manifolds with corners.
\item\label{ex.weak.submanifold.ii}The function $f(x):=(x,x^2)$ defines a injective immersion
$\RR^1_1\to \RR^2_2$. It is a homeomorphism onto its image $S:=f\bigl(\RR^1_1\bigr)$. However, 
$S$ is not a submanifold $\RR^2_2$ in the sense of manifolds with corners. On the other hand 
$S$ is a submanifold of $\RR^2_1$ and of $\RR^2$ in the sense of manifolds with corners.
\end{enumerate}
\end{examples}

In our article injective immersion which are a homeomorphism to its image, 
play an important role. Recall the following classical fact for manifolds 
$N$ and $M$ \emph{without boundary and without corners}:
\begin{equation*}
  (*) \begin{cases}
    \;\text{If $f:N\to M$ is an injective immersions, then $f(N)$ is a submanifold}\hfill\\
    
    \;\text{if, and only if, $f$ maps $N$ homeomorphically to $f(N)$.}\hfill
  \end{cases}
\end{equation*}
Examples~\ref{ex.weak.submanifold} show that $(*)$ does no longer hold if $M$ and 
$N$ are manifolds with corners and if we understand the word ``submanifold'' in the sense of Definition~\ref{def.submanifold-gen}.
On the other, we proved in Proposition~\ref{prop.smfd.crit} that $(*)$ 
holds for manifolds with corners, if we replace ``a submanifold'' by ``a weak submanifold.''

\subsection{Other classes of submanifolds}\label{ssec.appendix.submanifold.2}
For comparison and completness, we recall now the definitions of some further classes
of submanifolds. The reason the reader might be interested in these concepts is
that the concept of a \Msbmanifold{} 
seems to be too unspecific and the concept of a \psbmanifold\ (Definition~\ref{def.psubmanifold})
seems to be sometimes too restrictive. A first alternative is the
concept of a ``\wibsbmanifold{}'', where ``wib'' stands for a submanifold 
{\em without an interior boundary.} 

\begin{definition}
A submanifold $S\subset M$ is called a \emph{\wibsbmanifold{}} or a
\emph{submanifold without interior boundary} if it can be defined
locally in suitable charts as the kernel of a linear function. More
precisely: $S\subset M$ is a \emph{\wibsbmanifold{}} if, for every $x
\in S$, there exists a (corner) chart $\phi:U\to \Omega\subset
\RR^n_k$, and a linear subspace $L$ of $\RR^n$, such that
\begin{enumerate}
\item $x \in U$ and
\item $\phi(S \cap U)\seq L \cap \Omega \,.$
\end{enumerate}
\end{definition}

If $G\in \GL(n,\RR)$ is as in Definition~\ref{def.submanifold-gen}, 
then we necessarily have $L = G \left(\RR^{n'}\times \{0\}\right)$. 
If $x\in S\cap U$, then $n':=\dim (L)$ is the dimension of $S$ in $x$ defined above.
Obviously all \psbmanifolds{} are wib-manifolds, which can be easily
seen by defining the $L$ in the definition above as the linear
extension of $L_I$ in Definition~\ref{def.psubmanifold}.

\begin{remark}
In the above definition, we explicitly required $S$ to be a
submanifold. To justify this requirement, we will give an example of a
closed subset $S\subset M$ that is not a submanifold, but fulfills
all other requirements of the definition of a \wibsbmanifold{}.
Indeed, let
\begin{equation*}
   K \ede \{(x_1,x_2,x_3)\in \RR^3\mid x_1\geq 0,\; x_2\geq
   0,\;x_1\leq x_3,\;x_2\leq x_3\}\,,
\end{equation*}
which is a cone over a square. The map $f:\RR^3\to
\RR^4$,\ $f(x_1,x_2,x_3)= (x_1,x_2,x_3-x_1,x_3-x_2)$ has the property
$f^{-1}(\RR^4_4)=K$. Then for $\phi=\id$, $x=0$, and $L:=f(\RR^3)$ all
requirements of the definition are satisfied, but $S:=f(K)$ is not a
submanifold of $\RR^4_4$. It it were a submanifold, then its dimension
would have to be $3$, and then any boundary point of $S$ is in at most
$3$ closed boundary hyperfaces. But $0\in S$ is in $4$ closed boundary
hyperfaces of $S$.
\end{remark}

\begin{remark}\label{rem.further.submfd}
Note that Melrose also introduces the notions d-submanifold
\cite[Def.~1.7.4]{MelroseBook} and \bsbmanifold{}
\cite[Def.~1.12.9]{MelroseBook}, whose definitions will not be
recalled here. They satisfy
\begin{eqnarray*}
S \text{ is a \psbmanifold} \;\Longrightarrow\;S \text{ is a
  d-submanifold} &\Longrightarrow& S \text{ is a \bsbmanifold{}}
\end{eqnarray*}
\begin{eqnarray*}
 &\Longrightarrow& S \text{ is a submanifold}\;\Longrightarrow\;S
  \text{ is a \wsbmanifold}\,.
\end{eqnarray*}
However there are wib-manifolds that are not \bsbmanifolds{},
such as Melrose's example of the submanifold $\{x_3=x_1+x_2\}\in
\RR^3_3$. There are d-manifolds that are no wib-manifolds,
for instance, $\RR^1_1=[0,\infty)\subset \RR$ or any surface with boundary in
  $\RR^3$. However all \psbmanifolds{} introduced below are
  \dsbmanifolds{} and \wibsbmanifolds{}.
Melrose shows that the diagonal $\Delta_N$ is a \bsbmanifold{} of
$N\times N$, but in general not a \dsbmanifold{}. It follows that
$\Delta_N$ is not a \psbmanifold{}.
\end{remark}

\begin{remark}\label{remark.tame}
Let us remark that the concept of a \emph{tame} submanifold
considered in \cite[Sec.~2.3]{ammann.ionescu.nistor:06} is a concept of a
submanifold in an essentially different sense, it is actually a more
restrictive notion of submanifold than the ones encountered in this
paper. All notions of submanifolds discussed so far
involve properties that may or may not hold for a subset $N$ of a
manifold with corners $M$. In contrast to this, tame submanifolds in
\cite[Sec.~2.3]{ammann.ionescu.nistor:06} are submanifolds of a \emph{Lie manifold}
$(M,A)$, where $M$ is a manifold with corners and $A$ is a Lie
algebroid on~$M$ with some compatibility conditions. Whether
a subset~$N$ of~$M$ is a tame submanifold of $(M,A)$ or not depends
also on the Lie algebroid~$A$. In any case, a tame submanifold
  will have a tubular neighborhood in the strongest sense.
Similar remarks apply to the $A(\mathcal{G})$-tame submanifolds
considered in \cite{nistorDesing}.
\end{remark}


\begin{thebibliography}{10}

\bibitem{ACN}
B.~Ammann, C.~Carvalho, and V.~Nistor.
\newblock Regularity for eigenfunctions of {S}chr\"odinger operators.
\newblock {\em Lett. Math. Phys.}, 101(1):49--84, 2012.

\bibitem{ammann.ionescu.nistor:06}
B.~Ammann, A.~D. Ionescu, and V.~Nistor.
\newblock Sobolev spaces on {L}ie manifolds and regularity for polyhedral
  domains.
\newblock {\em Doc. Math.}, 11:161--206, 2006.

\bibitem{aln1}
B.~Ammann, R.~Lauter, and V.~Nistor.
\newblock On the geometry of {R}iemannian manifolds with a {L}ie structure at
  infinity.
\newblock {\em Int. J. Math. Math. Sci.}, 2004(1-4):161--193, 2004.

\bibitem{aln2}
B.~Ammann, R.~Lauter, and V.~Nistor.
\newblock Pseudodifferential operators on manifolds with a {L}ie structure at
  infinity.
\newblock {\em Ann. of Math. (2)}, 165(3):717--747, 2007.

\bibitem{AMN2}
B.~Ammann, J.~Mougel, and V.~Nistor.
\newblock A regularity result for the bound states of {$N$}-body
  {S}chr\"odinger operators: Blow-ups and {L}ie manifolds.
\newblock 2020.
\newblock \Arxiv{2012.13902}.

\bibitem{GeorgescuBookNew}
W.~Amrein, A.~Boutet~de Monvel, and V.~Georgescu.
\newblock {\em {$C_0$}-groups, commutator methods and spectral theory of
  {$N$}-body {H}amiltonians}.
\newblock Modern Birkh\"{a}user Classics. Birkh\"{a}user/Springer, Basel, 1996.
\newblock 2013, reprint of the 1996 edition.

\bibitem{Breteaux}
V.~Bach, S.~Breteaux, S.~Petrat, P.~Pickl, and T.~Tzaneteas.
\newblock Kinetic energy estimates for the accuracy of the time-dependent
  {H}artree-{F}ock approximation with {C}oulomb interaction.
\newblock {\em J. Math. Pures Appl. (9)}, 105(1):1--30, 2016.

\bibitem{BCLN2}
A.~Baldare, R.~C\^{o}me, M.~Lesch, and V.~Nistor.
\newblock Fredholm conditions and index for restrictions of invariant
  pseudodifferential to isotypical components.
\newblock Max Planck Preprint and \arxiv{2004.01543}, 2020, to appear in
  M\"unster Math. J.

\bibitem{BCLN1}
A.~Baldare, R.~C\^{o}me, M.~Lesch, and V.~Nistor.
\newblock Fredholm conditions for invariant operators: finite abelian groups
  and boundary value problems.
\newblock {\em J. Operator Theory}, 85(1):229--256, 2021.

\bibitem{Baldare3}
A.~Baldare, R.~C\^{o}me, and V.~Nistor.
\newblock Fredholm conditions for operators invariant with respect to compact
  lie group actions.
\newblock ArXiv preprint 2012.03944, December, 2020, to appear in CR Acad. Sci.
  Paris.

\bibitem{BaerCMP15}
C.~B\"{a}r.
\newblock Green-hyperbolic operators on globally hyperbolic spacetimes.
\newblock {\em Comm. Math. Phys.}, 333(3):1585--1615, 2015.

\bibitem{BaerGinoux}
C.~B\"{a}r and N.~Ginoux.
\newblock Classical and quantum fields on {L}orentzian manifolds.
\newblock In {\em Global differential geometry}, volume~17 of {\em Springer
  Proc. Math.}, pages 359--400. Springer, Heidelberg, 2012.

\bibitem{Dappiaggi1}
M.~Benini, C.~Dappiaggi, and Th.-P. Hack.
\newblock Quantum field theory on curved backgrounds---a primer.
\newblock {\em Internat. J. Modern Phys. A}, 28(17):1330023, 49, 2013.

\bibitem{BMBG}
A.~Boutet~de Monvel-Berthier and V.~Georgescu.
\newblock Graded {$C^*$}-algebras and many-body perturbation theory. {I}. {T}he
  {$N$}-body problem.
\newblock {\em C. R. Acad. Sci. Paris S\'{e}r. I Math.}, 312(6):477--482, 1991.

\bibitem{BMBGeorgescu}
A.~Boutet~de Monvel-Berthier and V.~Georgescu.
\newblock Graded {$C^*$}-algebras and many-body perturbation theory. {II}.
  {T}he {M}ourre estimate.
\newblock Number 210, pages 6--7, 75--96. 1992.
\newblock M\'{e}thodes semi-classiques, Vol. 2 (Nantes, 1991).

\bibitem{CCQ}
C.~Carvalho, R.~C\^{o}me, and Y.~Qiao.
\newblock Gluing action groupoids: {F}redholm conditions and layer potentials.
\newblock {\em Rev. Roumaine Math. Pures Appl.}, 64(2-3):113--156, 2019.

\bibitem{CNQ}
C.~Carvalho, V.~Nistor, and Y.~Qiao.
\newblock Fredholm conditions on non-compact manifolds: theory and examples.
\newblock In {\em Operator theory, operator algebras, and matrix theory},
  volume 267 of {\em Oper. Theory Adv. Appl.}, pages 79--122.
  Birkh\"{a}user/Springer, Cham, 2018.

\bibitem{Chrusciel}
P.~Chru\'{s}ciel and M.~Herzlich.
\newblock The mass of asymptotically hyperbolic {R}iemannian manifolds.
\newblock {\em Pacific J. Math.}, 212(2):231--264, 2003.

\bibitem{ComeFredholm}
R.~C\^{o}me.
\newblock The {F}redholm {P}roperty for {G}roupoids is a {L}ocal {P}roperty.
\newblock {\em Results Math.}, 74(4):Paper No. 160, 2019.

\bibitem{DaGe04}
M.~Damak and V.~Georgescu.
\newblock Self-adjoint operators affiliated to {$C^*$}-algebras.
\newblock {\em Rev. Math. Phys.}, 16(2):257--280, 2004.

\bibitem{Dappiaggi2}
C.~Dappiaggi, F.~Finster, S.~Murro, and E.~Radici.
\newblock The fermionic signature operator in de {S}itter spacetime.
\newblock {\em J. Math. Anal. Appl.}, 485(2):123808, 29, 2020.

\bibitem{DLR}
C.~Debord, J.-M. Lescure, and F.~Rochon.
\newblock Pseudodifferential operators on manifolds with fibred corners.
\newblock {\em Ann. Inst. Fourier (Grenoble)}, 65(4):1799--1880, 2015.

\bibitem{DerezinskiAnnals}
J.~Derezi\'{n}ski.
\newblock Asymptotic completeness of long-range {$N$}-body quantum systems.
\newblock {\em Ann. of Math. (2)}, 138(2):427--476, 1993.

\bibitem{DerezinskiFaupin}
J.~Derezi\'{n}ski, J.~Faupin, Q.~Nguyen, and S.~Richard.
\newblock On radial {S}chr\"{o}dinger operators with a {C}oulomb potential:
  general boundary conditions.
\newblock {\em Adv. Oper. Theory}, 5(3):1132--1192, 2020.

\bibitem{DerGer2}
J.~Derezi\'{n}ski and C.~G\'{e}rard.
\newblock {\em Scattering theory of classical and quantum {$N$}-particle
  systems}.
\newblock Texts and Monographs in Physics. Springer-Verlag, Berlin, 1997.

\bibitem{DerGer1}
J.~Derezi\'{n}ski and C.~G\'{e}rard.
\newblock {\em Mathematics of quantization and quantum fields}.
\newblock Cambridge Monographs on Mathematical Physics. Cambridge University
  Press, Cambridge, 2013.

\bibitem{DerezinskiRichard}
J.~Derezi\'{n}ski and S.~Richard.
\newblock On {S}chr\"{o}dinger operators with inverse square potentials on the
  half-line.
\newblock {\em Ann. Henri Poincar\'{e}}, 18(3):869--928, 2017.

\bibitem{DerezinskiExact}
J.~Derezi\'{n}ski and M.~Wrochna.
\newblock Exactly solvable {S}chr\"{o}dinger operators.
\newblock {\em Ann. Henri Poincar\'{e}}, 12(2):397--418, 2011.

\bibitem{Dixmier}
J.~Dixmier.
\newblock {\em {$C\sp*$}-algebras}.
\newblock North-Holland Publishing Co., Amsterdam-New York-Oxford, 1977.
\newblock Translated from the French by Francis Jellett, North-Holland
  Mathematical Library, Vol. 15.

\bibitem{Flad}
H.-J. Flad, G.~Harutyunyan, R.~Schneider, and B.-W. Schulze.
\newblock Explicit {G}reen operators for quantum mechanical {H}amiltonians.
  {I}. {T}he hydrogen atom.
\newblock {\em Manuscripta Math.}, 135(3-4):497--519, 2011.

\bibitem{FultonMacPh-Compact}
W.~Fulton and R.~MacPherson.
\newblock A compactification of configuration spaces.
\newblock {\em Ann. of Math. (2)}, 139(1):183--225, 1994.

\bibitem{Georgescu2018}
V.~Georgescu.
\newblock On the essential spectrum of elliptic differential operators.
\newblock {\em J. Math. Anal. Appl.}, 468(2):839--864, 2018.

\bibitem{GGH15}
V.~Georgescu, C.~G\'{e}rard, and D.~H\"{a}fner.
\newblock Resolvent and propagation estimates for {K}lein-{G}ordon equations
  with non-positive energy.
\newblock {\em J. Spectr. Theory}, 5(1):113--192, 2015.

\bibitem{GeIf06}
V.~Georgescu and A.~Iftimovici.
\newblock Localizations at infinity and essential spectrum of quantum
  {H}amiltonians. {I}. {G}eneral theory.
\newblock {\em Rev. Math. Phys.}, 18(4):417--483, 2006.

\bibitem{GN}
V.~Georgescu and V.~Nistor.
\newblock On the essential spectrum of {$N$}-body {H}amiltonians with
  asymptotically homogeneous interactions.
\newblock {\em J. Operator Theory}, 77(2):333--376, 2017.

\bibitem{GerardSurvey}
C.~G\'{e}rard.
\newblock An introduction to quantum field theory on curved spacetimes.
\newblock In {\em Asymptotic analysis in general relativity}, volume 443 of
  {\em London Math. Soc. Lecture Note Ser.}, pages 171--218. Cambridge Univ.
  Press, Cambridge, 2018.

\bibitem{GerardStoskopf}
C.~G\'erard and T.~Stoskopf.
\newblock Hadamard states for quantized {D}irac fields on {L}orentzian
  manifolds of bounded geometry.
\newblock \Arxiv{2108.11630}.

\bibitem{Griebel1}
M.~Griebel and J.~Hamaekers.
\newblock Sparse grids for the {S}chr\"{o}dinger equation.
\newblock {\em M2AN Math. Model. Numer. Anal.}, 41(2):215--247, 2007.

\bibitem{Hormander3}
L.~H\"{o}rmander.
\newblock {\em The analysis of linear partial differential operators. {III}:
  Pseudo-differential operators}.
\newblock Classics in Mathematics. Springer, Berlin, 1994.

\bibitem{HLiN}
E.~Hunsicker, Hengguang Li, V.~Nistor, and V.~Uski.
\newblock Analysis of {S}chr\"{o}dinger operators with inverse square
  potentials {II}: {FEM} and approximation of eigenfunctions in the periodic
  case.
\newblock {\em Numer. Methods Partial Differential Equations},
  30(4):1130--1151, 2014.

\bibitem{HunzikerSigal}
W.~Hunziker and I.~M. Sigal.
\newblock The quantum {$N$}-body problem.
\newblock {\em J. Math. Phys.}, 41(6):3448--3510, 2000.

\bibitem{JeckoBO}
T.~Jecko.
\newblock On the mathematical treatment of the {B}orn-{O}ppenheimer
  approximation.
\newblock {\em J. Math. Phys.}, 55(5):053504, 26, 2014.

\bibitem{JoyceCorners}
D.~Joyce.
\newblock A generalization of manifolds with corners.
\newblock {\em Adv. Math.}, 299:760--862, 2016.

\bibitem{Schrohe}
W.~Junker and E.~Schrohe.
\newblock Adiabatic vacuum states on general spacetime manifolds: definition,
  construction, and physical properties.
\newblock {\em Ann. Henri Poincar\'{e}}, 3(6):1113--1181, 2002.

\bibitem{KRS}
S.~Klainerman, I.~Rodnianski, and J.~Szeftel.
\newblock The bounded {$L^2$} curvature conjecture.
\newblock {\em Invent. Math.}, 202(1):91--216, 2015.

\bibitem{Kondratiev67}
V.~A. Kondrat{\cprime}ev.
\newblock Boundary value problems for elliptic equations in domains with
  conical or angular points.
\newblock {\em Transl. Moscow Math. Soc.}, 16:227--313, 1967.

\bibitem{kottke-gblow}
C.~Kottke.
\newblock Blow-up in {M}anifolds with {G}eneralized {C}orners.
\newblock {\em Int. Math. Res. Not.}, (8):2375--2415, 2018.

\bibitem{Kottke-Lin}
C.~Kottke.
\newblock Functorial compactification of linear spaces.
\newblock {\em Proc. Amer. Math. Soc.}, 147(9):4067--4081, 2019.

\bibitem{KottkeMelrose}
C.~Kottke and R.~Melrose.
\newblock Generalized blow-up of corners and fiber products.
\newblock {\em Trans. Amer. Math. Soc.}, 367(1):651--705, 2015.

\bibitem{koenig.master}
A.~König.
\newblock Master thesis.
\newblock University of Regensburg, DOI: 10.5283/epub.47792, available at
  \url{http://doi.org/10.5283/epub.47792}, 2021.

\bibitem{LMN2}
R.~Lauter, B.~Monthubert, and V.~Nistor.
\newblock Spectral invariance for certain algebras of pseudodifferential
  operators.
\newblock {\em J. Inst. Math. Jussieu}, 4(3):405--442, 2005.

\bibitem{MelroseVSSchulze}
R.~Lauter and J.~Seiler.
\newblock Pseudodifferential analysis on manifolds with boundary---a comparison
  of {\Melrb}-calculus and cone algebra.
\newblock In {\em Approaches to singular analysis ({B}erlin, 1999)}, volume 125
  of {\em Oper. Theory Adv. Appl.}, pages 131--166. Birkh\"{a}user, Basel,
  2001.

\bibitem{Lee-Top}
J.~Lee.
\newblock {\em Introduction to topological manifolds}, volume 202 of {\em
  Graduate Texts in Mathematics}.
\newblock Springer-Verlag, New York, first edition, 2000.

\bibitem{LiebSeiringerbook}
E.~Lieb and R.~Seiringer.
\newblock {\em The stability of matter in quantum mechanics}.
\newblock Cambridge University Press, Cambridge, 2010.

\bibitem{Mageira2}
A.~Mageira.
\newblock Some examples of graded {$C^*$}-algebras.
\newblock {\em Math. Phys. Anal. Geom.}, 11(3-4):381--398, 2008.

\bibitem{MelroseBook}
R.~Melrose.
\newblock {D}ifferential analysis on manifolds with coners.
\newblock {B}ook in preparation. {M}anuscript available at
  \href{http://math.mit.edu/~rbm/book.html}{math.mit.edu/$\sim$rbm/book.html}.

\bibitem{Melrose92}
R.~Melrose.
\newblock Calculus of conormal distributions on manifolds with corners.
\newblock {\em Int. Math. Res. Not.}, (3):51--61, 1992.

\bibitem{MelroseEucSpace}
R.~Melrose.
\newblock Spectral and scattering theory for the {L}aplacian on asymptotically
  {E}uclidian spaces.
\newblock In {\em Spectral and scattering theory ({S}anda, 1992)}, volume 161
  of {\em Lecture Notes in Pure and Appl. Math.}, pages 85--130. Dekker, New
  York, 1994.

\bibitem{MelroseSinger}
R.~Melrose and M.~Singer.
\newblock Scattering configuration spaces.
\newblock \Arxiv{0808.2022}.

\bibitem{MougelH}
J.~Mougel.
\newblock Essential spectrum, quasi-orbits and compactifications: application
  to the {H}eisenberg group.
\newblock {\em Rev. Roumaine Math. Pures Appl.}, 2019.

\bibitem{MNP}
J.~Mougel, V.~Nistor, and N.~Prudhon.
\newblock A refined {HVZ}-theorem for asymptotically homogeneous interactions
  and finitely many collision planes.
\newblock {\em Rev. Roumaine Math. Pures Appl.}, 62(1):287--308, 2017.

\bibitem{MougelPrudhon}
J.~Mougel and N.~Prudhon.
\newblock Exhaustive families of representations of {$C^\ast$}-algebras
  associated with {$N$}-body {H}amiltonians with asymptotically homogeneous
  interactions.
\newblock {\em C. R. Math. Acad. Sci. Paris}, 357(2):200--204, 2019.

\bibitem{MaNi}
M.~M\u{a}ntoiu and V.~Nistor.
\newblock Spectral theory in a twisted groupoid setting: spectral
  decompositions, localization and {F}redholmness.
\newblock {\em M\"{u}nster J. Math.}, 13(1):145--196, 2020.

\bibitem{MPR1}
M.~M\u{a}ntoiu, R.~Purice, and S.~Richard.
\newblock Twisted crossed products and magnetic pseudodifferential operators.
\newblock In {\em Advances in operator algebras and mathematical physics},
  volume~5 of {\em Theta Ser. Adv. Math.}, pages 137--172. Theta, Bucharest,
  2005.

\bibitem{MantoiuJOT}
M.~M\v{a}ntoiu.
\newblock Essential spectrum and {F}redholm properties for operators on locally
  compact groups.
\newblock {\em J. Operator Theory}, 77(2):481--501, 2017.

\bibitem{nistorDesing}
V.~Nistor.
\newblock Desingularization of {L}ie groupoids and pseudodifferential operators
  on singular spaces.
\newblock {\em Comm. Anal. Geom.}, 27(1):161--209, 2019.

\bibitem{Pedersen}
G.~Pedersen.
\newblock {\em {$C^{\ast} $}-algebras and their automorphism groups}, volume~14
  of {\em London Mathematical Society Monographs}.
\newblock Academic Press Inc., London, 1979.

\bibitem{ReedSimon4}
M.~Reed and B.~Simon.
\newblock {\em Methods of modern mathematical physics. {IV}. {A}nalysis of
  operators}.
\newblock Academic Press, New York, 1978.

\bibitem{Teschlbook}
G.~Teschl.
\newblock {\em Mathematical methods in quantum mechanics}, volume 157 of {\em
  Graduate Studies in Mathematics}.
\newblock American Mathematical Society, Providence, RI, second edition, 2014.
\newblock With applications to Schr\"{o}dinger operators.

\bibitem{VasyAsympt}
A.~Vasy.
\newblock Asymptotic behavior of generalized eigenfunctions in {$N$}-body
  scattering.
\newblock {\em J. Funct. Anal.}, 148(1):170--184, 1997.

\bibitem{VasyReg}
A.~Vasy.
\newblock Propagation of singularities in many-body scattering.
\newblock {\em Ann. Sci. \'Ecole Norm. Sup. (4)}, 34(3):313--402, 2001.

\bibitem{VasySurv}
A.~Vasy.
\newblock Geometry and analysis in many-body scattering.
\newblock In {\em Inside out: inverse problems and applications}, volume~47 of
  {\em Math. Sci. Res. Inst. Publ.}, pages 333--379. Cambridge Univ. Press,
  Cambridge, 2003.

\bibitem{WeylBook}
H.~Weyl.
\newblock {\em The theory of groups and quantum mechanics}.
\newblock Dover Publications, Inc., New York, 1950.
\newblock Translated from the second (revised) German edition by H. P.
  Robertson, Reprint of the 1931 English translation.

\bibitem{Yserentant}
H.~Yserentant.
\newblock {\em Regularity and approximability of electronic wave functions},
  volume 2000 of {\em Lecture Notes in Mathematics}.
\newblock Springer-Verlag, Berlin, 2010.

\end{thebibliography}

\def\cprime{$'$}

\end{document}